\let\newfloat\newfloat@ltx
\def\P{\mathbb{P}}
\def\HC{\mathcal{H}}
\def\LC{\mathcal{L}}
\tikzset{every picture/.style=remember picture}
\newcommand{\poly}{\operatorname{poly}}
\newcommand{\AC}{\mathcal{A}}
\newcommand{\DC}{\mathcal{D}}
\newcommand{\IC}{\mathcal{I}}
\newcommand{\MC}{\mathcal{M}}
\newcommand{\OC}{\mathcal{O}}
\newcommand{\PC}{\mathcal{P}}
\newcommand{\SC}{\mathcal{S}}
\newcommand{\Var}{{\rm Var}}
\renewcommand{\geq}{\geqslant}
\renewcommand{\leq}{\leqslant}
\renewcommand{\Re}{\text{Re}}
\newcommand{\rhot}{\widetilde{\rho}}
\renewcommand{\vec}[1]{\boldsymbol{#1}}  
\newcommand{\ot}{\otimes}
\newcommand{\bs}{\textsf{BS}}
\newcommand{\thv}{\vec{\theta}}
\newcommand{\alv}{\vec{\alpha} }
\def\be{\begin{equation}}
\def\ee{\end{equation}}
\def\bs{\begin{split}}
\def\e{\end{split}}
\def\ba{\begin{eqnarray}}
\def\bea{\begin{eqnarray}}
\def\tea{\end{eqnarray}}
\def\ea{\end{eqnarray}}
\def\eea{\end{eqnarray}}
\def\P{\mathbb{P}}
\def\P{\mathbb{P}}
\newtheorem{theorem}{Theorem}
\newtheorem{lemma}{Lemma}
\newtheorem{corollary}{Corollary}
\newtheorem{proposition}{Proposition}
\newtheorem*{proposition*}{Proposition}
\newtheorem{definition}{Definition}
\newcommand{\rw}{\boldsymbol{\Delta}_N} 
\newcommand{\povm}{\MC} 
\newcommand{\pov}{M} 
\newcommand{\povv}{e} 
\renewcommand{\oc}{\lambda} 
\renewcommand{\rhot}{\rho(\alv)} 
\newcommand{\rand}{\widehat{\ell}_{\rm fixed}}
\renewcommand{\part}[1]{%
  \refstepcounter{part}
  \addcontentsline{toc}{part}{#1}
  \mtcaddpart[part]{#1}
}
\renewcommand\onecolumngrid{
\do@columngrid{one}{\@ne}
\def\set@footnotewidth{\onecolumngrid}
\def\footnoterule{\kern-6pt\hrule width 1.5in\kern6pt}
}
\renewcommand\twocolumngrid{
        \def\footnoterule{
        \dimen@\skip\footins\divide\dimen@\thr@@
        \kern-\dimen@\hrule width.5in\kern\dimen@}
        \do@columngrid{mlt}{\tw@}
}
\begin{document}
\doparttoc 
\faketableofcontents 
\part{}

\title{Pitfalls when tackling the exponential concentration of parameterized quantum models}

\author{Reyhaneh Aghaei Saem}
\affiliation{Institute of Physics, École Polytechnique Fédérale de Lausanne (EPFL), Lausanne, Switzerland}
\affiliation{Centre for Quantum Science and Engineering, Ecole Polytechnique F\'{e}d\'{e}rale de Lausanne (EPFL), Lausanne, Switzerland}

\author{Behrang Tafreshi}
\affiliation{Institute of Physics, École Polytechnique Fédérale de Lausanne (EPFL), Lausanne, Switzerland}

\author{{Zo\"{e}} Holmes}
\affiliation{Institute of Physics, École Polytechnique Fédérale de Lausanne (EPFL), Lausanne, Switzerland}
\affiliation{Centre for Quantum Science and Engineering, Ecole Polytechnique F\'{e}d\'{e}rale de Lausanne (EPFL), Lausanne, Switzerland}

\author{Supanut Thanasilp} 
\email[Correspondence to: ]{supanut.thanasilp@gmail.com}
\affiliation{Institute of Physics, École Polytechnique Fédérale de Lausanne (EPFL), Lausanne, Switzerland}
\affiliation{Centre for Quantum Science and Engineering, Ecole Polytechnique F\'{e}d\'{e}rale de Lausanne (EPFL), Lausanne, Switzerland}
\affiliation{Chula Intelligent and Complex Systems Center of Excellence, Department of Physics, Faculty of Science, Chulalongkorn University, Bangkok, Thailand}

\date{\today}

\begin{abstract}
Identifying scalable circuit architectures remains a central challenge in variational quantum computing and quantum machine learning. Many approaches have been proposed to mitigate or avoid the barren plateau phenomenon or, more broadly, exponential concentration. However, due to the intricate interplay between quantum measurements and classical post-processing, we argue these techniques often fail to circumvent concentration effects in practice. Here, by analyzing concentration at the level of measurement outcome probabilities and leveraging tools from hypothesis testing, we develop a practical framework for diagnosing whether a parameterized quantum model is inhibited by exponential concentration. Applying this framework, we argue that several widely used methods (including quantum natural gradient, sample-based optimization, and certain neural-network-inspired initializations) do not overcome exponential concentration with finite measurement budgets, though they may still aid training in other ways.
\end{abstract}

\maketitle

\section{Introduction}

If you are reading this sentence you are no doubt already well aware that Variational Quantum Algorithms (VQAs) and Quantum Machine Learning (QML)~\cite{cerezo2020variationalreview,bharti2021noisy,abbas2023quantum,mcardle2020quantum} have attracted much attention in recent years.
Core to these models are parameterized quantum circuits, which take the form of trainable circuits in the case of quantum neural networks~\cite{mcclean2016theory,perez2020data,mitarai2018quantum} and also appear when encoding input data in the case of quantum kernel-based models~\cite{schuld2021supervised, havlivcek2019supervised, thanasilp2022exponential, kubler2021inductive,  huang2021power, gentinetta2022complexity,  gan2023unified}. On the one hand, these algorithms are highly versatile with the potential to solve a broad range of scientific problems. On the other hand, they are largely heuristic and, due to the unavailability of high-quality large-scale quantum hardware, their scalability is subject to debate~\cite{zimboras2025myths}. Key scalability challenges include poor local minima~\cite{anschuetz2022quantum,larocca2021theory}, classical simulability~\cite{schreiber2022classical, sweke2023potential,sahebi2025dequantization,rudolph2025pauli,angrisani2024classically,shin2024dequantising,goh2023lie,kerenidis2021classical,schuster2024polynomial,fontana2023classical} and Barren Plateaus (BPs)~\cite{larocca2024review,mcclean2018barren,fontana2023theadjoint,ragone2023unified,cerezo2023does} (or more generally exponential concentration~\cite{huang2021power, thanasilp2022exponential, kubler2021inductive, suzuki2023effect, xiong2023fundamental, xiong2025role,shaydulin2021importance}).

In the presence of BPs, the vast majority of the loss landscape becomes exponentially flat (in the number of qubits, $n$) and concentrates around a certain value~\cite{arrasmith2021equivalence}. Gaining reliable information about the loss values on these flat regions demands exponential resources~\cite{arrasmith2020effect}. Since in practice only polynomial measurement shots can be used (at least in the asymptotic limit), this renders the landscape effectively untrainable at most points in the landscape.

A central objective in the pursuit of scalable quantum models is to design circuit architectures and optimization strategies that are resilient to exponential concentration.
An increasing number of proposals aim to circumvent exponential concentration~\cite{larocca2022group,schatzki2022theoretical,cerezo2020cost,larocca2021diagnosing, letcher2023tight,basheer2022alternating, napp2022quantifying, zhang2023absence, pesah2020absence, monbroussou2023trainability,cherrat2023quantum, diaz2023showcasing,mele2024noise,deshpande2024dynamic,srimahajariyapong2025connecting, mhiri2025unifying,mele2022avoiding,puig2024variational, chang2024latent, wang2023trainability, park2023hamiltonian,park2024hardware, zhang2022escaping, tangpanitanon2020expressibility, srimahajariyapong2025connecting, shi2024avoiding,cao2024exploiting, grant2019initialization, friedrich2022avoiding,miao2024neural,rad2022surviving, Fa_lde_2023, kashif2023resqnets}. These include methods that explicitly claim to avoid or mitigate barren plateaus, such as specialized circuit architectures~\cite{larocca2022group,schatzki2022theoretical,cerezo2020cost,larocca2021diagnosing, letcher2023tight,basheer2022alternating, napp2022quantifying, zhang2023absence, pesah2020absence, monbroussou2023trainability,cherrat2023quantum, diaz2023showcasing,mele2024noise,deshpande2024dynamic,srimahajariyapong2025connecting}, alternative initialization schemes~\cite{mhiri2025unifying,mele2022avoiding,puig2024variational, chang2024latent, wang2023trainability, park2023hamiltonian,park2024hardware, zhang2022escaping, tangpanitanon2020expressibility, srimahajariyapong2025connecting, shi2024avoiding,cao2024exploiting, grant2019initialization}, and modified training strategies~\cite{friedrich2022avoiding,miao2024neural,rad2022surviving, Fa_lde_2023, kashif2023resqnets}. Other approaches, such as sample-based optimization~\cite{barkoutsos2020improving} or quantum natural gradient descent~\cite{stokes2020quantum}, have also been informally discussed as potential remedies~\cite{larocca2024review}.

Here we argue that, given the subtle interplay between quantum measurements and classical processing strategies, it is crucial to carefully evaluate whether proposed approaches truly mitigate barren plateaus in practice. A common diagnostic involves analyzing the scaling of the loss variance or, equivalently, the variance of the loss gradients. However, relying solely on variance scaling can be misleading. For instance, one might appear to suppress barren plateaus simply by multiplying the loss function by an exponentially large prefactor. Yet this superficial remedy clearly offers only the illusion of improvement. Consequently, while examining loss variances is currently the standard practice~\cite{mcclean2018barren, larocca2024review,marrero2020entanglement,sharma2020trainability,patti2020entanglement,wang2020noise,arrasmith2021equivalence,larocca2021diagnosing,holmes2021connecting, cerezo2020cost,khatri2019quantum,rudolph2023trainability,kieferova2021quantum,thanaslip2021subtleties,holmes2021barren,martin2022barren, fontana2023theadjoint,ragone2023unified, thanasilp2022exponential, letcher2023tight, anschuetz2024unified, xiong2023fundamental, crognaletti2024estimates, mao2023barren, xiong2025role, suzuki2023effect, cerezo2023does, arrasmith2020effect} to analyze exponential concentration, it can be insufficient to validate strategies for circumventing the issue.

\begin{figure}
\centerline{\includegraphics[width=\linewidth]{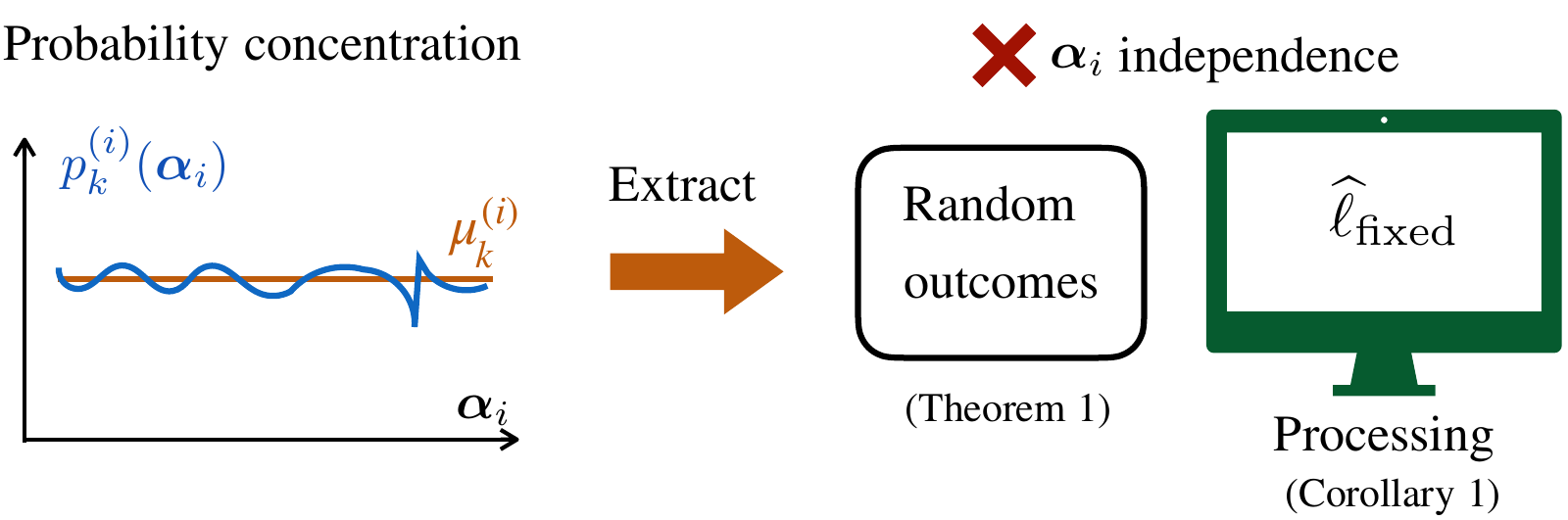}}
\caption{\textbf{Schematic representation of results.}  If measurement outcomes (at the level of probabilities) are exponentially concentrated then, with high probability, they contain no information about the trainable parameters and/or data inputs in the sense that they are indistinguishable from samples drawn from a variable-independent probability distribution (Theorem~\ref{thm:main-indistinguishable}). It follows that further post-processing these measurement outcomes results in parameter-independent and/or data-independent random quantities (Corollary~\ref{coro:no-post-process}).}
\label{fig:consequence}
\end{figure}

\begin{figure*}
\centerline{\includegraphics[width=0.95\linewidth]{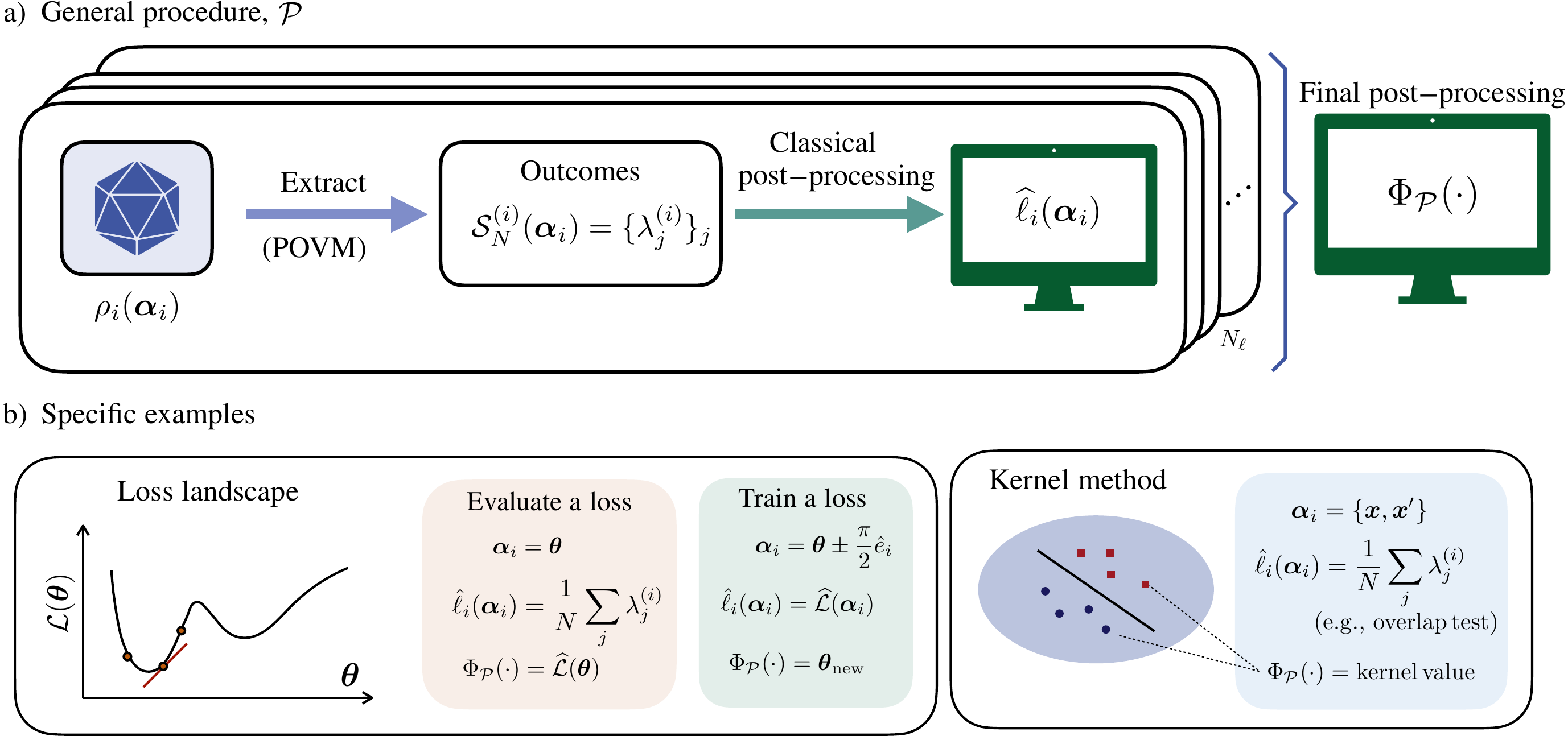}}
\caption{\textbf{Framework}. Panel a) illustrates the description of the general procedure (as described in Section~\ref{sec:procedure}). The procedure involves extracting information from variable-dependent quantum states through some quantum measurements. These measurement outcomes are then processed to estimate some variable-dependent quantities, which can in turn be all processed together. Panel~b) illustrates specific examples that fall within the scope of our general procedure, enabled by appropriately specifying the relevant quantities— i.e., the variables, measured observables, and post-processing functions. In particular, a loss evaluation is realized by interpreting a physical quantity $\widehat{\ell}(\alv_i)$ as an individual term in the overall loss, evaluated at a specific parameter setting $\alv_i = \thv$, while training the loss builds upon this by letting each physical quantity $\widehat{\ell}(\alv_i)$ represent the estimated full loss at some shifted parameter values, $\alv_i = \thv \pm \frac{\pi}{2} \hat{e}_i$, that allow one to compute the gradient. On the other hand, for non-variational models such as quantum kernel methods, the physical quantity of interest could be a fidelity kernel, where the variable is a data pair $\alv_i = \{\vec{x}, \vec{x}'\}$, which can be estimated using, for example, an overlap test.}
\label{fig:summary}
\end{figure*}

In this work, we provide systematic guidelines to address the following question:
\textit{Given a procedure that claims to circumvent exponential concentration, how can we determine whether it actually works in practice?}
To answer this, we begin with a simple yet important observation: many procedures used in variational quantum computing and QML involve processing a set of parameter-dependent quantities $\{ \ell_i(\boldsymbol{\alpha}_i) \}_i$ obtained from a quantum device. Each $ \ell_i(\boldsymbol{\alpha}_i)$ is estimated by performing measurements on a parameterized quantum state $\rho_i(\boldsymbol{\alpha}_i)$, followed by classical post-processing of the measurement outcomes. From this observation, one can see that it is more appropriate to study concentration at the level of \textit{outcome probabilities}, rather than at the level of expectation values (as is typically done in the literature~\cite{mcclean2018barren, larocca2024review,marrero2020entanglement,sharma2020trainability,patti2020entanglement,wang2020noise,arrasmith2021equivalence,larocca2021diagnosing,holmes2021connecting, cerezo2020cost,khatri2019quantum,rudolph2023trainability,kieferova2021quantum,thanaslip2021subtleties,holmes2021barren,martin2022barren, fontana2023theadjoint,ragone2023unified, thanasilp2022exponential, letcher2023tight, anschuetz2024unified, xiong2023fundamental, crognaletti2024estimates, mao2023barren, xiong2025role, suzuki2023effect, cerezo2023does, arrasmith2020effect}). This shift in perspective allows us to directly account for the effect of finite measurement shots in our analysis.

Specifically, we argue that standard parameterized quantum models implicitly use quantum measurements defined by a Positive Operator-Valued Measure (POVM) with at most polynomially many elements. We then show that if the outcome probabilities of such POVMs are exponentially concentrated, then (with high probability) the outcomes are statistically indistinguishable from samples drawn from a fixed, variable-independent distribution (Theorem~\ref{thm:main-indistinguishable}), as illustrated in Fig.~\ref{fig:consequence}. It follows that the resulting measurements carry no meaningful information about the underlying variables and so cannot be used for meaningful learning/training. Crucially, this statistical indistinguishability cannot be overcome through post-processing of the measurement outcomes (Corollary~\ref{coro:no-post-process}). This observation debunks the earlier foolish (if strawman) example of attempting to avoid BPs by multiplying expectation values by an exponentially large prefactor.

Based on these theoretical results, we provide a practical step-by-step guideline for identifying whether a given procedure can circumvent exponential concentration. This guideline is then used to argue that a number of parameterized quantum models (natural gradient descents~\cite{stokes2020quantum}, sample-based CVaR optimization~\cite{barkoutsos2020improving}, agnostic classical neural network-assisted initialization~\cite{friedrich2022avoiding}, a rescaled gradient approach~\cite{Teo_2023}, and others such as~\cite{rad2022surviving, Fa_lde_2023, kashif2023resqnets}) cannot circumvent these barriers (though may provide alternative training benefits). We further emphasize that the guidelines we provide apply not only to optimization strategies for losses, but also extend to various non-variational QML models such as quantum kernel methods~\cite{huang2021power, thanasilp2022exponential, kubler2021inductive, suzuki2023effect, xiong2023fundamental, xiong2025role,shaydulin2021importance} and quantum reservoir-based models~\cite{xiong2023fundamental, xiong2025role}.

Lastly, as a by-product of our main theoretical results, we contribute to the fundamental understanding of quantum landscape theory by proving that directly training on BPs with vanilla gradient descent, and practical measurement shot budgets, results in a random walk on the landscape (Corollary~\ref{coro:random-walk}). While this has been mentioned in passing in the literature (for example, see Ref.~\cite{arrasmith2020effect}), it had not been proven as far we are aware. Here, by taking the post-processing in Corollary~\ref{coro:no-post-process} to be a gradient calculation, we show that the estimated loss gradients at each training step are, with high probability, statistically indistinguishable from random variables that carry no information about the landscape. Consequently, the entire training trajectory resembles a random walk.

\section{Framework}\label{sec:procedure}

Any parameterized quantum model, whether part of a variational or non-variational algorithm, consists of two essential components:
\begin{enumerate}
 \item Extracting information from a quantum device via measurements, and
\item Classical post-processing of the resulting data.
\end{enumerate}
To formalize this structure, we introduce a general procedure $\mathcal{P}$ (sketched in Fig.~\ref{fig:summary}) that underlies a wide range of parameterized quantum models.
For variational quantum algorithms, which are adaptive in the sense that later measurements depend on earlier outputs, our procedure $\mathcal{P}$ captures a single iteration of the algorithm. In contrast, for non-adaptive models such as quantum kernel methods~\cite{schuld2021supervised, havlivcek2019supervised, thanasilp2022exponential, kubler2021inductive,  huang2021power, gentinetta2022complexity,  gan2023unified} or quantum reservoir computing~\cite{xiong2025role,xiong2023fundamental,fujii2017harnessing,nakajima2019boosting,mujal2021opportunities,ghosh2020reconstructing,hu2023tackling}, $\mathcal{P}$ captures the entire algorithm.

Specifically, we consider a procedure $\mathcal{P}$ that computes a set of variable-dependent physical quantities $\{ \ell_i(\boldsymbol{\alpha}_i) \}_{i=1}^{N_\ell}$ where $\ell_i(\boldsymbol{\alpha}_i)$ is real valued and depends on some variables $\alv_i$. We suppose that the number of relevant quantities $N_\ell$ scales at most polynomially with the system size i.e., $N_\ell \in \OC(\poly(n))$. We note here that each individual $\alv_i$ is kept general for now. As will be seen in the examples below, it could represent either some variable evaluated at some specific values (e.g., trainable parameters evaluated at some specific point on the loss landscape), or a new \textit{type} of variable (e.g., classical input data).

Each $\ell_i(\boldsymbol{\alpha}_i)$ is then estimated by performing a POVM measurement $\mathcal{M}^{(i)} = \{ M^{(i)}_k \}_k$ on a quantum state $\rho_i(\boldsymbol{\alpha}_i)$, where the $M^{(i)}_k$ are positive operators satisfying $\sum_k M^{(i)}_k = \mathbb{1}$.
Thus, each $\ell_i(\boldsymbol{\alpha}_i)$ is associated with a specific POVM $\mathcal{M}^{(i)}$. It follows that the total procedure is associated with a set of $i = 1, \dots , N_\ell$ different POVMs (i.e., a set of sets of POVM operators). 

After performing the POVM measurement $N$ times on the quantum state $\rho_i(\alv_i)$ (by repeatedly preparing the state for each measurement), we obtain a set of outcomes $\SC_N^{(i)}(\alv_i) = \{\oc_{j}^{(i)} \}_{j=1}^N$ where the $j^{\rm th}$ measurement outcome $\lambda^{(i)}_j$ takes the label $m_k^{(i)}$ (chosen from a set of labels $\{ m^{(i)}_k\}_{k=1}^{|\povm^{(i)}|} $
associated with POVM elements $\{ \pov^{(i)}_k \}_{k=1}^{|\povm^{(i)}|}$) with the probability 
\begin{align}
    p^{(i)}_k(\alv_i) = \Tr[ \rho_i(\alv_i) \pov^{(i)}_k] \;\;.
\end{align}
Then, by applying some post-processing map $\Phi_i(\cdot)$ on the outcomes, we obtain the statistical estimate of $\ell_i(\alv_i)$ as
\begin{align}
    \widehat{\ell}_i (\alv_i) = \Phi_i\big( \SC_N^{(i)}(\alv_i)\big) \;.
\end{align}
Eventually, with all $N_\ell$ estimates obtained, the procedure $\PC$ processes these with another map $\Phi_{\PC}\big( \{ \widehat{\ell}_i(\alv_i)\}_{i=1}^{N_\ell}\big)$.

By specifying appropriate components of the procedure $\PC$, this framework covers a wide range of schemes used in VQAs and QML. We provide a non-exhaustive list of examples:

\medskip

\textbf{Evaluating a standard linear loss.} Consider a loss of the form $\LC(\thv) = \Tr[H \rho(\thv)]$ with a parametrized state $\rho(\thv) = U(\thv)\rho_0 U^{\dagger}(\thv)$, where $U(\thv)$ is a unitary operator parametrized by the parameters $\thv$ and $\rho_0$ is an initial state, and some observable $H = \sum_i c_i O_i$ where $\{c_i\}_i$ is a set of coefficients and $\{O_i\}_i$ is a set of operators (commonly, some Pauli strings or projectors)~\cite{cerezo2020variationalreview,bharti2021noisy,abbas2023quantum,mcardle2020quantum}. In our framework, each quantity $\ell_i(\alv_i)$ corresponds to $\Tr[O_i\rho(\thv)]$, with $\rho_i(\alv_i) = \rho(\thv)$ (and $\alv_i = \thv$) $; \;\forall i$. The associated POVM measurement is simply an eigenbasis measurement of $O_i$.
    
    Given a set of samples $\SC^{(i)}_N(\thv)$ where each outcome label $\oc^{(i)}_j $ corresponds to one of the eigenvalues of $O_i$, a statistical estimate $\widehat{\ell}_i(\thv)$ for $\ell_i(\thv)$ can be obtained from the empirical mean of $\SC^{(i)}_N(\thv)$. That is,
    \begin{equation}\label{eq:loss-individual-term}
        \widehat{\ell}_i(\thv) = \Phi_{\rm mean} \left( \SC^{(i)}_N(\thv) \right) = \frac{1}{N}\sum_{j = 1}^N \oc^{(i)}_j 
    \end{equation}
    The final post-processing map $\Phi_{\PC}(\cdot)$ simply takes the linear combination of these estimates with the relevant $c_i$ coefficient. That is, the procedure outputs
    \begin{equation}\label{eq:final-map-loss-evaluation}
        \Phi_{\PC}\big( \{ \widehat{\ell}_i(\alv_i)\}_{i=1}^{{N_\ell}}\big) = \sum_{i=1}^{N_\ell} c_i \widehat{\ell}_i(\thv) \, , 
    \end{equation}
    which is an unbiased estimate of $ \Tr[H \rho(\thv)]$.

\medskip

\textbf{Evaluating other loss functions.}
    The evaluation of non-linear loss functions can be captured in this framework by modifying the final map in Eq.~\eqref{eq:final-map-loss-evaluation} to incorporate a non-linear function and also different initial states. For example, consider a supervised learning task with a training dataset $\{ \rho_i, y_i\}_{i=1}^{N_{\ell}}$ such that an input quantum state $\rho_i$ is associated with a label $y_i$. Further consider a parametrized circuit $U(\thv)$ and a model of the form $\Tr[U(\thv)\rho_i U^{\dagger}(\thv) h]$ where $h$ is some Pauli operator. Again, in our framework we have that each $\ell_i(\alv_i)$ corresponds to the model prediction with $\alv_i = \thv \;;\;\forall i$. The Mean Square Error (MSE) can be estimated using a post-processing map $\Phi_{\PC}(\cdot)$ that implements 
    \begin{align}\label{eq:square-loss-evulation}
         \Phi_{\PC}\big( \{ \widehat{\ell}_i(\alv_i)\}_{i=1}^{N_\ell}\big) = \frac{1}{N_\ell}\sum_{i=1}^{N_\ell} (y_i - \widehat{\ell}_i (\thv))^2 \;.
    \end{align}

\medskip
    
\textbf{Gradient-based and non-gradient based training.} 
    The general procedure can cover any training strategy which requires loss values at different points on the landscape. This generally includes gradient-based methods~\cite{mitarai2018quantum,schuld2019evaluating} and gradient-free methods~\cite{singh2023benchmarking,arrasmith2020effect}. The main idea here is that each loss value can be estimated with the approach described above, and the procedure map $\Phi_{\PC}(\cdot)$ corresponds to processing these estimated losses and outputting the updated parameters according to the optimization method. For example, in the case of vanilla gradient descent we have that the final output $ \Phi_{\PC}\big( \{ \widehat{\ell}_i(\alv_i)\}_{i=1}^{N_\ell}\big)$ is the vector which represents the difference between the updated parameters for the next training iteration and the current ones. In particular, for circuits obeying the parameter shift rule~\cite{mitarai2018quantum,schuld2019evaluating}, the $k^{\rm th}$ component of the output vector $[\Phi_{\PC}\big( \{ \widehat{\ell}_i(\alv_i)\}_{i=1}^{N_\ell}\big)]_k$ with current parameter values $\thv$ (with the $k^{\rm th}$ component $\theta_k$) is expressed as
    \begin{align}\label{eq:loss-gradient-evaluation}
       \left[\Phi_{\PC}\big( \{ \widehat{\ell}_i(\alv_i)\}_{i=1}^{N_\ell}\big)\right]_k = \theta_k - \frac{\eta}{2}  & \left[\widehat{\LC}\left(\thv + \frac{\pi}{2}\hat{e}_k\right)\right. \nonumber \\ & \left. \;\;\;\; -  \widehat{\LC}\left(\thv  - \frac{\pi}{2}\hat{e}_k\right)\right] \;,
    \end{align}
    where $\eta$ is the learning rate. The term $\widehat{\LC}\left(\thv \pm \frac{\pi}{2}\hat{e}_k\right)$ denotes the estimated loss in Eq.~\eqref{eq:final-map-loss-evaluation} evaluated at the shifted parameter values $\thv \pm \frac{\pi}{2}\hat{e}_k$ where $\hat{e}_k$ is the unit vector in the direction of the $k^{\rm th}$ parameter component. That is, in the simple case where the Hamiltonian is a single Pauli operator $O$ and there are $N_{\rm p}$ parameters (i.e., $\thv = (\theta_1,\theta_2,...,\theta_{N_p})$), 
    we have that $\ell_i(\alv_i) = \Tr[\rho(\alv_i) O]$ with $\alv_i = \thv + \frac{\pi}{2}\hat{e}_i$ and $\alv_{i+N_{\rm p}} = \thv - \frac{\pi}{2}\hat{e}_i$ for all $i = 1, \dots, N_p$  such that $N_\ell = 2N_{\rm p}$.
    
\medskip
    
\textbf{Quantum natural gradient.} This optimization method explicitly takes into account the local geometric structure of the parametrized state space encoded in Quantum Geometric Tensor~(QGT)~\cite{stokes2020quantum}. In practice, the QGT can be approximated as a block-diagonal matrix, with each block computable on quantum hardware. In particular, if the parametrized circuit uses gates composed of Pauli gate generator, the relevant components of this approximation of the QGT can be obtained by measuring and processing expectation values of some relevant Pauli operators. Hence, in addition to the loss values for estimating the typical gradients (as in Eq.~\eqref{eq:loss-gradient-evaluation}), these extra quantities correspond to additional POVM measurements in the procedure (see Appendix~\ref{app:numerical-details} for more details). 

\medskip
    
\textbf{Sample-based training strategy.} We can also tackle an optimization strategy which does not explicitly construct an expectation value but relies directly on samples obtained from measurements. Essentially, in the optimization process, one has to process the samples in some sense (-- even if this process does not correspond to expectation/loss values), this is equivalent to choosing an appropriate processing map in our setting. To be more concrete, consider CVaR optimization proposed in Ref.~\cite{barkoutsos2020improving} to solve a binary optimization problem. For a given set of samples $\SC_N(\thv)$ (sorted in non-decreasing order), the strategy involves only $\lceil{\gamma N}\rceil$ samples with some hyperparameter $\gamma \in [0,1]$ to construct a loss component of the form 
    \begin{equation}
        \widehat{\ell}_i(\thv) = \Phi_{\rm CVaR}^{(\gamma)} \left( \SC^{(i)}_N(\thv) \right) = \frac{1}{\lceil{\gamma N}\rceil} \sum_{i=0}^{\lceil{\gamma N}\rceil} \oc^{(i)}_j \, .
    \end{equation}

\medskip

\textbf{Classical neural network-assisted initialization strategy.} This approach relies on initializing \emph{circuit} parameters with a classical neural network~\cite{friedrich2022avoiding,miao2024neural}. Upon training, the weights and biases of the classical neural network are directly adjusted (instead of circuit parameters). Through the lens of our general procedure, we can cover this strategy by simply identifying the trainable weights and biases as $\alv$. 

\medskip

\textbf{Non-variational QML models.} The general procedure also encompasses different families of QML model. The core thread here is that any QML algorithm, simply by definition, requires the model to interact with quantum computers one way or the other. One can then appropriately specify the components in the procedure to fit the QML model of interest. 
Consider quantum kernel methods as an example. Let $\vec{x}$ denote a classical input, $U(\vec{x})$ a data-embedding unitary and $\rho_0$ an initial state chosen to be the all-zero state $\rho_0 = |0\rangle\langle0|^{\otimes n}$. Then we have $\ell_i(\alv_i) = \Tr[|0\rangle\langle0|^{\otimes n} U^\dagger({\vec{x}'})U(\vec{x}) \rho_0 U^\dagger(\vec{x})U(\vec{x}')]$ for all $i$ such that $\alv_i = \alv = \{\vec{x},\vec{x'}\}$ and $\rho(\alv)= U^\dagger({\vec{x}'})U(\vec{x}) \rho_0 U^\dagger(\vec{x})U(\vec{x}')$. Using the overlap test which computes the fidelity by measuring the expectation value of the all-zero state for $\rho(\alv)$, the estimated fidelity kernel between $\vec{x}$ and $\vec{x'}$ can be expressed as in Eq.~\eqref{eq:loss-individual-term} with the POVM $\MC = \{ |0\rangle\langle 0|^{\otimes n},\mathbb{1}-|0\rangle\langle 0|^{\otimes n}\}$.

\medskip

\paragraph*{Polynomial POVMs in disguise.} A particularly observant reader might have noticed that in our specification of the procedure $\mathcal{P}$ we consider only ‘polynomial’ POVMs. Namely, we require that each of the POVMs $\povm^{(i)}$ contain at most polynomially (in $n$) many elements, $|\povm^{(i)}|\in \OC(\poly(n))$. It is then natural to wonder if standard procedures in the literature really do satisfy this constraint. After all, a computational basis measurement has exponentially many ($2^n$) different outcomes. Here we briefly address this potential point of confusion and explain how, while it is possible to consider parameterized quantum models that use POVMs with exponentially many outcomes (more on this in the Discussion), the standard ones currently used by the community in fact involve polynomial POVMs in disguise.

To illustrate this, consider the procedure of computing the standard VQA loss function $\LC(\thv) = \Tr[H \rho(\thv)]$, where $H$ is a diagonal Hamiltonian of the form $H = \sum_{i} Z_i Z_{i+1}$, with $Z_i$ a single-qubit Pauli-Z operator on the $i^{\rm th}$ qubit. From an initial brief look at the Hamiltonian, one might have thought that the relevant POVM elements are all the $n$-qubit computational basis states, hence scaling exponentially. However, this example does fall under the umbrella of our general procedure, as should be clear from looking back at the section on ‘evaluating a standard linear loss’ above. Namely, here the set of quantities to be estimated is $\{\ell_{i}(\thv) = \Tr[\rho(\thv) Z_iZ_{i+1}] \}_{i=1}^n$. Then, we can identify the associated POVM for the $i^{\rm th}$ term $\ell_{i}(\thv)$ as $\MC^{(i)} = \{|00\rangle\langle00|_{i,i+1} + |11\rangle\langle11|_{i,i+1},|01\rangle\langle01|_{i,i+1} + |10\rangle\langle10|_{i,i+1}\}$ where these are projectors onto $+1$ and $-1$ eigenvalues subspaces. This POVM crucially contains only two elements and can be achieved by measuring in the two-qubit computational basis.

A further observation to make from the previous example is that while each of the $\povm^{(i)}$ are different, they can all be simultaneously measured by measuring all the qubits in the computational basis. This is true more generally. Namely, the different $\ell_i(\alv_i)$ generally require different POVMs, i.e., $\povm^{(i)} \neq \povm^{(j)}$ for $i \neq j$. However, some of the $\povm^{(i)}$ may be measurable simultaneously.

As another example of a polynomial POVM in disguise, consider the same procedure with the global Pauli-$Z$ operator as the Hamiltonian, $H = \bigotimes_{i=1}^n Z_i$. At first glance, one might think the corresponding POVM consists of all computational basis projectors, suggesting exponentially many POVM elements. However, to estimate the expectation of $H$, it is not necessary to distinguish individual bit strings. Instead, what matters is determining the eigenvalue sector to which a given bit string belongs. In the language of our framework, the corresponding POVM is therefore $\MC = \{ \Pi_{+}, \Pi_{-} \}$, where $\Pi_{+}$ projects onto the $+1$ eigenspace (the span of computational basis states with even parity), and $\Pi_{-}$ projects onto the $-1$ eigenspace (the span of those with odd parity).

\begin{figure*}
\centerline{\includegraphics[width=\linewidth]{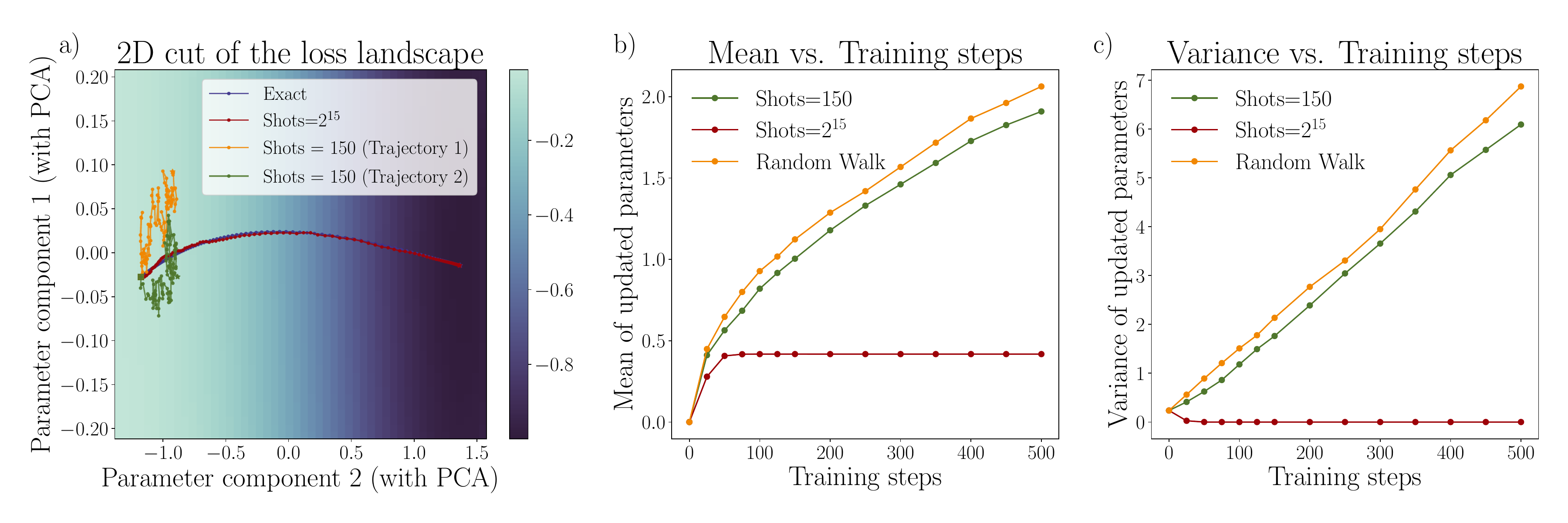}}
\caption{\textbf{Training on the BP landscape with $15$ qubits}. In panel a), training trajectories with different measurement shots are present on a 2D cut of the landscape (chosen via PCA analysis~\cite{rudolph2021orqviz}), showing random trajectories with polynomial measurement shots. In panels b) and c), the mean and variance of the updated parameters over different training trajectories are shown to align with the mean and variance of random walks. More specifically, we consider the quantity $\frac{1}{N_p} \|\thv^{(t)} - \thv^{(0)}\|_1$ where $\thv^{(t)}$ denotes the parameter vector at training iteration $t$ and $\boldsymbol{\theta}^{(0)}$ is the initial parameter vector. Panels b) and c) respectively display the average and variance of this quantity, computed over different parameter initializations, as functions of the training iteration. Here we have $n=15$, a layer of single X qubit rotations as an ansatz and measure the global Pauli-$Z$ observable.}
\label{fig:randomwalk}
\end{figure*}

\begin{figure*}
\centerline{\includegraphics[width=1.0\linewidth]{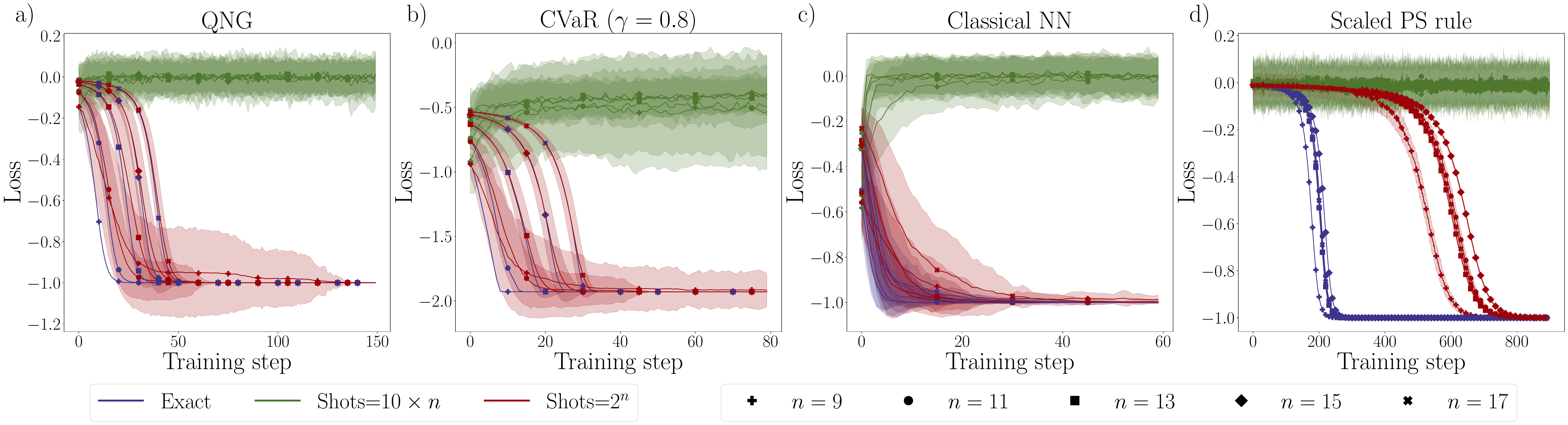}}
\caption{\textbf{Training curves.} We plot the loss as a function of training steps for different shot budgets, various optimization methods and system sizes: Panel a) shows quantum natural gradient descent;
Panel b) shows sample-based CVaR optimization;
Panel c) shows classical neural network–assisted initialization;
Panel d) shows the re-scaled parameter-shift rule. For each system size $n$, the training is performed under three shot regimes: infinite shots, $2^{n}$ shots, and $10 \times n$ shots.
The ansatz consists of a single layer of $X$-rotations on each qubit, and the observable is a linear combination of global $Z$ operators.}
\label{fig:post-process}
\end{figure*}

\section{Exponential concentration}

Having distilled the information processing strategies used by standard parameterized quantum models into the general procedure outlined above, it becomes clear that the scalability of $\mathcal{P}$ hinges on whether the measurement outcome probabilities, and thus the outcomes themselves, carry information about the underlying variables. This motivates a shift in focus: rather than analyzing exponential concentration at the level of the overall loss function (as is commonly done~\cite{mcclean2018barren, larocca2024review,marrero2020entanglement,sharma2020trainability,patti2020entanglement,wang2020noise,arrasmith2021equivalence,larocca2021diagnosing,holmes2021connecting, cerezo2020cost,khatri2019quantum,rudolph2023trainability,kieferova2021quantum,thanaslip2021subtleties,holmes2021barren,martin2022barren, fontana2023theadjoint,ragone2023unified, thanasilp2022exponential, letcher2023tight, anschuetz2024unified, xiong2023fundamental, crognaletti2024estimates, mao2023barren, xiong2025role, suzuki2023effect, cerezo2023does, arrasmith2020effect}), we instead consider concentration at the level of the POVM outcome probabilities associated with individual quantities.

For a given POVM $ \MC $, chosen from the set of POVMs $ \{ \MC^{(i)} \}_{i=1}^{N_\ell}$ associated with a procedure, together with the state $\rho(\alv)$, this alternative notion of exponential concentration can be formally defined as follows.
\begin{definition}[Outcome probability concentration]\phantomsection\label{def:probability-concentration-main}
Consider a parametrized $n$-qubit state $\rhot$ and a POVM $\povm = \{ M_k \}$. The POVM outcome probabilities concentrate with respect to the variable $\alv$ (which are drawn from some distribution $\DC$ over a certain domain) if for all $\pov_k \in  \povm$ we have
\begin{align}
    \Pr_{\alv\sim \DC}\left( |p_k(\alv) - \mu_k| \geq \delta \right) \leq \frac{\beta}{\delta^2} \;\; ;\;\; \beta \in \OC\left( \exp(-n)\right)
\end{align}
where $p_k(\alv) = \Tr[\rhot  \pov_k]$ is the probability of measuring an outcome associated with the element $\pov_k \in  \povm$ and $\mu_k$ is some concentration point independent of $\alv$. 
\end{definition}
First, we remark that the definition of concentration depends on the distribution $\DC$ and the domain of the variable $\alv$. The same holds for the concentration of expectation values commonly studied in the literature. When $\alv$ represents some form of input data, the distribution is determined by the data-generation process. On the other hand, when $\alv$ represents trainable parameters, most studies consider the uniform distribution over either state space or parameter space. More recently, progress has been made on studying alternative parameter initializations restricted to small subregions of the loss landscape~\cite{mhiri2025unifying, mele2022avoiding, puig2024variational, grant2019initialization}.

Crucially, we note that the underlying mechanisms that result in the concentration of the outcome probabilities are identical to those leading to the concentration of expectation values~\cite{larocca2024review}. This simply follows from the fact that the outcome probability can be expressed as the expectation of a POVM operator. 

Nevertheless, Definition~\ref{def:probability-concentration-main} allows us to pin down the practical consequence of the concentration and later provide the guidelines to determine the scalability of the given protocol (as presented in Section~\ref{sec:procedure}). 
That is, \textit{for any} state $\rho(\alv)$ the outcome probability distribution associated with the POVM $\mathcal{M}$, i.e., $\P_{\alv} = \big(p_1(\alv), p_2(\alv), ..., p_{|\povm|}(\alv)\big)$, is close to the fixed $\alv$-independent distribution $\P_{\rm fixed} = \big(\mu_1, \mu_2, ..., \mu_{|\MC|}\big)$. By using tools from hypothesis testing, we can rigorously show that if the number of POVM elements of these distributions scales at most polynomially in $n$, then with high probability these two probability distributions cannot be distinguished using only polynomial outcomes. Thus the following theorem holds.

\begin{theorem}[Indistinguishability from probability concentration, informal] \label{thm:main-indistinguishable}
Assume the exponential concentration of outcome probabilities as in Definition~\ref{def:probability-concentration-main} on a POVM set $\povm$ with  $|\povm|\in \OC(\poly(n))$. 
After polynomial measurement shots $N \in \OC(\poly(n))$, the obtained measurement samples $\SC_N(\alv)$ are with high probability at least $1-\delta$, such that $\delta \in \OC(\exp(-n))$, statistically indistinguishable from samples $\SC_{ N,{\rm fixed}}$ drawn from the fixed $\alv$-independent distribution $\P_{\rm fixed}$.
\end{theorem}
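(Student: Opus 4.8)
The plan is to make ``statistically indistinguishable'' precise as a bound on the total variation distance between the joint distribution of the $N$ observed outcomes $\SC_N(\alv)$ and that of $\SC_{N,{\rm fixed}}$, and then to show that this distance is exponentially small for all but an exponentially small fraction of the variables $\alv$. Phrasing the claim through total variation distance is what makes it operationally meaningful and robust: if two distributions are $\eta$-close in total variation, then \emph{no} procedure whatsoever---including adaptive strategies that postprocess the samples in arbitrary ways---can distinguish them with advantage exceeding $\eta$, which is exactly the content needed for the downstream statements such as Corollary~\ref{coro:no-post-process}.

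Since the quantum state is re-prepared afresh before each of the $N$ shots, the outcomes are i.i.d., so the relevant joint distributions are the product distributions $\P_\alv^{\otimes N}$ and $\P_{\rm fixed}^{\otimes N}$ with $\P_{\rm fixed} = (\mu_1,\dots,\mu_{|\povm|})$. Taking $\mu_k = \Ebb_\alv[p_k(\alv)]$ ensures $\sum_k \mu_k = \Ebb_\alv\big[\sum_k p_k(\alv)\big] = 1$, so $\P_{\rm fixed}$ is a genuine distribution. Subadditivity of total variation distance under tensor products then gives
\begin{align}
\big\| \P_\alv^{\otimes N} - \P_{\rm fixed}^{\otimes N} \big\|_{\rm TV} \;\leq\; N \, \big\| \P_\alv - \P_{\rm fixed} \big\|_{\rm TV} \;=\; \frac{N}{2} \sum_k \big| p_k(\alv) - \mu_k \big| \;,
\end{align}
so it suffices to control the single-shot $\ell_1$ deviation $\sum_k |p_k(\alv)-\mu_k|$ with high probability over $\alv$.

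This is the step that invokes Definition~\ref{def:probability-concentration-main}. For a threshold $\vep>0$, a union bound over the $|\povm|\in\OC(\poly(n))$ POVM elements together with the Chebyshev-type tail of the definition yields
\begin{align}
\Pr_\alv\!\left( \exists\, k : \ \big| p_k(\alv) - \mu_k \big| \geq \vep \right) \;\leq\; \frac{|\povm|\,\beta}{\vep^2} \;, \qquad \beta \in \OC(\exp(-n)) \;.
\end{align}
Hence, outside an event of probability at most $|\povm|\beta/\vep^2$, all terms obey $|p_k(\alv)-\mu_k|<\vep$, so $\sum_k |p_k(\alv)-\mu_k| < |\povm|\vep$ and therefore $\big\| \P_\alv^{\otimes N} - \P_{\rm fixed}^{\otimes N} \big\|_{\rm TV} < \tfrac12 N |\povm| \vep$. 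It remains to choose $\vep$ so that both the failure probability and the distance bound are exponentially small: setting $\vep = \beta^{1/3}$ makes the failure probability $|\povm|\beta^{1/3}$ and the distance bound $\tfrac12 N |\povm| \beta^{1/3}$, each a polynomial in $n$ times an $\exp(-\Omega(n))$ factor and hence itself $\exp(-\Omega(n))$. This is precisely the asserted conclusion, the ``$\delta$'' of the statement being $|\povm|\beta^{1/3}$. Finally, a single iteration of the full procedure $\PC$ queries $N_\ell\in\OC(\poly(n))$ POVMs rather than one; one further union bound over these (another $\poly(n)$ factor, likewise absorbed) extends the conclusion to the entire measurement record.

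I do not expect a substantive obstacle: the argument is a union bound combined with tensorization of total variation distance. The points needing care are largely conceptual or bookkeeping: (i) adopting total variation distance as the notion of indistinguishability, so that the statement automatically covers every, including adaptive, distinguisher rather than one fixed test; (ii) ensuring $\P_{\rm fixed}$ is normalized, which is handled by the choice $\mu_k = \Ebb_\alv[p_k(\alv)]$ (or by absorbing an $\OC(\exp(-n))$ renormalization error); and (iii) checking that the $\poly(n)$ prefactors accumulated from the union bounds over $|\povm|$, over the $N$ shots, and over the $N_\ell$ POVMs cannot defeat the $\exp(-\Omega(n))$ decay---true since $\poly(n)\,\exp(-\Omega(n)) = \exp(-\Omega(n))$, but worth stating explicitly so that the ``with high probability $1-\delta$, $\delta\in\OC(\exp(-n))$'' guarantee is airtight.
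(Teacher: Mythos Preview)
Your proposal is correct and follows essentially the same approach as the paper: bound the single-shot $\ell_1$ deviation via a Chebyshev tail plus a union bound over the $|\povm|$ elements, then tensorize to $N$ shots (the paper states this via its hypothesis-testing Lemma/Proposition rather than directly as total variation subadditivity, but these are equivalent). The only cosmetic differences are your choice of threshold $\vep=\beta^{1/3}$ versus the paper's $\beta^{1/4}$ (both work), and your explicit remark on the normalization of $\P_{\rm fixed}$, which the paper leaves implicit.
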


The direct consequence of Theorem~\ref{thm:main-indistinguishable} is stated below - namely, no classical post-processing removes this indistinguishability. 
\begin{corollary}[No post-processing, informal]\label{coro:no-post-process} 
Post-processing $\SC_N^{(i)}(\alv)$ with any arbitrary map $\Phi'(\cdot)$, with high probability at least $1-\delta'$, such that $\delta' \in \OC(\exp(-n))$, leads to an estimate that is statistically indistinguishable from an $\alv$-independent random variable 
\begin{align}
    \rand = \Phi'(\SC_{ N,{\rm fixed}}) \;\;,
\end{align}
where each outcome in $\SC_{ N,{\rm fixed}}$ is drawn from $\mathbb{P}_{\rm fixed}$.
\end{corollary}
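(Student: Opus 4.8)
The plan is to obtain Corollary~\ref{coro:no-post-process} as an immediate consequence of Theorem~\ref{thm:main-indistinguishable} together with the data-processing inequality, so that no genuinely new estimate is needed. First I would unwind the formal meaning of ``statistically indistinguishable'' underlying the theorem: there is an event $\mathcal{G}$ in the variable space, with $\Pr_{\alv}(\mathcal{G}) \geq 1-\delta$ and $\delta\in\OC(\exp(-n))$, such that for every $\alv\in\mathcal{G}$ the law of the $N$-shot record $\SC_N^{(i)}(\alv)$ --- i.e.\ the product distribution $\P_{\alv}^{\otimes N}$ over the $|\povm^{(i)}|\in\OC(\poly(n))$ outcome labels --- is within total-variation distance $\epsilon\in\OC(\exp(-n))$ of the law of $\SC_{N,{\rm fixed}}$, namely $\P_{\rm fixed}^{\otimes N}$. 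Equivalently, no test acting on the record tells the two apart with advantage exceeding $\epsilon$. This is precisely the content I take as given from the formal version of Theorem~\ref{thm:main-indistinguishable}.

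The second step is to invoke monotonicity of statistical distance under processing. Any post-processing map $\Phi'(\cdot)$ --- including the deterministic maps $\Phi_i$, the gradient and CVaR aggregations, and the final map $\Phi_{\PC}$ of Section~\ref{sec:procedure}, as well as any randomized variant --- is, as a map acting on the sample record, a (stochastic) channel; and for any such channel and any two distributions $P,Q$ one has $\mathrm{TV}(\Phi'_\ast P,\Phi'_\ast Q)\leq \mathrm{TV}(P,Q)$, where $\Phi'_\ast$ denotes the push-forward. Applying this with $P=\P_{\alv}^{\otimes N}$ and $Q=\P_{\rm fixed}^{\otimes N}$ on the good event $\mathcal{G}$ gives, for every $\alv\in\mathcal{G}$,
\begin{equation}
\mathrm{TV}\!\big(\mathrm{law}(\Phi'(\SC_N^{(i)}(\alv))),\,\mathrm{law}(\rand)\big)\;\leq\;\epsilon \in \OC(\exp(-n)) \,,
\end{equation}
with $\rand=\Phi'(\SC_{N,{\rm fixed}})$ by definition. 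Since $\Phi'$ is a fixed map and $\rand$ carries no $\alv$-dependence, this says exactly that $\Phi'(\SC_N^{(i)}(\alv))$ is, with probability at least $1-\delta$ over $\alv$, statistically indistinguishable from the $\alv$-independent random variable $\rand$; one simply sets $\delta'=\delta\in\OC(\exp(-n))$.

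Finally I would record the operational phrasing for readers who prefer it: given any distinguisher $\mathcal{T}$ acting on the post-processed output, the composition $\mathcal{T}\circ\Phi'$ is a distinguisher acting on the raw record $\SC_N^{(i)}$, so its success advantage is bounded by $\epsilon$ via Theorem~\ref{thm:main-indistinguishable}, hence so is that of $\mathcal{T}$. The only bookkeeping point --- relevant when one wants the conclusion to hold simultaneously for all $N_\ell$ quantities $\{\widehat{\ell}_i(\alv_i)\}_{i=1}^{N_\ell}$, as is needed to feed $\Phi_{\PC}$ and to derive Corollary~\ref{coro:random-walk} --- is a union bound over the $N_\ell\in\OC(\poly(n))$ POVMs, which degrades the failure probability only by a polynomial factor and so keeps it in $\OC(\exp(-n))$. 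I do not expect a substantive obstacle here: the corollary is essentially the statement that indistinguishability, once established at the level of the measurement record, is preserved under any downstream classical computation, which is just the data-processing inequality.
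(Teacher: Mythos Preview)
Your proposal is correct and takes essentially the same approach as the paper: the paper's proof is the one-line contradiction argument that a distinguisher on $\Phi'(\SC_N)$ composed with $\Phi'$ would be a distinguisher on $\SC_N$, which is exactly your ``operational phrasing'' and the content of the data-processing inequality you invoke. Your additional remarks on the union bound over the $N_\ell$ POVMs go beyond what the paper does for this corollary (that bookkeeping is deferred to the proof of Corollary~\ref{coro:random-walk}), but they are correct and harmless here.
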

\noindent We refer the readers to Appendix~\ref{app:theorem1-proof} for the formal statements, the proofs and the additional details.

Corollary~\ref{coro:no-post-process} leaves the choice of post-processing arbitrary and hence applies generally to any procedure $\PC$. To illustrate its practicality, we apply it to the training of a standard VQA loss with the BP landscape using a traditional gradient-based method.
\begin{corollary}[Random walk via gradient descent, informal]\label{coro:random-walk}
Consider a parametrized quantum state $\rho(\thv)$ that depends on some trainable parameters $\thv$ and a loss function of the form $\LC(\thv) = \sum_{i=1}^{N_\ell} c_i \ell_i(\thv)$ where each $\ell_i(\thv) = \Tr[\rho(\thv)O_i]$ with some parametrized state $\rho(\thv)$, some Pauli operator $O_i$ and $N_\ell \in\OC(\poly(n))$. Further, consider a state $\rho(\thv)$ that is generated from some parametrized circuit and suppose that the parameter shift rule is applied. Training the loss with vanilla gradient descent algorithm with a random initialization for polynomial training iterations using overall polynomial measurement shots is statistically indistinguishable from a random walk with high probability at least $1 - c$ for some $c \in \OC(\exp(-n))$. That is, for a given iteration, the updated parameters $\thv^{(\rm new)}$ for the next iteration are statistically indistinguishable from the update rule given as 
\begin{align}
    \thv^{\rm (new)} = \thv^{\rm (current)} + \rw \;\;,
\end{align}
where $\thv^{(\rm current)}$ are parameter values for the current iteration, and $\rw$ is a vector where each component is an instance of some parameter-independent random variable. 
\end{corollary}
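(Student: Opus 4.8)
\textbf{Proof plan for Corollary~\ref{coro:random-walk}.}

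The plan is to obtain the statement by specializing Corollary~\ref{coro:no-post-process} to the particular procedure $\PC$ that implements one step of vanilla gradient descent via the parameter shift rule, and then iterating the resulting single-step claim over the polynomially many training iterations with a union bound. First I would set up the single-iteration picture: by the ``gradient-based training'' example in Section~\ref{sec:procedure}, one gradient-descent step is exactly a procedure $\PC$ in which the relevant physical quantities are the $N_\ell = 2 N_{\rm p} \cdot (\text{number of Pauli terms})$ shifted-loss terms $\ell_i(\alv_i) = \Tr[\rho(\alv_i) O_i]$ with $\alv_i = \thv \pm \tfrac{\pi}{2}\hat e_i$, each estimated from an eigenbasis POVM of $O_i$ (two-element or few-element, hence polynomial in $n$), and the post-processing map $\Phi_\PC$ is the affine map of Eq.~\eqref{eq:loss-gradient-evaluation} producing the update vector. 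I would then invoke the hypothesis that the landscape exhibits a barren plateau, i.e.\ exponential concentration of expectation values, and note (using the remark after Definition~\ref{def:probability-concentration-main}) that this is equivalent to outcome-probability concentration in the sense of Definition~\ref{def:probability-concentration-main} for each of the $O_i$-eigenbasis POVMs, since each outcome probability is itself an expectation value of a POVM element that inherits the same concentration bound up to polynomial factors. This puts us squarely in the hypotheses of Theorem~\ref{thm:main-indistinguishable} and Corollary~\ref{coro:no-post-process}.

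Next I would apply Corollary~\ref{coro:no-post-process} with $\Phi' = \Phi_\PC$ the gradient-descent update map: for a single fixed iteration starting from parameters $\thv^{\rm (current)}$, with high probability $1-\delta'$, $\delta' \in \OC(\exp(-n))$, the output update vector $\Phi_\PC(\SC_N)$ is statistically indistinguishable from $\Phi_\PC(\SC_{N,{\rm fixed}})$, which is a $\thv$-independent random vector; call its realization $\rw$. Rearranging the form of Eq.~\eqref{eq:loss-gradient-evaluation}, the $k$-th component of the output is $\theta_k^{\rm (current)} - \tfrac{\eta}{2}[\widehat\LC(\thv + \tfrac{\pi}{2}\hat e_k) - \widehat\LC(\thv - \tfrac{\pi}{2}\hat e_k)]$, so the increment $\thv^{\rm (new)} - \thv^{\rm (current)}$ is precisely $-\tfrac{\eta}{2}$ times the estimated-gradient vector, each component of which is (indistinguishable from) a parameter-independent random variable; this is the claimed $\thv^{\rm (new)} = \thv^{\rm (current)} + \rw$. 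Finally, to cover the entire trajectory I would iterate: there are $T \in \OC(\poly(n))$ training steps, and at step $t$ Corollary~\ref{coro:no-post-process} applies with failure probability $\delta'_t \in \OC(\exp(-n))$ conditioned on the current parameter value (the conclusion holds for \emph{any} $\thv^{\rm (current)}$, so it applies regardless of how we got there); a union bound gives a total failure probability $\sum_{t=1}^T \delta'_t \in \poly(n)\cdot\OC(\exp(-n)) = \OC(\exp(-n))$, yielding the overall high-probability bound $1-c$ with $c \in \OC(\exp(-n))$. The overall measurement budget being polynomial is used to guarantee that each per-step shot count $N$ is polynomial, as required by Theorem~\ref{thm:main-indistinguishable}.

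The main obstacle I anticipate is the iteration/union-bound step and making precise in what sense the whole trajectory ``is'' a random walk. Corollary~\ref{coro:no-post-process} is a statement about a single application of a post-processing map to samples drawn from a concentrated distribution, and its conclusion is an indistinguishability claim conditioned on a fixed variable value; when we chain steps, the parameter at step $t$ is itself a random quantity produced by the previous steps, so one must be careful that the indistinguishability at step $t$ holds uniformly over the (random) current parameter and that the increments across steps can be coupled to genuinely parameter-independent random variables simultaneously. The clean way to handle this is to observe that the per-step conclusion is worst-case over $\thv^{\rm (current)}$ — so no conditioning subtlety actually arises — and then the union bound over the $\poly(n)$ steps is routine; the resulting coupled process, whose increments are i.i.d.-in-distribution parameter-independent vectors, is by definition a random walk. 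A secondary, more bookkeeping-level point is to confirm that the number of distinct POVMs and shifted evaluations across all iterations remains polynomial (it does: $\poly(n)$ Pauli terms $\times$ $2N_{\rm p}$ shifts $\times$ $\poly(n)$ iterations), so that all invocations of Theorem~\ref{thm:main-indistinguishable} are legitimate.
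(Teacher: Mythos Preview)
Your plan matches the paper's proof essentially step for step: set up one gradient-descent iteration as a procedure $\PC$ with $N_\ell = 2N_{\rm p}N_L$ shifted Pauli-expectation terms and two-element POVMs, invoke the indistinguishability result to replace each sample set by its fixed $\thv$-independent counterpart (the paper does this via an explicit union bound over the $N_\ell$ sample sets within a step, which is the content of your single invocation of Corollary~\ref{coro:no-post-process} extended to multiple POVMs), and then union-bound over the $\poly(n)$ iterations.

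One point to tighten: your resolution of the ``main obstacle'' asserts that the per-step conclusion is \emph{worst-case} over $\thv^{\rm (current)}$, but Definition~\ref{def:probability-concentration-main} and Theorem~\ref{thm:main-indistinguishable} are probabilistic statements over the random choice of $\alv$, not pointwise in $\alv$. The paper handles this exactly as you do---by a union bound over steps with per-step failure probability $\OC(\exp(-n))$---but the justification is that the distribution of $\thv^{(k)}$ at each step still satisfies the concentration hypothesis (implicitly, because a random-walk step preserves the relevant parameter distribution), not that the bound holds for every fixed $\thv^{\rm (current)}$. The arithmetic of your union bound is correct; only the stated reason needs adjusting.
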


This is further supported by the numerical results shown in Fig.~\ref{fig:randomwalk}, which illustrate training on a BP landscape under different shot budgets for a $15$-qubit system. The loss function is defined as the expectation value of a global Pauli-$Z$ operator, and the ansatz consists of a single layer of single-qubit $X$ rotations. This setup is known to suffer from the globality-induced barren plateau phenomenon that is observed for unstructured quantum circuits~\cite{cerezo2020cost}. Panel~a) presents a two-dimensional projection of training trajectories obtained under various shot conditions ($150$ shots, $2^{15}$ shots, and infinite shots). The projection is computed using Principal Component Analysis (PCA), as implemented in the \texttt{ORQVIZ} package~\cite{rudolph2021orqviz}. We observe that trajectories with $2^{15}$ shots and infinite shots converge toward a solution, while those with $150$ shots exhibit a random wandering behavior over the landscape. Panels~b) and c) show the scaling of the mean and variance of the parameter updates (averaged over multiple trajectories) as functions of training steps. 
Notably, both the mean and variance of the updates under the $150$-shot regime closely resemble those of a random walk.

We note that a similar conclusion is expected to hold for quantum natural gradient descent, albeit with some technical subtleties. While the estimated loss gradients remain indistinguishable, the situation is more subtle for the estimated QGT. In particular, certain elements of the QGT that capture the curvature of the quantum state generated by an \textit{early} part of the circuit may still be estimated efficiently. However, extracting curvature information from the deeper parts of the circuit becomes increasingly difficult. As these are combined with the information-less estimated gradients, we do not expect the overall \textit{estimated} quantum natural gradient to provide any meaningful direction on a featureless landscape. Nonetheless, the proof would be more complex. 

\section{Practical step-by-step guidelines}

We now outline a set of criteria to determine whether a given training or encoding procedure is vulnerable to scalability limitations due to exponential concentration.

\begin{enumerate}
    \item Given a procedure $\PC$, identify the quantities $\{\ell_i(\alv_i)\}$ which require information to be extracted from quantum computers. 
    \item For each quantity $\{\ell_i(\alv_i)\}$, identify the corresponding $\mathcal{M}^{(i)}$ and check whether $|\mathcal{M}^{(i)}|$, the number of POVM elements in $\mathcal{M}^{(i)}$, scales at most polynomially with system size $n$.
    
    Note that all barren plateau mitigation strategies we are aware of involve such polynomial-sized POVMs, even if some may initially appear to have exponentially many elements (see Section~\ref{sec:procedure} above on `Polynomial POVMs in disguise').
    \item Determine whether the outcome probabilities $p^{(i)}_k(\alv_i) = \Tr[ \rho_i(\alv_i) \pov^{(i)}_k]$ exponentially concentrate with respect to $\alv_i$ (as per Definition~\ref{def:probability-concentration-main}). 

    If this is the case, the procedure $\PC$ suffers from the concentration in the sense that the measurement outcomes, with probability exponentially close to 1, contain no information about the variables $\alv_i$.
\end{enumerate}

Based on the above guidelines, it follows that some proposals that were hoped to mitigate, or entirely avoid, the effects of exponential concentration may in fact still suffer from them. Namely, all of the examples discussed in Section~\ref{sec:procedure} potentially fall under this category. These include certain forms of natural gradient descent~\cite{stokes2020quantum}, sample-based CVaR optimization~\cite{barkoutsos2020improving}, agnostic classical neural network-assisted initialization~\cite{friedrich2022avoiding}, and a rescaled gradient approach~\cite{Teo_2023}. In particular, if a given parameterized quantum circuit exhibits exponential concentration under standard gradient-based training, switching to a more sophisticated cost evaluation strategy, such as natural gradient descent or neural network-assisted training, will not resolve the problem, because it does not address the root cause: exponential concentration at the level of outcome probabilities.

In Fig.~\ref{fig:post-process}, we further provide numerical simulations of the actual training on BP landscapes with these optimization methods using different shot budgets (polynomial shots of $10\times n$, exponential shots of $2^n$ and infinite shot limit) on various system sizes ($n=9, 11, 13, 15, 17$). Similar to Fig.~\ref{fig:randomwalk}, the numerical set-up is engineered such that the global-induced BP emerges (see further details in Appendix~\ref{app:numerical-details}). Here, one can clearly see the interplay with shot noise in the training process. While these methods indeed lead to successful optimization in the limit of $2^{n}$ measurement shots, they fail to move in any meaningful direction with relatively small $10 \times n$ shots.

It is important to note that while our results highlight that some methods hoped to avoid exponential concentration still suffer the effects of exponential concentration, this does not mean that they cannot be used in any way to boost scalability. For example, quantum natural gradients methods are naturally employed in conjunction with identity initialization or warm starting strategies~\cite{wierichs2020avoiding,Haug_2021}. Such deployments can indeed avoid exponential concentration. However, this is in virtue of using a different initialization strategy that changes the concentration properties of the circuit, not in virtue of switching from vanilla gradients to natural gradients. More generally, these optimization strategies could have their own other strengths that aid training or scalability. Indeed, some of them were originally proposed independently of the scalability issue. By considering again the quantum natural gradient approach as an example, the method takes into account the local curvature of the state space, resulting in generally faster training convergence. More heuristically, neural network initialization strategies have been seen to lead to faster training~\cite{friedrich2022avoiding, miao2024neural}.

Lastly, although our primary focus in this section has been on variational quantum algorithms, we stress that these guidelines also apply to non-variational QML models. In particular, scalability issues in quantum kernel methods~\cite{huang2021power, thanasilp2022exponential, kubler2021inductive, suzuki2023effect,shaydulin2021importance} and quantum reservoir approaches~\cite{xiong2023fundamental,xiong2025role}, arising from concentration over the input data, can also be diagnosed using our core guidelines. 

\medskip

\paragraph*{Subtlety regarding measure-first-estimate-later approaches.} 

The POVMs we consider in our guideline are essentially determined by the target quantities one aims to estimate. A natural question from attentive readers might be: what about approaches such as classical shadow techniques~\cite{huang2020predicting, elben2022randomized}, in which measurement data are collected first to build a classical representation of the quantum state—prior to knowing the specific observable of interest?

Our key argument is that the subsequent post-processing of the classical representation to evaluate a chosen observable is mathematically equivalent to implementing the corresponding POVM of that observable. Consequently, our concentration results apply at the level of the POVM itself, regardless of how it is realized in practice.

To build intuition, consider a simplified `measure first' scheme for estimating expectation values of Pauli-$Z$ operators. Here, measurements are performed solely in the computational basis (i.e., Pauli-$Z$ basis). 
After $N$ measurement outcomes, the classical representation takes the form
\begin{align}
    \hat{\rho}_{Z} = \frac{1}{N} \sum_{i=1}^{N} |b_i\rangle\langle b_i| \; ,
\end{align}
where $|b_i\rangle$ denotes the computational basis state corresponding to the $i^{\rm th}$ outcome bitstring. Suppose we are now interested in estimating the observable $Z_1 Z_2$. Using the classical representation $\hat{\rho}_Z$, the estimate is obtained by tracing out all irrelevant qubits, yielding $\Tr[Z_1 Z_2 \hat{\rho}_Z] = \frac{1}{N} \sum_{i=1}^{N} \langle b_i^{(12)}| Z_1Z_2 |b_i^{(12)}\rangle$ where $|b_i^{(12)}\rangle$ is the reduced two-qubit state on qubits $1$ and $2$. In other words, the combined process of i) constructing $\hat{\rho}_Z$ and ii) post-processing it once the observable is specified, is equivalent to implementing the POVM $\MC = \{ |00\rangle\langle00|_{12} + | 11\rangle\langle 11|_{12},  |01\rangle\langle01|_{12} + | 10\rangle\langle 10|_{12}\}$. 

The same reasoning extends to more general classical shadow techniques, such as those for estimating local Pauli observables. In these methods, the procedure can be understood as a two-step realization of POVMs: first, randomized local Pauli measurements are performed to construct a classical representation of the state; second, this representation is post-processed once the target observable is specified. Crucially, the randomness in the measurement step does not alter the underlying POVM associated with the observable. Together this approach provides an alternative physical implementation of the POVMs, equivalent to directly measuring term by term.

\medskip

\begin{figure}
\centerline{\includegraphics[width=\linewidth]{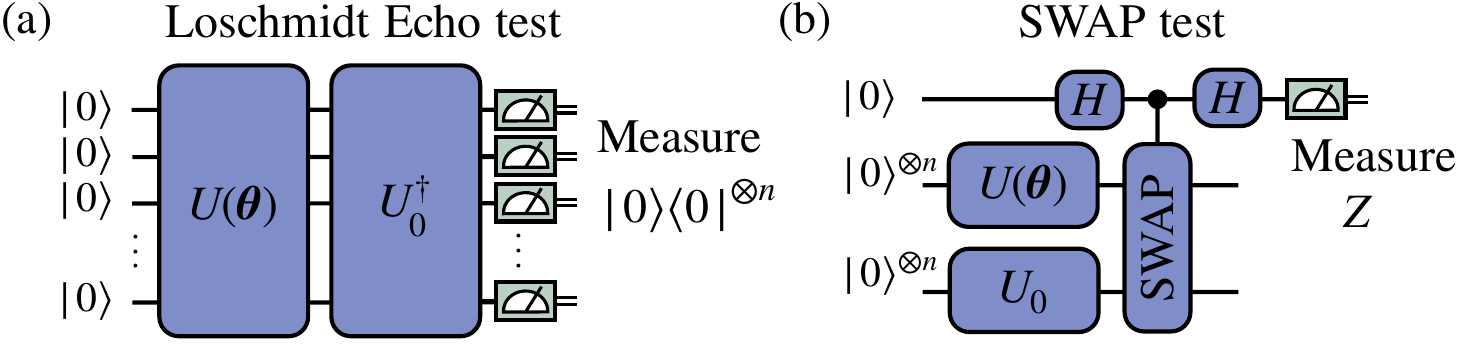}}
\caption{\textbf{Two different POVMs for estimating fidelity.} Panel~(a) illustrates the Loschmidt echo test, while panel~(b) schematically depicts the SWAP test.}
\label{fig:swap-overlap}
\end{figure}

\paragraph*{Subtlety regarding the choice of POVM.} 

While Theorem~\ref{thm:main-indistinguishable} is tied to a specific choice of POVM, independently of how that POVM is implemented, the same quantity may be estimated by different POVMs with different convergence rates.
Changing the measurement scheme does not affect the variance over the variable space and so we do not expect such a change in POVM to alleviate the effect of exponential concentration. Rather, the distinct statistical behaviors arising from employing different estimators affect the degree to which the problem suffers from shot noise.

To illustrate this point more explicitly, consider the task of learning a target state $|\phi\rangle = U_0 |0\rangle$ for some unitary $U_0$, with the loss function given by the infidelity $\ell(\thv) = 1 - |\langle \psi(\thv) | \phi\rangle|^2$. We examine two different POVM-based approaches to compute the fidelity, as illustrated in Fig.~\ref{fig:swap-overlap}. Suppose further that the variational state $|\psi(\thv)\rangle = U(\thv)|0\rangle$ forms a unitary 2-design. We will see that while both approaches suffer from concentration, the way statistical indistinguishability manifests differs between the two POVM choices.

The first approach is to employ the Loschmidt Echo test, which corresponds to estimating the expectation value of the global zero projector $|0\rangle\langle0|^{\otimes n}$ with the associated POVM $\MC = \{ |0\rangle\langle0|^{\otimes n} , \mathbb{1} - |0\rangle\langle0|^{\otimes n} \}$. By invoking the formal Theorem~\ref{thm:main-indistinguishable-formal} in Appendix~\ref{app:theorem1-proof}, the fixed distribution takes the form $\P_{\rm fixed} = (0,1)$. In other words, with high probability exponentially close to $1$, we always obtain non-all-zero bitstrings and the estimated fidelity is zero. 

Alternatively, one may employ the SWAP test. In this case, the fidelity is estimated from the expectation value of a single Pauli-$Z$ operator on the ancilla qubit, corresponding to the POVM $\widetilde{\MC} = \{ |0\rangle\langle 0 |, |1\rangle \langle 1 |\}$. Although this POVM also suffers from concentration effects, it yields a different fixed distribution, namely $\widetilde{\P}_{\rm fixed} = (1/2, 1/2)$. That is, with probability exponentially close to one, the measurement outcomes are equally likely to be $+1$ or $-1$.

For quantities corresponding to the expectation value of an observable (as in the case of fidelity), an alternative useful diagnostic, which can be used as a general rule of thumb, is to consider the ratio between the estimator variance and the variance of the quantity over the variable space.
Specifically, after performing $N$ measurement shots with $\rho(\alv)$, we define
\begin{align}
    \epsilon_N (\alv) = \frac{\Var_{\rho(\alv)}[\widehat{\ell}(\alv)]}{ N\,\Var_{\alv}[\ell(\alv)]} \;.
\end{align}
Intuitively, this ratio quantifies the relative precision arising from the interplay between the estimator’s shot noise and the intrinsic fluctuation of the quantity across the parameter space.
That is, it measures the ability of the estimator to resolve between different points on the landscape.
To have sufficient resolution, we require $ \epsilon_N (\alv) \lesssim 1 $. 
In the presence of exponential concentration, achieving such precision necessitates an exponentially large number of measurement shots, $N \in \Omega(\exp(n))$. 

We now apply this to the example of estimating the fidelity, $F_{\thv} = | \langle \psi(\thv) | \phi \rangle |^2$. In this case, the variance over the landscape remains exponentially small, independent of the chosen measurement scheme. However, the estimator variances differ. For the SWAP test, we have $\Var^{\rm(SWAP)}_{\rho(\theta)}[\widehat{\ell}(\thv)] = 1 - F_{\thv}^2$, whereas for the Loschmidt Echo test the variance is of the form $\Var^{\rm(LE)}_{\rho(\theta)}[\widehat{\ell}(\thv)] = F_{\thv}\,(1-F_{\thv})$. These distinct forms of variance result in different statistical behaviors of the corresponding estimators.

As another example, consider the task of estimating the purity of a state $\rho(\thv)$ given by $\Tr[\rho^2(\thv)]$. In this case, even in the absence of exponential concentration, estimating the purity from a single copy of $\rho(\thv)$ is inefficient, as it requires exponentially many distinct measurement settings followed by classical post-processing. In contrast, by employing two copies of the state together with an entangled measurement such as the SWAP test, the purity can be estimated efficiently. Thus, changing the POVM and allowing access to multiple copies provides an advantage~\cite{huang2021quantum}. Nevertheless, if we assume again $\rho(\thv)$ to form a $2$-design ensemble, the variance of the purity over the landscape (i.e., the dominator of the ratio) remains exponentially small regardless of the measurement schemes. Consequently, the advantage from employing the entangled POVM vanishes and the problem remains statistically indistinguishable.

Altogether, this raises a natural question: does there exist a POVM that could fundamentally alter the indistinguishability conclusion?
To the best of our knowledge, such a possibility appears unlikely. However, it cannot be definitively ruled out; for example, appropriately designed entangled POVMs for estimating quantities with strong global correlations may potentially help circumventing concentration. That said, the central principle of our message remains: rigorous analysis must be carried out with respect to the POVMs actually employed in practice.

\section{Discussion}
\label{sec:discussion}

Our work provides guidelines for assessing whether a given quantum model can avoid the effects of exponential concentration. These guidelines are grounded in a simple observation: any variational quantum procedure involves estimating certain quantities using quantum hardware followed by classical post-processing. As such, analysis should focus on concentration at the level of outcome probabilities rather than expectation values. Using tools from hypothesis testing, we show that for POVM measurements with a polynomial number of elements, exponential concentration of outcome probabilities implies that the measurement outcomes—together with any post-processing—contain no information about the variables (Theorem~\ref{thm:main-indistinguishable} and Corollary~\ref{coro:no-post-process}). 

We use these guidelines to critically re-examine several training strategies previously discussed as barren-plateau-free, including natural gradient descents~\cite{stokes2020quantum}, sample-based CVaR optimization~\cite{barkoutsos2020improving}, agnostic classical neural network-assisted initialization~\cite{friedrich2022avoiding}, and a rescaled gradient approach~\cite{Teo_2023}, with possible extensions to others such as~\cite{rad2022surviving, Fa_lde_2023, kashif2023resqnets}. More concretely, we argue that while these methods may well have other benefits, they are still prone to the effects of exponential concentration in a similar manner to more standard methods~\cite{mcclean2018barren, arrasmith2020effect,cerezo2020impact}. 

We now discuss two scenarios that are not captured by the general procedure and, consequently, fall outside the scope of the associated guidelines. The first scenario arises when the number of physical quantities scales exponentially with the system size, while the number of POVM elements in each POVM remains polynomial. Specifically, we consider the case where $N_{\ell} \in \Omega(\exp(n))$ and $|\povm^{(i)}| \in \mathcal{O}(\mathrm{poly}(n))$. A practical example of this setting is generative modeling with an explicit loss function, such as the Kullback–Leibler Divergence~(KLD), where the objective is to train a parameterized model to reproduce a target distribution over computational basis bit-strings. Here the explicit loss consists of exponentially many terms; each comparing the probability assigned to an individual bit-string. The POVM associated with estimating each such probability consists of only two elements (i.e., for a bit-string $\vec{x}$, the POVM is $\{ |\vec{x}\rangle\langle\vec{x}|,\; \mathbb{1} - |\vec{x}\rangle\langle\vec{x}| \}$). Moreover, all of these probabilities can, in principle, be estimated simultaneously via measurements in the computational basis. The subtlety, however, as studied in Ref.~\cite{rudolph2023trainability}, is that sampling a polynomial number of bit-strings only allows a polynomial number of terms in the loss to be assigned non-zero values. Since we cannot control which bit-strings are sampled, this results in a mismatch between the sampled loss terms and those present in the training distribution. Consequently, the estimated explicit loss fails to reliably assess the similarity between the model and the target distribution, ultimately inhibiting training.

Strategies that require POVM measurements with exponentially many elements also do not fall within the scope of our guidelines and could, in principle, offer a path toward avoiding the effects of exponential concentration. In such cases, the hypothesis testing analysis breaks down: although each individual outcome probability may be exponentially close to a concentration point, the total number of possible outcomes is also exponential. To highlight this explicitly, Appendix~\ref{app:counter-example} presents a simple counterexample showing that two probability distributions with exponentially many elements—whose individual components are exponentially close—can still be distinguishable, provided information about both distributions is available.
That said, we are not aware of any current proposals in variational quantum computing or quantum machine learning that make use of such exponential POVMs. As discussed in Section~\ref{sec:procedure}, while some procedures may initially appear to require exponential POVM measurements, they are more accurately interpreted as involving multiple polynomial POVMs in disguise. Nonetheless, exploring this gap in our argument remains an intriguing direction for the hunt to vanquish barren plateaus.

\medskip

\section{Acknowledgement}
RAS acknowledges support from the Swiss National Science Foundation [grant number 200021-219329]. ZH acknowledges support from the Sandoz Family Foundation-Monique de Meuron program for Academic Promotion. ST acknowledges Exchange Faculty Travel Grant: TG168033 from Chulalongkorn University, as well as funding from National Research Council of Thailand
(NRCT) [grant number N42A680126].

\bibliography{bibliography.bib,quantum.bib}

\begin{thebibliography}{101}%
\makeatletter
\providecommand \@ifxundefined [1]{%
 \@ifx{#1\undefined}
}%
\providecommand \@ifnum [1]{%
 \ifnum #1\expandafter \@firstoftwo
 \else \expandafter \@secondoftwo
 \fi
}%
\providecommand \@ifx [1]{%
 \ifx #1\expandafter \@firstoftwo
 \else \expandafter \@secondoftwo
 \fi
}%
\providecommand \natexlab [1]{#1}%
\providecommand \enquote  [1]{``#1''}%
\providecommand \bibnamefont  [1]{#1}%
\providecommand \bibfnamefont [1]{#1}%
\providecommand \citenamefont [1]{#1}%
\providecommand \href@noop [0]{\@secondoftwo}%
\providecommand \href [0]{\begingroup \@sanitize@url \@href}%
\providecommand \@href[1]{\@@startlink{#1}\@@href}%
\providecommand \@@href[1]{\endgroup#1\@@endlink}%
\providecommand \@sanitize@url [0]{\catcode `\\12\catcode `\$12\catcode `\&12\catcode `\#12\catcode `\^12\catcode `\_12\catcode `\%12\relax}%
\providecommand \@@startlink[1]{}%
\providecommand \@@endlink[0]{}%
\providecommand \url  [0]{\begingroup\@sanitize@url \@url }%
\providecommand \@url [1]{\endgroup\@href {#1}{\urlprefix }}%
\providecommand \urlprefix  [0]{URL }%
\providecommand \Eprint [0]{\href }%
\providecommand \doibase [0]{https://doi.org/}%
\providecommand \selectlanguage [0]{\@gobble}%
\providecommand \bibinfo  [0]{\@secondoftwo}%
\providecommand \bibfield  [0]{\@secondoftwo}%
\providecommand \translation [1]{[#1]}%
\providecommand \BibitemOpen [0]{}%
\providecommand \bibitemStop [0]{}%
\providecommand \bibitemNoStop [0]{.\EOS\space}%
\providecommand \EOS [0]{\spacefactor3000\relax}%
\providecommand \BibitemShut  [1]{\csname bibitem#1\endcsname}%
\let\auto@bib@innerbib\@empty
\bibitem [{\citenamefont {Cerezo}\ \emph {et~al.}(2021{\natexlab{a}})\citenamefont {Cerezo}, \citenamefont {Arrasmith}, \citenamefont {Babbush}, \citenamefont {Benjamin}, \citenamefont {Endo}, \citenamefont {Fujii}, \citenamefont {McClean}, \citenamefont {Mitarai}, \citenamefont {Yuan}, \citenamefont {Cincio},\ and\ \citenamefont {Coles}}]{cerezo2020variationalreview}%
  \BibitemOpen
  \bibfield  {author} {\bibinfo {author} {\bibfnamefont {M.}~\bibnamefont {Cerezo}}, \bibinfo {author} {\bibfnamefont {A.}~\bibnamefont {Arrasmith}}, \bibinfo {author} {\bibfnamefont {R.}~\bibnamefont {Babbush}}, \bibinfo {author} {\bibfnamefont {S.~C.}\ \bibnamefont {Benjamin}}, \bibinfo {author} {\bibfnamefont {S.}~\bibnamefont {Endo}}, \bibinfo {author} {\bibfnamefont {K.}~\bibnamefont {Fujii}}, \bibinfo {author} {\bibfnamefont {J.~R.}\ \bibnamefont {McClean}}, \bibinfo {author} {\bibfnamefont {K.}~\bibnamefont {Mitarai}}, \bibinfo {author} {\bibfnamefont {X.}~\bibnamefont {Yuan}}, \bibinfo {author} {\bibfnamefont {L.}~\bibnamefont {Cincio}},\ and\ \bibinfo {author} {\bibfnamefont {P.~J.}\ \bibnamefont {Coles}},\ }\bibfield  {title} {\bibinfo {title} {Variational quantum algorithms},\ }\href {https://doi.org/10.1038/s42254-021-00348-9} {\bibfield  {journal} {\bibinfo  {journal} {Nature Reviews Physics}\ }\textbf {\bibinfo {volume} {3}},\ \bibinfo {pages} {625–644} (\bibinfo {year}
  {2021}{\natexlab{a}})}\BibitemShut {NoStop}%
\bibitem [{\citenamefont {Bharti}\ \emph {et~al.}(2022)\citenamefont {Bharti}, \citenamefont {Cervera-Lierta}, \citenamefont {Kyaw}, \citenamefont {Haug}, \citenamefont {Alperin-Lea}, \citenamefont {Anand}, \citenamefont {Degroote}, \citenamefont {Heimonen}, \citenamefont {Kottmann}, \citenamefont {Menke} \emph {et~al.}}]{bharti2021noisy}%
  \BibitemOpen
  \bibfield  {author} {\bibinfo {author} {\bibfnamefont {K.}~\bibnamefont {Bharti}}, \bibinfo {author} {\bibfnamefont {A.}~\bibnamefont {Cervera-Lierta}}, \bibinfo {author} {\bibfnamefont {T.~H.}\ \bibnamefont {Kyaw}}, \bibinfo {author} {\bibfnamefont {T.}~\bibnamefont {Haug}}, \bibinfo {author} {\bibfnamefont {S.}~\bibnamefont {Alperin-Lea}}, \bibinfo {author} {\bibfnamefont {A.}~\bibnamefont {Anand}}, \bibinfo {author} {\bibfnamefont {M.}~\bibnamefont {Degroote}}, \bibinfo {author} {\bibfnamefont {H.}~\bibnamefont {Heimonen}}, \bibinfo {author} {\bibfnamefont {J.~S.}\ \bibnamefont {Kottmann}}, \bibinfo {author} {\bibfnamefont {T.}~\bibnamefont {Menke}}, \emph {et~al.},\ }\bibfield  {title} {\bibinfo {title} {Noisy intermediate-scale quantum algorithms},\ }\href {https://doi.org/10.1103/RevModPhys.94.015004} {\bibfield  {journal} {\bibinfo  {journal} {Reviews of Modern Physics}\ }\textbf {\bibinfo {volume} {94}},\ \bibinfo {pages} {015004} (\bibinfo {year} {2022})}\BibitemShut {NoStop}%
\bibitem [{\citenamefont {Abbas}\ \emph {et~al.}(2023)\citenamefont {Abbas}, \citenamefont {King}, \citenamefont {Huang}, \citenamefont {Huggins}, \citenamefont {Movassagh}, \citenamefont {Gilboa},\ and\ \citenamefont {McClean}}]{abbas2023quantum}%
  \BibitemOpen
  \bibfield  {author} {\bibinfo {author} {\bibfnamefont {A.}~\bibnamefont {Abbas}}, \bibinfo {author} {\bibfnamefont {R.}~\bibnamefont {King}}, \bibinfo {author} {\bibfnamefont {H.-Y.}\ \bibnamefont {Huang}}, \bibinfo {author} {\bibfnamefont {W.~J.}\ \bibnamefont {Huggins}}, \bibinfo {author} {\bibfnamefont {R.}~\bibnamefont {Movassagh}}, \bibinfo {author} {\bibfnamefont {D.}~\bibnamefont {Gilboa}},\ and\ \bibinfo {author} {\bibfnamefont {J.~R.}\ \bibnamefont {McClean}},\ }\bibfield  {title} {\bibinfo {title} {On quantum backpropagation, information reuse, and cheating measurement collapse},\ }\href {https://arxiv.org/abs/2305.13362} {\bibfield  {journal} {\bibinfo  {journal} {arXiv preprint arXiv:2305.13362}\ } (\bibinfo {year} {2023})}\BibitemShut {NoStop}%
\bibitem [{\citenamefont {McArdle}\ \emph {et~al.}(2020)\citenamefont {McArdle}, \citenamefont {Endo}, \citenamefont {Aspuru-Guzik}, \citenamefont {Benjamin},\ and\ \citenamefont {Yuan}}]{mcardle2020quantum}%
  \BibitemOpen
  \bibfield  {author} {\bibinfo {author} {\bibfnamefont {S.}~\bibnamefont {McArdle}}, \bibinfo {author} {\bibfnamefont {S.}~\bibnamefont {Endo}}, \bibinfo {author} {\bibfnamefont {A.}~\bibnamefont {Aspuru-Guzik}}, \bibinfo {author} {\bibfnamefont {S.~C.}\ \bibnamefont {Benjamin}},\ and\ \bibinfo {author} {\bibfnamefont {X.}~\bibnamefont {Yuan}},\ }\bibfield  {title} {\bibinfo {title} {Quantum computational chemistry},\ }\href {https://doi.org/10.1103/RevModPhys.92.015003} {\bibfield  {journal} {\bibinfo  {journal} {Reviews of Modern Physics}\ }\textbf {\bibinfo {volume} {92}},\ \bibinfo {pages} {015003} (\bibinfo {year} {2020})}\BibitemShut {NoStop}%
\bibitem [{\citenamefont {McClean}\ \emph {et~al.}(2016)\citenamefont {McClean}, \citenamefont {Romero}, \citenamefont {Babbush},\ and\ \citenamefont {Aspuru-Guzik}}]{mcclean2016theory}%
  \BibitemOpen
  \bibfield  {author} {\bibinfo {author} {\bibfnamefont {J.~R.}\ \bibnamefont {McClean}}, \bibinfo {author} {\bibfnamefont {J.}~\bibnamefont {Romero}}, \bibinfo {author} {\bibfnamefont {R.}~\bibnamefont {Babbush}},\ and\ \bibinfo {author} {\bibfnamefont {A.}~\bibnamefont {Aspuru-Guzik}},\ }\bibfield  {title} {\bibinfo {title} {The theory of variational hybrid quantum-classical algorithms},\ }\href {https://doi.org/10.1007/978-94-015-8330-5_4} {\bibfield  {journal} {\bibinfo  {journal} {New Journal of Physics}\ }\textbf {\bibinfo {volume} {18}},\ \bibinfo {pages} {023023} (\bibinfo {year} {2016})}\BibitemShut {NoStop}%
\bibitem [{\citenamefont {P{\'e}rez-Salinas}\ \emph {et~al.}(2020)\citenamefont {P{\'e}rez-Salinas}, \citenamefont {Cervera-Lierta}, \citenamefont {Gil-Fuster},\ and\ \citenamefont {Latorre}}]{perez2020data}%
  \BibitemOpen
  \bibfield  {author} {\bibinfo {author} {\bibfnamefont {A.}~\bibnamefont {P{\'e}rez-Salinas}}, \bibinfo {author} {\bibfnamefont {A.}~\bibnamefont {Cervera-Lierta}}, \bibinfo {author} {\bibfnamefont {E.}~\bibnamefont {Gil-Fuster}},\ and\ \bibinfo {author} {\bibfnamefont {J.~I.}\ \bibnamefont {Latorre}},\ }\bibfield  {title} {\bibinfo {title} {Data re-uploading for a universal quantum classifier},\ }\href {https://doi.org/doi.org/10.22331/q-2020-02-06-226} {\bibfield  {journal} {\bibinfo  {journal} {Quantum}\ }\textbf {\bibinfo {volume} {4}},\ \bibinfo {pages} {226} (\bibinfo {year} {2020})}\BibitemShut {NoStop}%
\bibitem [{\citenamefont {Mitarai}\ \emph {et~al.}(2018)\citenamefont {Mitarai}, \citenamefont {Negoro}, \citenamefont {Kitagawa},\ and\ \citenamefont {Fujii}}]{mitarai2018quantum}%
  \BibitemOpen
  \bibfield  {author} {\bibinfo {author} {\bibfnamefont {K.}~\bibnamefont {Mitarai}}, \bibinfo {author} {\bibfnamefont {M.}~\bibnamefont {Negoro}}, \bibinfo {author} {\bibfnamefont {M.}~\bibnamefont {Kitagawa}},\ and\ \bibinfo {author} {\bibfnamefont {K.}~\bibnamefont {Fujii}},\ }\bibfield  {title} {\bibinfo {title} {Quantum circuit learning},\ }\href {https://doi.org/10.1103/PhysRevA.98.032309} {\bibfield  {journal} {\bibinfo  {journal} {Physical Review A}\ }\textbf {\bibinfo {volume} {98}},\ \bibinfo {pages} {032309} (\bibinfo {year} {2018})}\BibitemShut {NoStop}%
\bibitem [{\citenamefont {Schuld}(2021)}]{schuld2021supervised}%
  \BibitemOpen
  \bibfield  {author} {\bibinfo {author} {\bibfnamefont {M.}~\bibnamefont {Schuld}},\ }\bibfield  {title} {\bibinfo {title} {Supervised quantum machine learning models are kernel methods},\ }\href {https://arxiv.org/abs/2101.11020} {\bibfield  {journal} {\bibinfo  {journal} {arXiv preprint arXiv:2101.11020}\ } (\bibinfo {year} {2021})}\BibitemShut {NoStop}%
\bibitem [{\citenamefont {Havl{\'\i}{\v{c}}ek}\ \emph {et~al.}(2019)\citenamefont {Havl{\'\i}{\v{c}}ek}, \citenamefont {C{\'o}rcoles}, \citenamefont {Temme}, \citenamefont {Harrow}, \citenamefont {Kandala}, \citenamefont {Chow},\ and\ \citenamefont {Gambetta}}]{havlivcek2019supervised}%
  \BibitemOpen
  \bibfield  {author} {\bibinfo {author} {\bibfnamefont {V.}~\bibnamefont {Havl{\'\i}{\v{c}}ek}}, \bibinfo {author} {\bibfnamefont {A.~D.}\ \bibnamefont {C{\'o}rcoles}}, \bibinfo {author} {\bibfnamefont {K.}~\bibnamefont {Temme}}, \bibinfo {author} {\bibfnamefont {A.~W.}\ \bibnamefont {Harrow}}, \bibinfo {author} {\bibfnamefont {A.}~\bibnamefont {Kandala}}, \bibinfo {author} {\bibfnamefont {J.~M.}\ \bibnamefont {Chow}},\ and\ \bibinfo {author} {\bibfnamefont {J.~M.}\ \bibnamefont {Gambetta}},\ }\bibfield  {title} {\bibinfo {title} {Supervised learning with quantum-enhanced feature spaces},\ }\href {https://doi.org/10.1038/s41586-019-0980-2} {\bibfield  {journal} {\bibinfo  {journal} {Nature}\ }\textbf {\bibinfo {volume} {567}},\ \bibinfo {pages} {209} (\bibinfo {year} {2019})}\BibitemShut {NoStop}%
\bibitem [{\citenamefont {Thanasilp}\ \emph {et~al.}(2024)\citenamefont {Thanasilp}, \citenamefont {Wang}, \citenamefont {Cerezo},\ and\ \citenamefont {Holmes}}]{thanasilp2022exponential}%
  \BibitemOpen
  \bibfield  {author} {\bibinfo {author} {\bibfnamefont {S.}~\bibnamefont {Thanasilp}}, \bibinfo {author} {\bibfnamefont {S.}~\bibnamefont {Wang}}, \bibinfo {author} {\bibfnamefont {M.}~\bibnamefont {Cerezo}},\ and\ \bibinfo {author} {\bibfnamefont {Z.}~\bibnamefont {Holmes}},\ }\bibfield  {title} {\bibinfo {title} {Exponential concentration in quantum kernel methods},\ }\href {https://doi.org/10.1038/s41467-024-49287-w} {\bibfield  {journal} {\bibinfo  {journal} {Nature Communications}\ }\textbf {\bibinfo {volume} {15}},\ \bibinfo {pages} {5200} (\bibinfo {year} {2024})}\BibitemShut {NoStop}%
\bibitem [{\citenamefont {K{\"u}bler}\ \emph {et~al.}(2021)\citenamefont {K{\"u}bler}, \citenamefont {Buchholz},\ and\ \citenamefont {Sch{\"o}lkopf}}]{kubler2021inductive}%
  \BibitemOpen
  \bibfield  {author} {\bibinfo {author} {\bibfnamefont {J.}~\bibnamefont {K{\"u}bler}}, \bibinfo {author} {\bibfnamefont {S.}~\bibnamefont {Buchholz}},\ and\ \bibinfo {author} {\bibfnamefont {B.}~\bibnamefont {Sch{\"o}lkopf}},\ }\bibfield  {title} {\bibinfo {title} {The inductive bias of quantum kernels},\ }\href {https://proceedings.neurips.cc/paper/2021/hash/69adc1e107f7f7d035d7baf04342e1ca-Abstract.html} {\bibfield  {journal} {\bibinfo  {journal} {Advances in Neural Information Processing Systems}\ }\textbf {\bibinfo {volume} {34}},\ \bibinfo {pages} {12661} (\bibinfo {year} {2021})}\BibitemShut {NoStop}%
\bibitem [{\citenamefont {Huang}\ \emph {et~al.}(2021)\citenamefont {Huang}, \citenamefont {Broughton}, \citenamefont {Mohseni}, \citenamefont {Babbush}, \citenamefont {Boixo}, \citenamefont {Neven},\ and\ \citenamefont {McClean}}]{huang2021power}%
  \BibitemOpen
  \bibfield  {author} {\bibinfo {author} {\bibfnamefont {H.-Y.}\ \bibnamefont {Huang}}, \bibinfo {author} {\bibfnamefont {M.}~\bibnamefont {Broughton}}, \bibinfo {author} {\bibfnamefont {M.}~\bibnamefont {Mohseni}}, \bibinfo {author} {\bibfnamefont {R.}~\bibnamefont {Babbush}}, \bibinfo {author} {\bibfnamefont {S.}~\bibnamefont {Boixo}}, \bibinfo {author} {\bibfnamefont {H.}~\bibnamefont {Neven}},\ and\ \bibinfo {author} {\bibfnamefont {J.~R.}\ \bibnamefont {McClean}},\ }\bibfield  {title} {\bibinfo {title} {Power of data in quantum machine learning},\ }\href {https://doi.org/10.1038/s41467-021-22539-9} {\bibfield  {journal} {\bibinfo  {journal} {Nature {C}ommunications}\ }\textbf {\bibinfo {volume} {12}},\ \bibinfo {pages} {1} (\bibinfo {year} {2021})}\BibitemShut {NoStop}%
\bibitem [{\citenamefont {Gentinetta}\ \emph {et~al.}(2022)\citenamefont {Gentinetta}, \citenamefont {Thomsen}, \citenamefont {Sutter},\ and\ \citenamefont {Woerner}}]{gentinetta2022complexity}%
  \BibitemOpen
  \bibfield  {author} {\bibinfo {author} {\bibfnamefont {G.}~\bibnamefont {Gentinetta}}, \bibinfo {author} {\bibfnamefont {A.}~\bibnamefont {Thomsen}}, \bibinfo {author} {\bibfnamefont {D.}~\bibnamefont {Sutter}},\ and\ \bibinfo {author} {\bibfnamefont {S.}~\bibnamefont {Woerner}},\ }\bibfield  {title} {\bibinfo {title} {The complexity of quantum support vector machines},\ }\href {https://arxiv.org/abs/2203.00031} {\bibfield  {journal} {\bibinfo  {journal} {arXiv preprint arXiv:2203.00031}\ } (\bibinfo {year} {2022})}\BibitemShut {NoStop}%
\bibitem [{\citenamefont {Gan}\ \emph {et~al.}(2023)\citenamefont {Gan}, \citenamefont {Leykam},\ and\ \citenamefont {Thanasilp}}]{gan2023unified}%
  \BibitemOpen
  \bibfield  {author} {\bibinfo {author} {\bibfnamefont {B.~Y.}\ \bibnamefont {Gan}}, \bibinfo {author} {\bibfnamefont {D.}~\bibnamefont {Leykam}},\ and\ \bibinfo {author} {\bibfnamefont {S.}~\bibnamefont {Thanasilp}},\ }\bibfield  {title} {\bibinfo {title} {A unified framework for trace-induced quantum kernels},\ }\bibfield  {journal} {\bibinfo  {journal} {arXiv preprint arXiv:2311.13552}\ }\href {https://doi.org/https://doi.org/10.48550/arXiv.2311.13552} {https://doi.org/10.48550/arXiv.2311.13552} (\bibinfo {year} {2023})\BibitemShut {NoStop}%
\bibitem [{\citenamefont {Zimbor{\'a}s}\ \emph {et~al.}(2025)\citenamefont {Zimbor{\'a}s}, \citenamefont {Koczor}, \citenamefont {Holmes}, \citenamefont {Borrelli}, \citenamefont {Gily{\'e}n}, \citenamefont {Huang}, \citenamefont {Cai}, \citenamefont {Ac{\'\i}n}, \citenamefont {Aolita}, \citenamefont {Banchi} \emph {et~al.}}]{zimboras2025myths}%
  \BibitemOpen
  \bibfield  {author} {\bibinfo {author} {\bibfnamefont {Z.}~\bibnamefont {Zimbor{\'a}s}}, \bibinfo {author} {\bibfnamefont {B.}~\bibnamefont {Koczor}}, \bibinfo {author} {\bibfnamefont {Z.}~\bibnamefont {Holmes}}, \bibinfo {author} {\bibfnamefont {E.-M.}\ \bibnamefont {Borrelli}}, \bibinfo {author} {\bibfnamefont {A.}~\bibnamefont {Gily{\'e}n}}, \bibinfo {author} {\bibfnamefont {H.-Y.}\ \bibnamefont {Huang}}, \bibinfo {author} {\bibfnamefont {Z.}~\bibnamefont {Cai}}, \bibinfo {author} {\bibfnamefont {A.}~\bibnamefont {Ac{\'\i}n}}, \bibinfo {author} {\bibfnamefont {L.}~\bibnamefont {Aolita}}, \bibinfo {author} {\bibfnamefont {L.}~\bibnamefont {Banchi}}, \emph {et~al.},\ }\bibfield  {title} {\bibinfo {title} {Myths around quantum computation before full fault tolerance: What no-go theorems rule out and what they don't},\ }\bibfield  {journal} {\bibinfo  {journal} {arXiv preprint arXiv:2501.05694}\ }\href {https://doi.org/https://doi.org/10.48550/arXiv.2501.05694} {https://doi.org/10.48550/arXiv.2501.05694}
  (\bibinfo {year} {2025})\BibitemShut {NoStop}%
\bibitem [{\citenamefont {Anschuetz}\ and\ \citenamefont {Kiani}(2022)}]{anschuetz2022quantum}%
  \BibitemOpen
  \bibfield  {author} {\bibinfo {author} {\bibfnamefont {E.~R.}\ \bibnamefont {Anschuetz}}\ and\ \bibinfo {author} {\bibfnamefont {B.~T.}\ \bibnamefont {Kiani}},\ }\bibfield  {title} {\bibinfo {title} {Quantum variational algorithms are swamped with traps},\ }\href {https://doi.org/10.1038/s41467-022-35364-5} {\bibfield  {journal} {\bibinfo  {journal} {Nature Communications}\ }\textbf {\bibinfo {volume} {13}},\ \bibinfo {pages} {7760} (\bibinfo {year} {2022})}\BibitemShut {NoStop}%
\bibitem [{\citenamefont {Larocca}\ \emph {et~al.}(2023)\citenamefont {Larocca}, \citenamefont {Ju}, \citenamefont {García-Martín}, \citenamefont {Coles},\ and\ \citenamefont {Cerezo}}]{larocca2021theory}%
  \BibitemOpen
  \bibfield  {author} {\bibinfo {author} {\bibfnamefont {M.}~\bibnamefont {Larocca}}, \bibinfo {author} {\bibfnamefont {N.}~\bibnamefont {Ju}}, \bibinfo {author} {\bibfnamefont {D.}~\bibnamefont {García-Martín}}, \bibinfo {author} {\bibfnamefont {P.~J.}\ \bibnamefont {Coles}},\ and\ \bibinfo {author} {\bibfnamefont {M.}~\bibnamefont {Cerezo}},\ }\bibfield  {title} {\bibinfo {title} {Theory of overparametrization in quantum neural networks},\ }\href {https://doi.org/https://doi.org/10.1038/s43588-023-00467-6} {\bibfield  {journal} {\bibinfo  {journal} {Nature Computational Science}\ }\textbf {\bibinfo {volume} {3}},\ \bibinfo {pages} {542} (\bibinfo {year} {2023})}\BibitemShut {NoStop}%
\bibitem [{\citenamefont {Schreiber}\ \emph {et~al.}(2023)\citenamefont {Schreiber}, \citenamefont {Eisert},\ and\ \citenamefont {Meyer}}]{schreiber2022classical}%
  \BibitemOpen
  \bibfield  {author} {\bibinfo {author} {\bibfnamefont {F.~J.}\ \bibnamefont {Schreiber}}, \bibinfo {author} {\bibfnamefont {J.}~\bibnamefont {Eisert}},\ and\ \bibinfo {author} {\bibfnamefont {J.~J.}\ \bibnamefont {Meyer}},\ }\bibfield  {title} {\bibinfo {title} {Classical surrogates for quantum learning models},\ }\href {https://doi.org/https://doi.org/10.1103/PhysRevLett.131.100803} {\bibfield  {journal} {\bibinfo  {journal} {Physical Review Letters}\ }\textbf {\bibinfo {volume} {131}},\ \bibinfo {pages} {100803} (\bibinfo {year} {2023})}\BibitemShut {NoStop}%
\bibitem [{\citenamefont {Sweke}\ \emph {et~al.}(2025)\citenamefont {Sweke}, \citenamefont {Recio}, \citenamefont {Jerbi}, \citenamefont {Gil-Fuster}, \citenamefont {Fuller}, \citenamefont {Eisert},\ and\ \citenamefont {Meyer}}]{sweke2023potential}%
  \BibitemOpen
  \bibfield  {author} {\bibinfo {author} {\bibfnamefont {R.}~\bibnamefont {Sweke}}, \bibinfo {author} {\bibfnamefont {E.}~\bibnamefont {Recio}}, \bibinfo {author} {\bibfnamefont {S.}~\bibnamefont {Jerbi}}, \bibinfo {author} {\bibfnamefont {E.}~\bibnamefont {Gil-Fuster}}, \bibinfo {author} {\bibfnamefont {B.}~\bibnamefont {Fuller}}, \bibinfo {author} {\bibfnamefont {J.}~\bibnamefont {Eisert}},\ and\ \bibinfo {author} {\bibfnamefont {J.~J.}\ \bibnamefont {Meyer}},\ }\bibfield  {title} {\bibinfo {title} {Potential and limitations of random fourier features for dequantizing quantum machine learning},\ }\href {https://doi.org/10.22331/q-2025-02-20-1640} {\bibfield  {journal} {\bibinfo  {journal} {Quantum}\ }\textbf {\bibinfo {volume} {9}},\ \bibinfo {pages} {1640} (\bibinfo {year} {2025})}\BibitemShut {NoStop}%
\bibitem [{\citenamefont {Sahebi}\ \emph {et~al.}(2025)\citenamefont {Sahebi}, \citenamefont {Barthe}, \citenamefont {Suzuki}, \citenamefont {Holmes},\ and\ \citenamefont {Grossi}}]{sahebi2025dequantization}%
  \BibitemOpen
  \bibfield  {author} {\bibinfo {author} {\bibfnamefont {M.}~\bibnamefont {Sahebi}}, \bibinfo {author} {\bibfnamefont {A.}~\bibnamefont {Barthe}}, \bibinfo {author} {\bibfnamefont {Y.}~\bibnamefont {Suzuki}}, \bibinfo {author} {\bibfnamefont {Z.}~\bibnamefont {Holmes}},\ and\ \bibinfo {author} {\bibfnamefont {M.}~\bibnamefont {Grossi}},\ }\bibfield  {title} {\bibinfo {title} {On dequantization of supervised quantum machine learning via random fourier features},\ }\href {https://arxiv.org/abs/2505.15902} {\bibfield  {journal} {\bibinfo  {journal} {arXiv preprint arXiv:2505.15902}\ } (\bibinfo {year} {2025})}\BibitemShut {NoStop}%
\bibitem [{\citenamefont {Rudolph}\ \emph {et~al.}(2025)\citenamefont {Rudolph}, \citenamefont {Jones}, \citenamefont {Teng}, \citenamefont {Angrisani},\ and\ \citenamefont {Holmes}}]{rudolph2025pauli}%
  \BibitemOpen
  \bibfield  {author} {\bibinfo {author} {\bibfnamefont {M.~S.}\ \bibnamefont {Rudolph}}, \bibinfo {author} {\bibfnamefont {T.}~\bibnamefont {Jones}}, \bibinfo {author} {\bibfnamefont {Y.}~\bibnamefont {Teng}}, \bibinfo {author} {\bibfnamefont {A.}~\bibnamefont {Angrisani}},\ and\ \bibinfo {author} {\bibfnamefont {Z.}~\bibnamefont {Holmes}},\ }\bibfield  {title} {\bibinfo {title} {Pauli propagation: A computational framework for simulating quantum systems},\ }\href {https://arxiv.org/abs/2505.21606} {\bibfield  {journal} {\bibinfo  {journal} {arXiv preprint arXiv:2505.21606}\ } (\bibinfo {year} {2025})}\BibitemShut {NoStop}%
\bibitem [{\citenamefont {Angrisani}\ \emph {et~al.}(2024)\citenamefont {Angrisani}, \citenamefont {Schmidhuber}, \citenamefont {Rudolph}, \citenamefont {Cerezo}, \citenamefont {Holmes},\ and\ \citenamefont {Huang}}]{angrisani2024classically}%
  \BibitemOpen
  \bibfield  {author} {\bibinfo {author} {\bibfnamefont {A.}~\bibnamefont {Angrisani}}, \bibinfo {author} {\bibfnamefont {A.}~\bibnamefont {Schmidhuber}}, \bibinfo {author} {\bibfnamefont {M.~S.}\ \bibnamefont {Rudolph}}, \bibinfo {author} {\bibfnamefont {M.}~\bibnamefont {Cerezo}}, \bibinfo {author} {\bibfnamefont {Z.}~\bibnamefont {Holmes}},\ and\ \bibinfo {author} {\bibfnamefont {H.-Y.}\ \bibnamefont {Huang}},\ }\bibfield  {title} {\bibinfo {title} {Classically estimating observables of noiseless quantum circuits},\ }\href {https://arxiv.org/abs/2409.01706} {\bibfield  {journal} {\bibinfo  {journal} {arXiv preprint arXiv:2409.01706}\ } (\bibinfo {year} {2024})}\BibitemShut {NoStop}%
\bibitem [{\citenamefont {Shin}\ \emph {et~al.}(2024)\citenamefont {Shin}, \citenamefont {Teo},\ and\ \citenamefont {Jeong}}]{shin2024dequantising}%
  \BibitemOpen
  \bibfield  {author} {\bibinfo {author} {\bibfnamefont {S.}~\bibnamefont {Shin}}, \bibinfo {author} {\bibfnamefont {Y.~S.}\ \bibnamefont {Teo}},\ and\ \bibinfo {author} {\bibfnamefont {H.}~\bibnamefont {Jeong}},\ }\bibfield  {title} {\bibinfo {title} {Dequantizing quantum machine learning models using tensor networks},\ }\href {https://doi.org/10.1103/PhysRevResearch.6.023218} {\bibfield  {journal} {\bibinfo  {journal} {Phys. Rev. Res.}\ }\textbf {\bibinfo {volume} {6}},\ \bibinfo {pages} {023218} (\bibinfo {year} {2024})}\BibitemShut {NoStop}%
\bibitem [{\citenamefont {Goh}\ \emph {et~al.}(2023)\citenamefont {Goh}, \citenamefont {Larocca}, \citenamefont {Cincio}, \citenamefont {Cerezo},\ and\ \citenamefont {Sauvage}}]{goh2023lie}%
  \BibitemOpen
  \bibfield  {author} {\bibinfo {author} {\bibfnamefont {M.~L.}\ \bibnamefont {Goh}}, \bibinfo {author} {\bibfnamefont {M.}~\bibnamefont {Larocca}}, \bibinfo {author} {\bibfnamefont {L.}~\bibnamefont {Cincio}}, \bibinfo {author} {\bibfnamefont {M.}~\bibnamefont {Cerezo}},\ and\ \bibinfo {author} {\bibfnamefont {F.}~\bibnamefont {Sauvage}},\ }\bibfield  {title} {\bibinfo {title} {Lie-algebraic classical simulations for quantum computing},\ }\href {https://arxiv.org/abs/2308.01432} {\bibfield  {journal} {\bibinfo  {journal} {arXiv preprint arXiv:2308.01432}\ } (\bibinfo {year} {2023})}\BibitemShut {NoStop}%
\bibitem [{\citenamefont {Kerenidis}\ \emph {et~al.}(2021)\citenamefont {Kerenidis}, \citenamefont {Landman},\ and\ \citenamefont {Mathur}}]{kerenidis2021classical}%
  \BibitemOpen
  \bibfield  {author} {\bibinfo {author} {\bibfnamefont {I.}~\bibnamefont {Kerenidis}}, \bibinfo {author} {\bibfnamefont {J.}~\bibnamefont {Landman}},\ and\ \bibinfo {author} {\bibfnamefont {N.}~\bibnamefont {Mathur}},\ }\bibfield  {title} {\bibinfo {title} {Classical and quantum algorithms for orthogonal neural networks},\ }\href {https://arxiv.org/abs/2106.07198} {\bibfield  {journal} {\bibinfo  {journal} {arXiv preprint arXiv:2106.07198}\ } (\bibinfo {year} {2021})}\BibitemShut {NoStop}%
\bibitem [{\citenamefont {Schuster}\ \emph {et~al.}(2024)\citenamefont {Schuster}, \citenamefont {Yin}, \citenamefont {Gao},\ and\ \citenamefont {Yao}}]{schuster2024polynomial}%
  \BibitemOpen
  \bibfield  {author} {\bibinfo {author} {\bibfnamefont {T.}~\bibnamefont {Schuster}}, \bibinfo {author} {\bibfnamefont {C.}~\bibnamefont {Yin}}, \bibinfo {author} {\bibfnamefont {X.}~\bibnamefont {Gao}},\ and\ \bibinfo {author} {\bibfnamefont {N.~Y.}\ \bibnamefont {Yao}},\ }\bibfield  {title} {\bibinfo {title} {A polynomial-time classical algorithm for noisy quantum circuits},\ }\bibfield  {journal} {\bibinfo  {journal} {arXiv preprint arXiv:2407.12768}\ }\href {https://doi.org/https://doi.org/10.48550/arXiv.2407.12768} {https://doi.org/10.48550/arXiv.2407.12768} (\bibinfo {year} {2024})\BibitemShut {NoStop}%
\bibitem [{\citenamefont {Fontana}\ \emph {et~al.}(2025)\citenamefont {Fontana}, \citenamefont {Rudolph}, \citenamefont {Duncan}, \citenamefont {Rungger},\ and\ \citenamefont {C{\^\i}rstoiu}}]{fontana2023classical}%
  \BibitemOpen
  \bibfield  {author} {\bibinfo {author} {\bibfnamefont {E.}~\bibnamefont {Fontana}}, \bibinfo {author} {\bibfnamefont {M.~S.}\ \bibnamefont {Rudolph}}, \bibinfo {author} {\bibfnamefont {R.}~\bibnamefont {Duncan}}, \bibinfo {author} {\bibfnamefont {I.}~\bibnamefont {Rungger}},\ and\ \bibinfo {author} {\bibfnamefont {C.}~\bibnamefont {C{\^\i}rstoiu}},\ }\bibfield  {title} {\bibinfo {title} {Classical simulations of noisy variational quantum circuits},\ }\href {https://doi.org/https://doi.org/10.1038/s41534-024-00955-1} {\bibfield  {journal} {\bibinfo  {journal} {npj Quantum Information}\ }\textbf {\bibinfo {volume} {11}},\ \bibinfo {pages} {1} (\bibinfo {year} {2025})}\BibitemShut {NoStop}%
\bibitem [{\citenamefont {Larocca}\ \emph {et~al.}(2025)\citenamefont {Larocca}, \citenamefont {Thanasilp}, \citenamefont {Wang}, \citenamefont {Sharma}, \citenamefont {Biamonte}, \citenamefont {Coles}, \citenamefont {Cincio}, \citenamefont {McClean}, \citenamefont {Holmes},\ and\ \citenamefont {Cerezo}}]{larocca2024review}%
  \BibitemOpen
  \bibfield  {author} {\bibinfo {author} {\bibfnamefont {M.}~\bibnamefont {Larocca}}, \bibinfo {author} {\bibfnamefont {S.}~\bibnamefont {Thanasilp}}, \bibinfo {author} {\bibfnamefont {S.}~\bibnamefont {Wang}}, \bibinfo {author} {\bibfnamefont {K.}~\bibnamefont {Sharma}}, \bibinfo {author} {\bibfnamefont {J.}~\bibnamefont {Biamonte}}, \bibinfo {author} {\bibfnamefont {P.~J.}\ \bibnamefont {Coles}}, \bibinfo {author} {\bibfnamefont {L.}~\bibnamefont {Cincio}}, \bibinfo {author} {\bibfnamefont {J.~R.}\ \bibnamefont {McClean}}, \bibinfo {author} {\bibfnamefont {Z.}~\bibnamefont {Holmes}},\ and\ \bibinfo {author} {\bibfnamefont {M.}~\bibnamefont {Cerezo}},\ }\bibfield  {title} {\bibinfo {title} {A review of barren plateaus in variational quantum computing},\ }\href {https://doi.org/10.1038/s42254-025-00813-9} {\bibfield  {journal} {\bibinfo  {journal} {Nature Reviews Physics}\ }\textbf {\bibinfo {volume} {3}},\ \bibinfo {pages} {625–644} (\bibinfo {year} {2025})}\BibitemShut {NoStop}%
\bibitem [{\citenamefont {McClean}\ \emph {et~al.}(2018)\citenamefont {McClean}, \citenamefont {Boixo}, \citenamefont {Smelyanskiy}, \citenamefont {Babbush},\ and\ \citenamefont {Neven}}]{mcclean2018barren}%
  \BibitemOpen
  \bibfield  {author} {\bibinfo {author} {\bibfnamefont {J.~R.}\ \bibnamefont {McClean}}, \bibinfo {author} {\bibfnamefont {S.}~\bibnamefont {Boixo}}, \bibinfo {author} {\bibfnamefont {V.~N.}\ \bibnamefont {Smelyanskiy}}, \bibinfo {author} {\bibfnamefont {R.}~\bibnamefont {Babbush}},\ and\ \bibinfo {author} {\bibfnamefont {H.}~\bibnamefont {Neven}},\ }\bibfield  {title} {\bibinfo {title} {Barren plateaus in quantum neural network training landscapes},\ }\href {https://doi.org/10.1038/s41467-018-07090-4} {\bibfield  {journal} {\bibinfo  {journal} {Nature {C}ommunications}\ }\textbf {\bibinfo {volume} {9}},\ \bibinfo {pages} {1} (\bibinfo {year} {2018})}\BibitemShut {NoStop}%
\bibitem [{\citenamefont {Fontana}\ \emph {et~al.}(2024)\citenamefont {Fontana}, \citenamefont {Herman}, \citenamefont {Chakrabarti}, \citenamefont {Kumar}, \citenamefont {Yalovetzky}, \citenamefont {Heredge}, \citenamefont {Sureshbabu},\ and\ \citenamefont {Pistoia}}]{fontana2023theadjoint}%
  \BibitemOpen
  \bibfield  {author} {\bibinfo {author} {\bibfnamefont {E.}~\bibnamefont {Fontana}}, \bibinfo {author} {\bibfnamefont {D.}~\bibnamefont {Herman}}, \bibinfo {author} {\bibfnamefont {S.}~\bibnamefont {Chakrabarti}}, \bibinfo {author} {\bibfnamefont {N.}~\bibnamefont {Kumar}}, \bibinfo {author} {\bibfnamefont {R.}~\bibnamefont {Yalovetzky}}, \bibinfo {author} {\bibfnamefont {J.}~\bibnamefont {Heredge}}, \bibinfo {author} {\bibfnamefont {S.~H.}\ \bibnamefont {Sureshbabu}},\ and\ \bibinfo {author} {\bibfnamefont {M.}~\bibnamefont {Pistoia}},\ }\bibfield  {title} {\bibinfo {title} {Characterizing barren plateaus in quantum ansätze with the adjoint representation},\ }\href {https://doi.org/10.1038/s41467-024-49910-w} {\bibfield  {journal} {\bibinfo  {journal} {Nature Communications}\ }\textbf {\bibinfo {volume} {15}},\ \bibinfo {pages} {7171} (\bibinfo {year} {2024})}\BibitemShut {NoStop}%
\bibitem [{\citenamefont {Ragone}\ \emph {et~al.}(2024)\citenamefont {Ragone}, \citenamefont {Bakalov}, \citenamefont {Sauvage}, \citenamefont {Kemper}, \citenamefont {Ortiz~Marrero}, \citenamefont {Larocca},\ and\ \citenamefont {Cerezo}}]{ragone2023unified}%
  \BibitemOpen
  \bibfield  {author} {\bibinfo {author} {\bibfnamefont {M.}~\bibnamefont {Ragone}}, \bibinfo {author} {\bibfnamefont {B.~N.}\ \bibnamefont {Bakalov}}, \bibinfo {author} {\bibfnamefont {F.}~\bibnamefont {Sauvage}}, \bibinfo {author} {\bibfnamefont {A.~F.}\ \bibnamefont {Kemper}}, \bibinfo {author} {\bibfnamefont {C.}~\bibnamefont {Ortiz~Marrero}}, \bibinfo {author} {\bibfnamefont {M.}~\bibnamefont {Larocca}},\ and\ \bibinfo {author} {\bibfnamefont {M.}~\bibnamefont {Cerezo}},\ }\bibfield  {title} {\bibinfo {title} {A lie algebraic theory of barren plateaus for deep parameterized quantum circuits},\ }\href {https://doi.org/10.1038/s41467-024-49909-3} {\bibfield  {journal} {\bibinfo  {journal} {Nature Communications}\ }\textbf {\bibinfo {volume} {15}},\ \bibinfo {pages} {7172} (\bibinfo {year} {2024})}\BibitemShut {NoStop}%
\bibitem [{\citenamefont {Cerezo}\ \emph {et~al.}(2023)\citenamefont {Cerezo}, \citenamefont {Larocca}, \citenamefont {Garc{\'\i}a-Mart{\'\i}n}, \citenamefont {Diaz}, \citenamefont {Braccia}, \citenamefont {Fontana}, \citenamefont {Rudolph}, \citenamefont {Bermejo}, \citenamefont {Ijaz}, \citenamefont {Thanasilp} \emph {et~al.}}]{cerezo2023does}%
  \BibitemOpen
  \bibfield  {author} {\bibinfo {author} {\bibfnamefont {M.}~\bibnamefont {Cerezo}}, \bibinfo {author} {\bibfnamefont {M.}~\bibnamefont {Larocca}}, \bibinfo {author} {\bibfnamefont {D.}~\bibnamefont {Garc{\'\i}a-Mart{\'\i}n}}, \bibinfo {author} {\bibfnamefont {N.~L.}\ \bibnamefont {Diaz}}, \bibinfo {author} {\bibfnamefont {P.}~\bibnamefont {Braccia}}, \bibinfo {author} {\bibfnamefont {E.}~\bibnamefont {Fontana}}, \bibinfo {author} {\bibfnamefont {M.~S.}\ \bibnamefont {Rudolph}}, \bibinfo {author} {\bibfnamefont {P.}~\bibnamefont {Bermejo}}, \bibinfo {author} {\bibfnamefont {A.}~\bibnamefont {Ijaz}}, \bibinfo {author} {\bibfnamefont {S.}~\bibnamefont {Thanasilp}}, \emph {et~al.},\ }\bibfield  {title} {\bibinfo {title} {Does provable absence of barren plateaus imply classical simulability? {O}r, why we need to rethink variational quantum computing},\ }\href {https://arxiv.org/abs/2312.09121} {\bibfield  {journal} {\bibinfo  {journal} {arXiv preprint arXiv:2312.09121}\ } (\bibinfo {year} {2023})}\BibitemShut
  {NoStop}%
\bibitem [{\citenamefont {Suzuki}\ and\ \citenamefont {Li}(2023)}]{suzuki2023effect}%
  \BibitemOpen
  \bibfield  {author} {\bibinfo {author} {\bibfnamefont {Y.}~\bibnamefont {Suzuki}}\ and\ \bibinfo {author} {\bibfnamefont {M.}~\bibnamefont {Li}},\ }\bibfield  {title} {\bibinfo {title} {Effect of alternating layered ansatzes on trainability of projected quantum kernel},\ }\href {https://arxiv.org/abs/2310.00361} {\bibfield  {journal} {\bibinfo  {journal} {arXiv preprint arXiv:2310.00361}\ } (\bibinfo {year} {2023})}\BibitemShut {NoStop}%
\bibitem [{\citenamefont {Xiong}\ \emph {et~al.}(2023)\citenamefont {Xiong}, \citenamefont {Facelli}, \citenamefont {Sahebi}, \citenamefont {Agnel}, \citenamefont {Chotibut}, \citenamefont {Thanasilp},\ and\ \citenamefont {Holmes}}]{xiong2023fundamental}%
  \BibitemOpen
  \bibfield  {author} {\bibinfo {author} {\bibfnamefont {W.}~\bibnamefont {Xiong}}, \bibinfo {author} {\bibfnamefont {G.}~\bibnamefont {Facelli}}, \bibinfo {author} {\bibfnamefont {M.}~\bibnamefont {Sahebi}}, \bibinfo {author} {\bibfnamefont {O.}~\bibnamefont {Agnel}}, \bibinfo {author} {\bibfnamefont {T.}~\bibnamefont {Chotibut}}, \bibinfo {author} {\bibfnamefont {S.}~\bibnamefont {Thanasilp}},\ and\ \bibinfo {author} {\bibfnamefont {Z.}~\bibnamefont {Holmes}},\ }\bibfield  {title} {\bibinfo {title} {On fundamental aspects of quantum extreme learning machines},\ }\href {https://arxiv.org/abs/2312.15124} {\bibfield  {journal} {\bibinfo  {journal} {arXiv preprint arXiv:2312.15124}\ } (\bibinfo {year} {2023})}\BibitemShut {NoStop}%
\bibitem [{\citenamefont {Xiong}\ \emph {et~al.}(2025)\citenamefont {Xiong}, \citenamefont {Holmes}, \citenamefont {Angrisani}, \citenamefont {Suzuki}, \citenamefont {Chotibut},\ and\ \citenamefont {Thanasilp}}]{xiong2025role}%
  \BibitemOpen
  \bibfield  {author} {\bibinfo {author} {\bibfnamefont {W.}~\bibnamefont {Xiong}}, \bibinfo {author} {\bibfnamefont {Z.}~\bibnamefont {Holmes}}, \bibinfo {author} {\bibfnamefont {A.}~\bibnamefont {Angrisani}}, \bibinfo {author} {\bibfnamefont {Y.}~\bibnamefont {Suzuki}}, \bibinfo {author} {\bibfnamefont {T.}~\bibnamefont {Chotibut}},\ and\ \bibinfo {author} {\bibfnamefont {S.}~\bibnamefont {Thanasilp}},\ }\bibfield  {title} {\bibinfo {title} {Role of scrambling and noise in temporal information processing with quantum systems},\ }\href {https://arxiv.org/abs/2505.10080} {\bibfield  {journal} {\bibinfo  {journal} {arXiv preprint arXiv:2505.10080}\ } (\bibinfo {year} {2025})}\BibitemShut {NoStop}%
\bibitem [{\citenamefont {Shaydulin}\ and\ \citenamefont {Wild}(2022)}]{shaydulin2021importance}%
  \BibitemOpen
  \bibfield  {author} {\bibinfo {author} {\bibfnamefont {R.}~\bibnamefont {Shaydulin}}\ and\ \bibinfo {author} {\bibfnamefont {S.~M.}\ \bibnamefont {Wild}},\ }\bibfield  {title} {\bibinfo {title} {Importance of kernel bandwidth in quantum machine learning},\ }\href {https://doi.org/10.1103/PhysRevA.106.042407} {\bibfield  {journal} {\bibinfo  {journal} {Physical Review A}\ }\textbf {\bibinfo {volume} {106}},\ \bibinfo {pages} {042407} (\bibinfo {year} {2022})}\BibitemShut {NoStop}%
\bibitem [{\citenamefont {Arrasmith}\ \emph {et~al.}(2022)\citenamefont {Arrasmith}, \citenamefont {Holmes}, \citenamefont {Cerezo},\ and\ \citenamefont {Coles}}]{arrasmith2021equivalence}%
  \BibitemOpen
  \bibfield  {author} {\bibinfo {author} {\bibfnamefont {A.}~\bibnamefont {Arrasmith}}, \bibinfo {author} {\bibfnamefont {Z.}~\bibnamefont {Holmes}}, \bibinfo {author} {\bibfnamefont {M.}~\bibnamefont {Cerezo}},\ and\ \bibinfo {author} {\bibfnamefont {P.~J.}\ \bibnamefont {Coles}},\ }\bibfield  {title} {\bibinfo {title} {Equivalence of quantum barren plateaus to cost concentration and narrow gorges},\ }\href {https://doi.org/10.1088/2058-9565/ac7d06} {\bibfield  {journal} {\bibinfo  {journal} {Quantum Science and Technology}\ }\textbf {\bibinfo {volume} {7}},\ \bibinfo {pages} {045015} (\bibinfo {year} {2022})}\BibitemShut {NoStop}%
\bibitem [{\citenamefont {Arrasmith}\ \emph {et~al.}(2021)\citenamefont {Arrasmith}, \citenamefont {Cerezo}, \citenamefont {Czarnik}, \citenamefont {Cincio},\ and\ \citenamefont {Coles}}]{arrasmith2020effect}%
  \BibitemOpen
  \bibfield  {author} {\bibinfo {author} {\bibfnamefont {A.}~\bibnamefont {Arrasmith}}, \bibinfo {author} {\bibfnamefont {M.}~\bibnamefont {Cerezo}}, \bibinfo {author} {\bibfnamefont {P.}~\bibnamefont {Czarnik}}, \bibinfo {author} {\bibfnamefont {L.}~\bibnamefont {Cincio}},\ and\ \bibinfo {author} {\bibfnamefont {P.~J.}\ \bibnamefont {Coles}},\ }\bibfield  {title} {\bibinfo {title} {Effect of barren plateaus on gradient-free optimization},\ }\href {https://doi.org/10.22331/q-2021-10-05-558} {\bibfield  {journal} {\bibinfo  {journal} {Quantum}\ }\textbf {\bibinfo {volume} {5}},\ \bibinfo {pages} {558} (\bibinfo {year} {2021})}\BibitemShut {NoStop}%
\bibitem [{\citenamefont {Larocca}\ \emph {et~al.}(2022{\natexlab{a}})\citenamefont {Larocca}, \citenamefont {Sauvage}, \citenamefont {Sbahi}, \citenamefont {Verdon}, \citenamefont {Coles},\ and\ \citenamefont {Cerezo}}]{larocca2022group}%
  \BibitemOpen
  \bibfield  {author} {\bibinfo {author} {\bibfnamefont {M.}~\bibnamefont {Larocca}}, \bibinfo {author} {\bibfnamefont {F.}~\bibnamefont {Sauvage}}, \bibinfo {author} {\bibfnamefont {F.~M.}\ \bibnamefont {Sbahi}}, \bibinfo {author} {\bibfnamefont {G.}~\bibnamefont {Verdon}}, \bibinfo {author} {\bibfnamefont {P.~J.}\ \bibnamefont {Coles}},\ and\ \bibinfo {author} {\bibfnamefont {M.}~\bibnamefont {Cerezo}},\ }\bibfield  {title} {\bibinfo {title} {Group-invariant quantum machine learning},\ }\href {https://doi.org/10.1103/PRXQuantum.3.030341} {\bibfield  {journal} {\bibinfo  {journal} {PRX Quantum}\ }\textbf {\bibinfo {volume} {3}},\ \bibinfo {pages} {030341} (\bibinfo {year} {2022}{\natexlab{a}})}\BibitemShut {NoStop}%
\bibitem [{\citenamefont {Schatzki}\ \emph {et~al.}(2024)\citenamefont {Schatzki}, \citenamefont {Larocca}, \citenamefont {Nguyen}, \citenamefont {Sauvage},\ and\ \citenamefont {Cerezo}}]{schatzki2022theoretical}%
  \BibitemOpen
  \bibfield  {author} {\bibinfo {author} {\bibfnamefont {L.}~\bibnamefont {Schatzki}}, \bibinfo {author} {\bibfnamefont {M.}~\bibnamefont {Larocca}}, \bibinfo {author} {\bibfnamefont {Q.~T.}\ \bibnamefont {Nguyen}}, \bibinfo {author} {\bibfnamefont {F.}~\bibnamefont {Sauvage}},\ and\ \bibinfo {author} {\bibfnamefont {M.}~\bibnamefont {Cerezo}},\ }\bibfield  {title} {\bibinfo {title} {Theoretical guarantees for permutation-equivariant quantum neural networks},\ }\href {https://doi.org/10.1038/s41534-024-00804-1} {\bibfield  {journal} {\bibinfo  {journal} {npj Quantum Information}\ }\textbf {\bibinfo {volume} {10}},\ \bibinfo {pages} {12} (\bibinfo {year} {2024})}\BibitemShut {NoStop}%
\bibitem [{\citenamefont {Cerezo}\ \emph {et~al.}(2021{\natexlab{b}})\citenamefont {Cerezo}, \citenamefont {Sone}, \citenamefont {Volkoff}, \citenamefont {Cincio},\ and\ \citenamefont {Coles}}]{cerezo2020cost}%
  \BibitemOpen
  \bibfield  {author} {\bibinfo {author} {\bibfnamefont {M.}~\bibnamefont {Cerezo}}, \bibinfo {author} {\bibfnamefont {A.}~\bibnamefont {Sone}}, \bibinfo {author} {\bibfnamefont {T.}~\bibnamefont {Volkoff}}, \bibinfo {author} {\bibfnamefont {L.}~\bibnamefont {Cincio}},\ and\ \bibinfo {author} {\bibfnamefont {P.~J.}\ \bibnamefont {Coles}},\ }\bibfield  {title} {\bibinfo {title} {Cost function dependent barren plateaus in shallow parametrized quantum circuits},\ }\href {https://doi.org/10.1038/s41467-021-21728-w} {\bibfield  {journal} {\bibinfo  {journal} {Nature {C}ommunications}\ }\textbf {\bibinfo {volume} {12}},\ \bibinfo {pages} {1} (\bibinfo {year} {2021}{\natexlab{b}})}\BibitemShut {NoStop}%
\bibitem [{\citenamefont {Larocca}\ \emph {et~al.}(2022{\natexlab{b}})\citenamefont {Larocca}, \citenamefont {Czarnik}, \citenamefont {Sharma}, \citenamefont {Muraleedharan}, \citenamefont {Coles},\ and\ \citenamefont {Cerezo}}]{larocca2021diagnosing}%
  \BibitemOpen
  \bibfield  {author} {\bibinfo {author} {\bibfnamefont {M.}~\bibnamefont {Larocca}}, \bibinfo {author} {\bibfnamefont {P.}~\bibnamefont {Czarnik}}, \bibinfo {author} {\bibfnamefont {K.}~\bibnamefont {Sharma}}, \bibinfo {author} {\bibfnamefont {G.}~\bibnamefont {Muraleedharan}}, \bibinfo {author} {\bibfnamefont {P.~J.}\ \bibnamefont {Coles}},\ and\ \bibinfo {author} {\bibfnamefont {M.}~\bibnamefont {Cerezo}},\ }\bibfield  {title} {\bibinfo {title} {Diagnosing {B}arren {P}lateaus with {T}ools from {Q}uantum {O}ptimal {C}ontrol},\ }\href {https://doi.org/10.22331/q-2022-09-29-824} {\bibfield  {journal} {\bibinfo  {journal} {{Quantum}}\ }\textbf {\bibinfo {volume} {6}},\ \bibinfo {pages} {824} (\bibinfo {year} {2022}{\natexlab{b}})}\BibitemShut {NoStop}%
\bibitem [{\citenamefont {Letcher}\ \emph {et~al.}(2024)\citenamefont {Letcher}, \citenamefont {Woerner},\ and\ \citenamefont {Zoufal}}]{letcher2023tight}%
  \BibitemOpen
  \bibfield  {author} {\bibinfo {author} {\bibfnamefont {A.}~\bibnamefont {Letcher}}, \bibinfo {author} {\bibfnamefont {S.}~\bibnamefont {Woerner}},\ and\ \bibinfo {author} {\bibfnamefont {C.}~\bibnamefont {Zoufal}},\ }\bibfield  {title} {\bibinfo {title} {Tight and efficient gradient bounds for parameterized quantum circuits},\ }\href {https://quantum-journal.org/papers/q-2024-09-25-1484/} {\bibfield  {journal} {\bibinfo  {journal} {Quantum}\ }\textbf {\bibinfo {volume} {8}},\ \bibinfo {pages} {1484} (\bibinfo {year} {2024})}\BibitemShut {NoStop}%
\bibitem [{\citenamefont {Basheer}\ \emph {et~al.}(2022)\citenamefont {Basheer}, \citenamefont {Feng}, \citenamefont {Ferrie},\ and\ \citenamefont {Li}}]{basheer2022alternating}%
  \BibitemOpen
  \bibfield  {author} {\bibinfo {author} {\bibfnamefont {A.}~\bibnamefont {Basheer}}, \bibinfo {author} {\bibfnamefont {Y.}~\bibnamefont {Feng}}, \bibinfo {author} {\bibfnamefont {C.}~\bibnamefont {Ferrie}},\ and\ \bibinfo {author} {\bibfnamefont {S.}~\bibnamefont {Li}},\ }\bibfield  {title} {\bibinfo {title} {Alternating layered variational quantum circuits can be classically optimized efficiently using classical shadows},\ }\href {https://arxiv.org/abs/2208.11623} {\bibfield  {journal} {\bibinfo  {journal} {arXiv preprint arXiv:2208.11623}\ } (\bibinfo {year} {2022})}\BibitemShut {NoStop}%
\bibitem [{\citenamefont {Napp}(2022)}]{napp2022quantifying}%
  \BibitemOpen
  \bibfield  {author} {\bibinfo {author} {\bibfnamefont {J.}~\bibnamefont {Napp}},\ }\bibfield  {title} {\bibinfo {title} {Quantifying the barren plateau phenomenon for a model of unstructured variational ans\"{a}tze},\ }\href {https://arxiv.org/abs/2203.06174} {\bibfield  {journal} {\bibinfo  {journal} {arXiv preprint arXiv:2203.06174}\ } (\bibinfo {year} {2022})}\BibitemShut {NoStop}%
\bibitem [{\citenamefont {Zhang}\ \emph {et~al.}(2024)\citenamefont {Zhang}, \citenamefont {Liu},\ and\ \citenamefont {Zhang}}]{zhang2023absence}%
  \BibitemOpen
  \bibfield  {author} {\bibinfo {author} {\bibfnamefont {H.-K.}\ \bibnamefont {Zhang}}, \bibinfo {author} {\bibfnamefont {S.}~\bibnamefont {Liu}},\ and\ \bibinfo {author} {\bibfnamefont {S.-X.}\ \bibnamefont {Zhang}},\ }\bibfield  {title} {\bibinfo {title} {Absence of barren plateaus in finite local-depth circuits with long-range entanglement},\ }\href {https://doi.org/10.1103/PhysRevLett.132.150603} {\bibfield  {journal} {\bibinfo  {journal} {Physical Review Letters}\ }\textbf {\bibinfo {volume} {132}},\ \bibinfo {pages} {150603} (\bibinfo {year} {2024})}\BibitemShut {NoStop}%
\bibitem [{\citenamefont {Pesah}\ \emph {et~al.}(2021)\citenamefont {Pesah}, \citenamefont {Cerezo}, \citenamefont {Wang}, \citenamefont {Volkoff}, \citenamefont {Sornborger},\ and\ \citenamefont {Coles}}]{pesah2020absence}%
  \BibitemOpen
  \bibfield  {author} {\bibinfo {author} {\bibfnamefont {A.}~\bibnamefont {Pesah}}, \bibinfo {author} {\bibfnamefont {M.}~\bibnamefont {Cerezo}}, \bibinfo {author} {\bibfnamefont {S.}~\bibnamefont {Wang}}, \bibinfo {author} {\bibfnamefont {T.}~\bibnamefont {Volkoff}}, \bibinfo {author} {\bibfnamefont {A.~T.}\ \bibnamefont {Sornborger}},\ and\ \bibinfo {author} {\bibfnamefont {P.~J.}\ \bibnamefont {Coles}},\ }\bibfield  {title} {\bibinfo {title} {Absence of barren plateaus in quantum convolutional neural networks},\ }\href {https://doi.org/10.1103/PhysRevX.11.041011} {\bibfield  {journal} {\bibinfo  {journal} {Physical Review X}\ }\textbf {\bibinfo {volume} {11}},\ \bibinfo {pages} {041011} (\bibinfo {year} {2021})}\BibitemShut {NoStop}%
\bibitem [{\citenamefont {Monbroussou}\ \emph {et~al.}(2023)\citenamefont {Monbroussou}, \citenamefont {Landman}, \citenamefont {Grilo}, \citenamefont {Kukla},\ and\ \citenamefont {Kashefi}}]{monbroussou2023trainability}%
  \BibitemOpen
  \bibfield  {author} {\bibinfo {author} {\bibfnamefont {L.}~\bibnamefont {Monbroussou}}, \bibinfo {author} {\bibfnamefont {J.}~\bibnamefont {Landman}}, \bibinfo {author} {\bibfnamefont {A.~B.}\ \bibnamefont {Grilo}}, \bibinfo {author} {\bibfnamefont {R.}~\bibnamefont {Kukla}},\ and\ \bibinfo {author} {\bibfnamefont {E.}~\bibnamefont {Kashefi}},\ }\bibfield  {title} {\bibinfo {title} {Trainability and expressivity of hamming-weight preserving quantum circuits for machine learning},\ }\href {https://arxiv.org/abs/2309.15547} {\bibfield  {journal} {\bibinfo  {journal} {arXiv preprint arXiv:2309.15547}\ } (\bibinfo {year} {2023})}\BibitemShut {NoStop}%
\bibitem [{\citenamefont {Raj}\ \emph {et~al.}(2023)\citenamefont {Raj}, \citenamefont {Kerenidis}, \citenamefont {Shekhar}, \citenamefont {Wood}, \citenamefont {Dee}, \citenamefont {Chakrabarti}, \citenamefont {Chen}, \citenamefont {Herman}, \citenamefont {Hu}, \citenamefont {Minssen} \emph {et~al.}}]{cherrat2023quantum}%
  \BibitemOpen
  \bibfield  {author} {\bibinfo {author} {\bibfnamefont {S.}~\bibnamefont {Raj}}, \bibinfo {author} {\bibfnamefont {I.}~\bibnamefont {Kerenidis}}, \bibinfo {author} {\bibfnamefont {A.}~\bibnamefont {Shekhar}}, \bibinfo {author} {\bibfnamefont {B.}~\bibnamefont {Wood}}, \bibinfo {author} {\bibfnamefont {J.}~\bibnamefont {Dee}}, \bibinfo {author} {\bibfnamefont {S.}~\bibnamefont {Chakrabarti}}, \bibinfo {author} {\bibfnamefont {R.}~\bibnamefont {Chen}}, \bibinfo {author} {\bibfnamefont {D.}~\bibnamefont {Herman}}, \bibinfo {author} {\bibfnamefont {S.}~\bibnamefont {Hu}}, \bibinfo {author} {\bibfnamefont {P.}~\bibnamefont {Minssen}}, \emph {et~al.},\ }\bibfield  {title} {\bibinfo {title} {Quantum deep hedging},\ }\href {https://doi.org/10.22331/q-2023-11-29-1191} {\bibfield  {journal} {\bibinfo  {journal} {Quantum}\ }\textbf {\bibinfo {volume} {7}},\ \bibinfo {pages} {1191} (\bibinfo {year} {2023})}\BibitemShut {NoStop}%
\bibitem [{\citenamefont {Diaz}\ \emph {et~al.}(2023)\citenamefont {Diaz}, \citenamefont {Garc{\'\i}a-Mart{\'\i}n}, \citenamefont {Kazi}, \citenamefont {Larocca},\ and\ \citenamefont {Cerezo}}]{diaz2023showcasing}%
  \BibitemOpen
  \bibfield  {author} {\bibinfo {author} {\bibfnamefont {N.~L.}\ \bibnamefont {Diaz}}, \bibinfo {author} {\bibfnamefont {D.}~\bibnamefont {Garc{\'\i}a-Mart{\'\i}n}}, \bibinfo {author} {\bibfnamefont {S.}~\bibnamefont {Kazi}}, \bibinfo {author} {\bibfnamefont {M.}~\bibnamefont {Larocca}},\ and\ \bibinfo {author} {\bibfnamefont {M.}~\bibnamefont {Cerezo}},\ }\bibfield  {title} {\bibinfo {title} {Showcasing a barren plateau theory beyond the dynamical lie algebra},\ }\href {https://arxiv.org/abs/2310.11505} {\bibfield  {journal} {\bibinfo  {journal} {arXiv preprint arXiv:2310.11505}\ } (\bibinfo {year} {2023})}\BibitemShut {NoStop}%
\bibitem [{\citenamefont {Mele}\ \emph {et~al.}(2024)\citenamefont {Mele}, \citenamefont {Angrisani}, \citenamefont {Ghosh}, \citenamefont {Khatri}, \citenamefont {Eisert}, \citenamefont {Fran{\c{c}}a},\ and\ \citenamefont {Quek}}]{mele2024noise}%
  \BibitemOpen
  \bibfield  {author} {\bibinfo {author} {\bibfnamefont {A.~A.}\ \bibnamefont {Mele}}, \bibinfo {author} {\bibfnamefont {A.}~\bibnamefont {Angrisani}}, \bibinfo {author} {\bibfnamefont {S.}~\bibnamefont {Ghosh}}, \bibinfo {author} {\bibfnamefont {S.}~\bibnamefont {Khatri}}, \bibinfo {author} {\bibfnamefont {J.}~\bibnamefont {Eisert}}, \bibinfo {author} {\bibfnamefont {D.~S.}\ \bibnamefont {Fran{\c{c}}a}},\ and\ \bibinfo {author} {\bibfnamefont {Y.}~\bibnamefont {Quek}},\ }\bibfield  {title} {\bibinfo {title} {Noise-induced shallow circuits and absence of barren plateaus},\ }\href {https://arxiv.org/abs/2403.13927} {\bibfield  {journal} {\bibinfo  {journal} {arXiv preprint arXiv:2403.13927}\ } (\bibinfo {year} {2024})}\BibitemShut {NoStop}%
\bibitem [{\citenamefont {Deshpande}\ \emph {et~al.}(2024)\citenamefont {Deshpande}, \citenamefont {Hinsche}, \citenamefont {Najafi}, \citenamefont {Sharma}, \citenamefont {Sweke},\ and\ \citenamefont {Zoufal}}]{deshpande2024dynamic}%
  \BibitemOpen
  \bibfield  {author} {\bibinfo {author} {\bibfnamefont {A.}~\bibnamefont {Deshpande}}, \bibinfo {author} {\bibfnamefont {M.}~\bibnamefont {Hinsche}}, \bibinfo {author} {\bibfnamefont {S.}~\bibnamefont {Najafi}}, \bibinfo {author} {\bibfnamefont {K.}~\bibnamefont {Sharma}}, \bibinfo {author} {\bibfnamefont {R.}~\bibnamefont {Sweke}},\ and\ \bibinfo {author} {\bibfnamefont {C.}~\bibnamefont {Zoufal}},\ }\bibfield  {title} {\bibinfo {title} {Dynamic parameterized quantum circuits: expressive and barren-plateau free},\ }\bibfield  {journal} {\bibinfo  {journal} {arXiv preprint arXiv:2411.05760}\ }\href {https://doi.org/10.48550/arXiv.2411.05760} {10.48550/arXiv.2411.05760} (\bibinfo {year} {2024})\BibitemShut {NoStop}%
\bibitem [{\citenamefont {Srimahajariyapong}\ \emph {et~al.}(2025)\citenamefont {Srimahajariyapong}, \citenamefont {Thanasilp},\ and\ \citenamefont {Chotibut}}]{srimahajariyapong2025connecting}%
  \BibitemOpen
  \bibfield  {author} {\bibinfo {author} {\bibfnamefont {K.}~\bibnamefont {Srimahajariyapong}}, \bibinfo {author} {\bibfnamefont {S.}~\bibnamefont {Thanasilp}},\ and\ \bibinfo {author} {\bibfnamefont {T.}~\bibnamefont {Chotibut}},\ }\bibfield  {title} {\bibinfo {title} {Connecting phases of matter to the flatness of the loss landscape in analog variational quantum algorithms},\ }\href {https://arxiv.org/abs/2506.13865} {\bibfield  {journal} {\bibinfo  {journal} {arXiv preprint arXiv:2506.13865}\ } (\bibinfo {year} {2025})}\BibitemShut {NoStop}%
\bibitem [{\citenamefont {Mhiri}\ \emph {et~al.}(2025)\citenamefont {Mhiri}, \citenamefont {Puig}, \citenamefont {Lerch}, \citenamefont {Rudolph}, \citenamefont {Chotibut}, \citenamefont {Thanasilp},\ and\ \citenamefont {Holmes}}]{mhiri2025unifying}%
  \BibitemOpen
  \bibfield  {author} {\bibinfo {author} {\bibfnamefont {H.}~\bibnamefont {Mhiri}}, \bibinfo {author} {\bibfnamefont {R.}~\bibnamefont {Puig}}, \bibinfo {author} {\bibfnamefont {S.}~\bibnamefont {Lerch}}, \bibinfo {author} {\bibfnamefont {M.~S.}\ \bibnamefont {Rudolph}}, \bibinfo {author} {\bibfnamefont {T.}~\bibnamefont {Chotibut}}, \bibinfo {author} {\bibfnamefont {S.}~\bibnamefont {Thanasilp}},\ and\ \bibinfo {author} {\bibfnamefont {Z.}~\bibnamefont {Holmes}},\ }\bibfield  {title} {\bibinfo {title} {A unifying account of warm start guarantees for patches of quantum landscapes},\ }\bibfield  {journal} {\bibinfo  {journal} {arXiv preprint arXiv:2502.07889}\ }\href {https://doi.org/https://doi.org/10.48550/arXiv.2502.07889} {https://doi.org/10.48550/arXiv.2502.07889} (\bibinfo {year} {2025})\BibitemShut {NoStop}%
\bibitem [{\citenamefont {Mele}\ \emph {et~al.}(2022)\citenamefont {Mele}, \citenamefont {Mbeng}, \citenamefont {Santoro}, \citenamefont {Collura},\ and\ \citenamefont {Torta}}]{mele2022avoiding}%
  \BibitemOpen
  \bibfield  {author} {\bibinfo {author} {\bibfnamefont {A.~A.}\ \bibnamefont {Mele}}, \bibinfo {author} {\bibfnamefont {G.~B.}\ \bibnamefont {Mbeng}}, \bibinfo {author} {\bibfnamefont {G.~E.}\ \bibnamefont {Santoro}}, \bibinfo {author} {\bibfnamefont {M.}~\bibnamefont {Collura}},\ and\ \bibinfo {author} {\bibfnamefont {P.}~\bibnamefont {Torta}},\ }\bibfield  {title} {\bibinfo {title} {Avoiding barren plateaus via transferability of smooth solutions in a {H}amiltonian variational ansatz},\ }\href {https://doi.org/10.1103/PhysRevA.106.L060401} {\bibfield  {journal} {\bibinfo  {journal} {Physical Review A}\ }\textbf {\bibinfo {volume} {106}},\ \bibinfo {pages} {L060401} (\bibinfo {year} {2022})}\BibitemShut {NoStop}%
\bibitem [{\citenamefont {Puig}\ \emph {et~al.}(2025)\citenamefont {Puig}, \citenamefont {Drudis}, \citenamefont {Thanasilp},\ and\ \citenamefont {Holmes}}]{puig2024variational}%
  \BibitemOpen
  \bibfield  {author} {\bibinfo {author} {\bibfnamefont {R.}~\bibnamefont {Puig}}, \bibinfo {author} {\bibfnamefont {M.}~\bibnamefont {Drudis}}, \bibinfo {author} {\bibfnamefont {S.}~\bibnamefont {Thanasilp}},\ and\ \bibinfo {author} {\bibfnamefont {Z.}~\bibnamefont {Holmes}},\ }\bibfield  {title} {\bibinfo {title} {Variational quantum simulation: A case study for understanding warm starts},\ }\href {https://doi.org/10.1103/PRXQuantum.6.010317} {\bibfield  {journal} {\bibinfo  {journal} {PRX Quantum}\ }\textbf {\bibinfo {volume} {6}},\ \bibinfo {pages} {010317} (\bibinfo {year} {2025})}\BibitemShut {NoStop}%
\bibitem [{\citenamefont {Chang}\ \emph {et~al.}(2024)\citenamefont {Chang}, \citenamefont {Thanasilp}, \citenamefont {Saux}, \citenamefont {Vallecorsa},\ and\ \citenamefont {Grossi}}]{chang2024latent}%
  \BibitemOpen
  \bibfield  {author} {\bibinfo {author} {\bibfnamefont {S.~Y.}\ \bibnamefont {Chang}}, \bibinfo {author} {\bibfnamefont {S.}~\bibnamefont {Thanasilp}}, \bibinfo {author} {\bibfnamefont {B.~L.}\ \bibnamefont {Saux}}, \bibinfo {author} {\bibfnamefont {S.}~\bibnamefont {Vallecorsa}},\ and\ \bibinfo {author} {\bibfnamefont {M.}~\bibnamefont {Grossi}},\ }\bibfield  {title} {\bibinfo {title} {Latent style-based quantum gan for high-quality image generation},\ }\href {https://arxiv.org/abs/2406.02668} {\bibfield  {journal} {\bibinfo  {journal} {arXiv preprint arXiv:2406.02668}\ } (\bibinfo {year} {2024})}\BibitemShut {NoStop}%
\bibitem [{\citenamefont {Wang}\ \emph {et~al.}(2024)\citenamefont {Wang}, \citenamefont {Qi}, \citenamefont {Ferrie},\ and\ \citenamefont {Dong}}]{wang2023trainability}%
  \BibitemOpen
  \bibfield  {author} {\bibinfo {author} {\bibfnamefont {Y.}~\bibnamefont {Wang}}, \bibinfo {author} {\bibfnamefont {B.}~\bibnamefont {Qi}}, \bibinfo {author} {\bibfnamefont {C.}~\bibnamefont {Ferrie}},\ and\ \bibinfo {author} {\bibfnamefont {D.}~\bibnamefont {Dong}},\ }\bibfield  {title} {\bibinfo {title} {Trainability enhancement of parameterized quantum circuits via reduced-domain parameter initialization},\ }\href {https://doi.org/10.1103/PhysRevApplied.22.054005} {\bibfield  {journal} {\bibinfo  {journal} {Physical Review Applied}\ }\textbf {\bibinfo {volume} {22}},\ \bibinfo {pages} {054005} (\bibinfo {year} {2024})}\BibitemShut {NoStop}%
\bibitem [{\citenamefont {Park}\ and\ \citenamefont {Killoran}(2024)}]{park2023hamiltonian}%
  \BibitemOpen
  \bibfield  {author} {\bibinfo {author} {\bibfnamefont {C.-Y.}\ \bibnamefont {Park}}\ and\ \bibinfo {author} {\bibfnamefont {N.}~\bibnamefont {Killoran}},\ }\bibfield  {title} {\bibinfo {title} {Hamiltonian variational ansatz without barren plateaus},\ }\href {https://doi.org/10.22331/q-2024-02-01-1239} {\bibfield  {journal} {\bibinfo  {journal} {Quantum}\ }\textbf {\bibinfo {volume} {8}},\ \bibinfo {pages} {1239} (\bibinfo {year} {2024})}\BibitemShut {NoStop}%
\bibitem [{\citenamefont {Park}\ \emph {et~al.}(2024)\citenamefont {Park}, \citenamefont {Kang},\ and\ \citenamefont {Huh}}]{park2024hardware}%
  \BibitemOpen
  \bibfield  {author} {\bibinfo {author} {\bibfnamefont {C.-Y.}\ \bibnamefont {Park}}, \bibinfo {author} {\bibfnamefont {M.}~\bibnamefont {Kang}},\ and\ \bibinfo {author} {\bibfnamefont {J.}~\bibnamefont {Huh}},\ }\bibfield  {title} {\bibinfo {title} {Hardware-efficient ansatz without barren plateaus in any depth},\ }\href {https://arxiv.org/abs/2403.04844} {\bibfield  {journal} {\bibinfo  {journal} {arXiv preprint arXiv:2403.04844}\ } (\bibinfo {year} {2024})}\BibitemShut {NoStop}%
\bibitem [{\citenamefont {Zhang}\ \emph {et~al.}(2022)\citenamefont {Zhang}, \citenamefont {Liu}, \citenamefont {Hsieh},\ and\ \citenamefont {Tao}}]{zhang2022escaping}%
  \BibitemOpen
  \bibfield  {author} {\bibinfo {author} {\bibfnamefont {K.}~\bibnamefont {Zhang}}, \bibinfo {author} {\bibfnamefont {L.}~\bibnamefont {Liu}}, \bibinfo {author} {\bibfnamefont {M.-H.}\ \bibnamefont {Hsieh}},\ and\ \bibinfo {author} {\bibfnamefont {D.}~\bibnamefont {Tao}},\ }\bibfield  {title} {\bibinfo {title} {Escaping from the barren plateau via {G}aussian initializations in deep variational quantum circuits},\ }in\ \href {https://openreview.net/forum?id=jXgbJdQ2YIy} {\emph {\bibinfo {booktitle} {Advances in Neural Information Processing Systems}}}\ (\bibinfo {year} {2022})\BibitemShut {NoStop}%
\bibitem [{\citenamefont {Tangpanitanon}\ \emph {et~al.}(2020)\citenamefont {Tangpanitanon}, \citenamefont {Thanasilp}, \citenamefont {Dangniam}, \citenamefont {Lemonde},\ and\ \citenamefont {Angelakis}}]{tangpanitanon2020expressibility}%
  \BibitemOpen
  \bibfield  {author} {\bibinfo {author} {\bibfnamefont {J.}~\bibnamefont {Tangpanitanon}}, \bibinfo {author} {\bibfnamefont {S.}~\bibnamefont {Thanasilp}}, \bibinfo {author} {\bibfnamefont {N.}~\bibnamefont {Dangniam}}, \bibinfo {author} {\bibfnamefont {M.-A.}\ \bibnamefont {Lemonde}},\ and\ \bibinfo {author} {\bibfnamefont {D.~G.}\ \bibnamefont {Angelakis}},\ }\bibfield  {title} {\bibinfo {title} {Expressibility and trainability of parametrized analog quantum systems for machine learning applications},\ }\href {https://doi.org/10.1103/PhysRevResearch.2.043364} {\bibfield  {journal} {\bibinfo  {journal} {Physical Review Research}\ }\textbf {\bibinfo {volume} {2}},\ \bibinfo {pages} {043364} (\bibinfo {year} {2020})}\BibitemShut {NoStop}%
\bibitem [{\citenamefont {Shi}\ and\ \citenamefont {Shang}(2024)}]{shi2024avoiding}%
  \BibitemOpen
  \bibfield  {author} {\bibinfo {author} {\bibfnamefont {X.}~\bibnamefont {Shi}}\ and\ \bibinfo {author} {\bibfnamefont {Y.}~\bibnamefont {Shang}},\ }\bibfield  {title} {\bibinfo {title} {Avoiding barren plateaus via {G}aussian mixture model},\ }\href {https://arxiv.org/abs/2402.13501} {\bibfield  {journal} {\bibinfo  {journal} {arXiv preprint arXiv:2402.13501}\ } (\bibinfo {year} {2024})}\BibitemShut {NoStop}%
\bibitem [{\citenamefont {Cao}\ \emph {et~al.}(2024)\citenamefont {Cao}, \citenamefont {Zhou}, \citenamefont {Tannu}, \citenamefont {Shannon},\ and\ \citenamefont {Joynt}}]{cao2024exploiting}%
  \BibitemOpen
  \bibfield  {author} {\bibinfo {author} {\bibfnamefont {C.}~\bibnamefont {Cao}}, \bibinfo {author} {\bibfnamefont {Y.}~\bibnamefont {Zhou}}, \bibinfo {author} {\bibfnamefont {S.}~\bibnamefont {Tannu}}, \bibinfo {author} {\bibfnamefont {N.}~\bibnamefont {Shannon}},\ and\ \bibinfo {author} {\bibfnamefont {R.}~\bibnamefont {Joynt}},\ }\bibfield  {title} {\bibinfo {title} {Exploiting many-body localization for scalable variational quantum simulation},\ }\href {https://arxiv.org/abs/2404.17560} {\bibfield  {journal} {\bibinfo  {journal} {arXiv preprint arXiv:2404.17560}\ } (\bibinfo {year} {2024})}\BibitemShut {NoStop}%
\bibitem [{\citenamefont {Grant}\ \emph {et~al.}(2019)\citenamefont {Grant}, \citenamefont {Wossnig}, \citenamefont {Ostaszewski},\ and\ \citenamefont {Benedetti}}]{grant2019initialization}%
  \BibitemOpen
  \bibfield  {author} {\bibinfo {author} {\bibfnamefont {E.}~\bibnamefont {Grant}}, \bibinfo {author} {\bibfnamefont {L.}~\bibnamefont {Wossnig}}, \bibinfo {author} {\bibfnamefont {M.}~\bibnamefont {Ostaszewski}},\ and\ \bibinfo {author} {\bibfnamefont {M.}~\bibnamefont {Benedetti}},\ }\bibfield  {title} {\bibinfo {title} {An initialization strategy for addressing barren plateaus in parametrized quantum circuits},\ }\href {https://doi.org/10.22331/q-2019-12-09-214} {\bibfield  {journal} {\bibinfo  {journal} {Quantum}\ }\textbf {\bibinfo {volume} {3}},\ \bibinfo {pages} {214} (\bibinfo {year} {2019})}\BibitemShut {NoStop}%
\bibitem [{\citenamefont {Friedrich}\ and\ \citenamefont {Maziero}(2022)}]{friedrich2022avoiding}%
  \BibitemOpen
  \bibfield  {author} {\bibinfo {author} {\bibfnamefont {L.}~\bibnamefont {Friedrich}}\ and\ \bibinfo {author} {\bibfnamefont {J.}~\bibnamefont {Maziero}},\ }\bibfield  {title} {\bibinfo {title} {Avoiding barren plateaus with classical deep neural networks},\ }\href {https://doi.org/10.1103/PhysRevA.106.042433} {\bibfield  {journal} {\bibinfo  {journal} {Physical Review A}\ }\textbf {\bibinfo {volume} {106}},\ \bibinfo {pages} {042433} (\bibinfo {year} {2022})}\BibitemShut {NoStop}%
\bibitem [{\citenamefont {Miao}\ \emph {et~al.}(2024)\citenamefont {Miao}, \citenamefont {Hsieh},\ and\ \citenamefont {Zhang}}]{miao2024neural}%
  \BibitemOpen
  \bibfield  {author} {\bibinfo {author} {\bibfnamefont {J.}~\bibnamefont {Miao}}, \bibinfo {author} {\bibfnamefont {C.-Y.}\ \bibnamefont {Hsieh}},\ and\ \bibinfo {author} {\bibfnamefont {S.-X.}\ \bibnamefont {Zhang}},\ }\bibfield  {title} {\bibinfo {title} {Neural-network-encoded variational quantum algorithms},\ }\href {https://journals.aps.org/prapplied/abstract/10.1103/PhysRevApplied.21.014053} {\bibfield  {journal} {\bibinfo  {journal} {Physical Review Applied}\ }\textbf {\bibinfo {volume} {21}},\ \bibinfo {pages} {014053} (\bibinfo {year} {2024})}\BibitemShut {NoStop}%
\bibitem [{\citenamefont {Rad}\ \emph {et~al.}(2022)\citenamefont {Rad}, \citenamefont {Seif},\ and\ \citenamefont {Linke}}]{rad2022surviving}%
  \BibitemOpen
  \bibfield  {author} {\bibinfo {author} {\bibfnamefont {A.}~\bibnamefont {Rad}}, \bibinfo {author} {\bibfnamefont {A.}~\bibnamefont {Seif}},\ and\ \bibinfo {author} {\bibfnamefont {N.~M.}\ \bibnamefont {Linke}},\ }\bibfield  {title} {\bibinfo {title} {Surviving the barren plateau in variational quantum circuits with {B}ayesian learning initialization},\ }\href {https://arxiv.org/abs/2203.02464} {\bibfield  {journal} {\bibinfo  {journal} {arXiv preprint arXiv:2203.02464}\ } (\bibinfo {year} {2022})}\BibitemShut {NoStop}%
\bibitem [{\citenamefont {Fa{\'{\i}}lde}\ \emph {et~al.}(2023)\citenamefont {Fa{\'{\i}}lde}, \citenamefont {Viqueira}, \citenamefont {Juane},\ and\ \citenamefont {G{\'{o}}mez}}]{Fa_lde_2023}%
  \BibitemOpen
  \bibfield  {author} {\bibinfo {author} {\bibfnamefont {D.}~\bibnamefont {Fa{\'{\i}}lde}}, \bibinfo {author} {\bibfnamefont {J.~D.}\ \bibnamefont {Viqueira}}, \bibinfo {author} {\bibfnamefont {M.~M.}\ \bibnamefont {Juane}},\ and\ \bibinfo {author} {\bibfnamefont {A.}~\bibnamefont {G{\'{o}}mez}},\ }\bibfield  {title} {\bibinfo {title} {Using differential evolution to avoid local minima in variational quantum algorithms},\ }\bibfield  {journal} {\bibinfo  {journal} {Scientific Reports}\ }\textbf {\bibinfo {volume} {13}},\ \href {https://doi.org/10.1038/s41598-023-43404-3} {10.1038/s41598-023-43404-3} (\bibinfo {year} {2023})\BibitemShut {NoStop}%
\bibitem [{\citenamefont {Kashif}\ and\ \citenamefont {Al-kuwari}(2023)}]{kashif2023resqnets}%
  \BibitemOpen
  \bibfield  {author} {\bibinfo {author} {\bibfnamefont {M.}~\bibnamefont {Kashif}}\ and\ \bibinfo {author} {\bibfnamefont {S.}~\bibnamefont {Al-kuwari}},\ }\href@noop {} {\bibinfo {title} {Resqnets: A residual approach for mitigating barren plateaus in quantum neural networks}} (\bibinfo {year} {2023}),\ \Eprint {https://arxiv.org/abs/2305.03527} {arXiv:2305.03527 [quant-ph]} \BibitemShut {NoStop}%
\bibitem [{\citenamefont {Barkoutsos}\ \emph {et~al.}(2020)\citenamefont {Barkoutsos}, \citenamefont {Nannicini}, \citenamefont {Robert}, \citenamefont {Tavernelli},\ and\ \citenamefont {Woerner}}]{barkoutsos2020improving}%
  \BibitemOpen
  \bibfield  {author} {\bibinfo {author} {\bibfnamefont {P.~K.}\ \bibnamefont {Barkoutsos}}, \bibinfo {author} {\bibfnamefont {G.}~\bibnamefont {Nannicini}}, \bibinfo {author} {\bibfnamefont {A.}~\bibnamefont {Robert}}, \bibinfo {author} {\bibfnamefont {I.}~\bibnamefont {Tavernelli}},\ and\ \bibinfo {author} {\bibfnamefont {S.}~\bibnamefont {Woerner}},\ }\bibfield  {title} {\bibinfo {title} {Improving variational quantum optimization using cvar},\ }\href {https://quantum-journal.org/papers/q-2020-04-20-256/} {\bibfield  {journal} {\bibinfo  {journal} {Quantum}\ }\textbf {\bibinfo {volume} {4}},\ \bibinfo {pages} {256} (\bibinfo {year} {2020})}\BibitemShut {NoStop}%
\bibitem [{\citenamefont {Stokes}\ \emph {et~al.}(2020)\citenamefont {Stokes}, \citenamefont {Izaac}, \citenamefont {Killoran},\ and\ \citenamefont {Carleo}}]{stokes2020quantum}%
  \BibitemOpen
  \bibfield  {author} {\bibinfo {author} {\bibfnamefont {J.}~\bibnamefont {Stokes}}, \bibinfo {author} {\bibfnamefont {J.}~\bibnamefont {Izaac}}, \bibinfo {author} {\bibfnamefont {N.}~\bibnamefont {Killoran}},\ and\ \bibinfo {author} {\bibfnamefont {G.}~\bibnamefont {Carleo}},\ }\bibfield  {title} {\bibinfo {title} {Quantum natural gradient},\ }\href {https://doi.org/10.22331/q-2020-05-25-269} {\bibfield  {journal} {\bibinfo  {journal} {Quantum}\ }\textbf {\bibinfo {volume} {4}},\ \bibinfo {pages} {269} (\bibinfo {year} {2020})}\BibitemShut {NoStop}%
\bibitem [{\citenamefont {Marrero}\ \emph {et~al.}(2021)\citenamefont {Marrero}, \citenamefont {Kieferov{\'a}},\ and\ \citenamefont {Wiebe}}]{marrero2020entanglement}%
  \BibitemOpen
  \bibfield  {author} {\bibinfo {author} {\bibfnamefont {C.~O.}\ \bibnamefont {Marrero}}, \bibinfo {author} {\bibfnamefont {M.}~\bibnamefont {Kieferov{\'a}}},\ and\ \bibinfo {author} {\bibfnamefont {N.}~\bibnamefont {Wiebe}},\ }\bibfield  {title} {\bibinfo {title} {Entanglement-induced barren plateaus},\ }\href {https://doi.org/10.1103/PRXQuantum.2.040316} {\bibfield  {journal} {\bibinfo  {journal} {PRX Quantum}\ }\textbf {\bibinfo {volume} {2}},\ \bibinfo {pages} {040316} (\bibinfo {year} {2021})}\BibitemShut {NoStop}%
\bibitem [{\citenamefont {Sharma}\ \emph {et~al.}(2022)\citenamefont {Sharma}, \citenamefont {Cerezo}, \citenamefont {Cincio},\ and\ \citenamefont {Coles}}]{sharma2020trainability}%
  \BibitemOpen
  \bibfield  {author} {\bibinfo {author} {\bibfnamefont {K.}~\bibnamefont {Sharma}}, \bibinfo {author} {\bibfnamefont {M.}~\bibnamefont {Cerezo}}, \bibinfo {author} {\bibfnamefont {L.}~\bibnamefont {Cincio}},\ and\ \bibinfo {author} {\bibfnamefont {P.~J.}\ \bibnamefont {Coles}},\ }\bibfield  {title} {\bibinfo {title} {Trainability of dissipative perceptron-based quantum neural networks},\ }\href {https://doi.org/10.1103/PhysRevLett.128.180505} {\bibfield  {journal} {\bibinfo  {journal} {Physical Review Letters}\ }\textbf {\bibinfo {volume} {128}},\ \bibinfo {pages} {180505} (\bibinfo {year} {2022})}\BibitemShut {NoStop}%
\bibitem [{\citenamefont {Patti}\ \emph {et~al.}(2021)\citenamefont {Patti}, \citenamefont {Najafi}, \citenamefont {Gao},\ and\ \citenamefont {Yelin}}]{patti2020entanglement}%
  \BibitemOpen
  \bibfield  {author} {\bibinfo {author} {\bibfnamefont {T.~L.}\ \bibnamefont {Patti}}, \bibinfo {author} {\bibfnamefont {K.}~\bibnamefont {Najafi}}, \bibinfo {author} {\bibfnamefont {X.}~\bibnamefont {Gao}},\ and\ \bibinfo {author} {\bibfnamefont {S.~F.}\ \bibnamefont {Yelin}},\ }\bibfield  {title} {\bibinfo {title} {Entanglement devised barren plateau mitigation},\ }\href {https://doi.org/10.1103/PhysRevResearch.3.033090} {\bibfield  {journal} {\bibinfo  {journal} {Physical Review Research}\ }\textbf {\bibinfo {volume} {3}},\ \bibinfo {pages} {033090} (\bibinfo {year} {2021})}\BibitemShut {NoStop}%
\bibitem [{\citenamefont {Wang}\ \emph {et~al.}(2021)\citenamefont {Wang}, \citenamefont {Fontana}, \citenamefont {Cerezo}, \citenamefont {Sharma}, \citenamefont {Sone}, \citenamefont {Cincio},\ and\ \citenamefont {Coles}}]{wang2020noise}%
  \BibitemOpen
  \bibfield  {author} {\bibinfo {author} {\bibfnamefont {S.}~\bibnamefont {Wang}}, \bibinfo {author} {\bibfnamefont {E.}~\bibnamefont {Fontana}}, \bibinfo {author} {\bibfnamefont {M.}~\bibnamefont {Cerezo}}, \bibinfo {author} {\bibfnamefont {K.}~\bibnamefont {Sharma}}, \bibinfo {author} {\bibfnamefont {A.}~\bibnamefont {Sone}}, \bibinfo {author} {\bibfnamefont {L.}~\bibnamefont {Cincio}},\ and\ \bibinfo {author} {\bibfnamefont {P.~J.}\ \bibnamefont {Coles}},\ }\bibfield  {title} {\bibinfo {title} {Noise-induced barren plateaus in variational quantum algorithms},\ }\href {https://doi.org/10.1038/s41467-021-27045-6} {\bibfield  {journal} {\bibinfo  {journal} {Nature Communications}\ }\textbf {\bibinfo {volume} {12}},\ \bibinfo {pages} {1} (\bibinfo {year} {2021})}\BibitemShut {NoStop}%
\bibitem [{\citenamefont {Holmes}\ \emph {et~al.}(2022)\citenamefont {Holmes}, \citenamefont {Sharma}, \citenamefont {Cerezo},\ and\ \citenamefont {Coles}}]{holmes2021connecting}%
  \BibitemOpen
  \bibfield  {author} {\bibinfo {author} {\bibfnamefont {Z.}~\bibnamefont {Holmes}}, \bibinfo {author} {\bibfnamefont {K.}~\bibnamefont {Sharma}}, \bibinfo {author} {\bibfnamefont {M.}~\bibnamefont {Cerezo}},\ and\ \bibinfo {author} {\bibfnamefont {P.~J.}\ \bibnamefont {Coles}},\ }\bibfield  {title} {\bibinfo {title} {Connecting ansatz expressibility to gradient magnitudes and barren plateaus},\ }\href {https://doi.org/10.1103/PRXQuantum.3.010313} {\bibfield  {journal} {\bibinfo  {journal} {PRX Quantum}\ }\textbf {\bibinfo {volume} {3}},\ \bibinfo {pages} {010313} (\bibinfo {year} {2022})}\BibitemShut {NoStop}%
\bibitem [{\citenamefont {Khatri}\ \emph {et~al.}(2019)\citenamefont {Khatri}, \citenamefont {LaRose}, \citenamefont {Poremba}, \citenamefont {Cincio}, \citenamefont {Sornborger},\ and\ \citenamefont {Coles}}]{khatri2019quantum}%
  \BibitemOpen
  \bibfield  {author} {\bibinfo {author} {\bibfnamefont {S.}~\bibnamefont {Khatri}}, \bibinfo {author} {\bibfnamefont {R.}~\bibnamefont {LaRose}}, \bibinfo {author} {\bibfnamefont {A.}~\bibnamefont {Poremba}}, \bibinfo {author} {\bibfnamefont {L.}~\bibnamefont {Cincio}}, \bibinfo {author} {\bibfnamefont {A.~T.}\ \bibnamefont {Sornborger}},\ and\ \bibinfo {author} {\bibfnamefont {P.~J.}\ \bibnamefont {Coles}},\ }\bibfield  {title} {\bibinfo {title} {Quantum-assisted quantum compiling},\ }\href {https://doi.org/10.22331/q-2019-05-13-140} {\bibfield  {journal} {\bibinfo  {journal} {Quantum}\ }\textbf {\bibinfo {volume} {3}},\ \bibinfo {pages} {140} (\bibinfo {year} {2019})}\BibitemShut {NoStop}%
\bibitem [{\citenamefont {Rudolph}\ \emph {et~al.}(2024)\citenamefont {Rudolph}, \citenamefont {Lerch}, \citenamefont {Thanasilp}, \citenamefont {Kiss}, \citenamefont {Shaya}, \citenamefont {Vallecorsa}, \citenamefont {Grossi},\ and\ \citenamefont {Holmes}}]{rudolph2023trainability}%
  \BibitemOpen
  \bibfield  {author} {\bibinfo {author} {\bibfnamefont {M.~S.}\ \bibnamefont {Rudolph}}, \bibinfo {author} {\bibfnamefont {S.}~\bibnamefont {Lerch}}, \bibinfo {author} {\bibfnamefont {S.}~\bibnamefont {Thanasilp}}, \bibinfo {author} {\bibfnamefont {O.}~\bibnamefont {Kiss}}, \bibinfo {author} {\bibfnamefont {O.}~\bibnamefont {Shaya}}, \bibinfo {author} {\bibfnamefont {S.}~\bibnamefont {Vallecorsa}}, \bibinfo {author} {\bibfnamefont {M.}~\bibnamefont {Grossi}},\ and\ \bibinfo {author} {\bibfnamefont {Z.}~\bibnamefont {Holmes}},\ }\bibfield  {title} {\bibinfo {title} {Trainability barriers and opportunities in quantum generative modeling},\ }\href {https://www.nature.com/articles/s41534-024-00902-0} {\bibfield  {journal} {\bibinfo  {journal} {npj Quantum Information}\ }\textbf {\bibinfo {volume} {10}},\ \bibinfo {pages} {116} (\bibinfo {year} {2024})}\BibitemShut {NoStop}%
\bibitem [{\citenamefont {Kieferova}\ \emph {et~al.}(2021)\citenamefont {Kieferova}, \citenamefont {Carlos},\ and\ \citenamefont {Wiebe}}]{kieferova2021quantum}%
  \BibitemOpen
  \bibfield  {author} {\bibinfo {author} {\bibfnamefont {M.}~\bibnamefont {Kieferova}}, \bibinfo {author} {\bibfnamefont {O.~M.}\ \bibnamefont {Carlos}},\ and\ \bibinfo {author} {\bibfnamefont {N.}~\bibnamefont {Wiebe}},\ }\bibfield  {title} {\bibinfo {title} {Quantum generative training using r\'{e}nyi divergences},\ }\href {https://arxiv.org/abs/2106.09567} {\bibfield  {journal} {\bibinfo  {journal} {arXiv preprint arXiv:2106.09567}\ } (\bibinfo {year} {2021})}\BibitemShut {NoStop}%
\bibitem [{\citenamefont {Thanaslip}\ \emph {et~al.}(2023)\citenamefont {Thanaslip}, \citenamefont {Wang}, \citenamefont {Nghiem}, \citenamefont {Coles},\ and\ \citenamefont {Cerezo}}]{thanaslip2021subtleties}%
  \BibitemOpen
  \bibfield  {author} {\bibinfo {author} {\bibfnamefont {S.}~\bibnamefont {Thanaslip}}, \bibinfo {author} {\bibfnamefont {S.}~\bibnamefont {Wang}}, \bibinfo {author} {\bibfnamefont {N.~A.}\ \bibnamefont {Nghiem}}, \bibinfo {author} {\bibfnamefont {P.~J.}\ \bibnamefont {Coles}},\ and\ \bibinfo {author} {\bibfnamefont {M.}~\bibnamefont {Cerezo}},\ }\bibfield  {title} {\bibinfo {title} {Subtleties in the trainability of quantum machine learning models},\ }\href {https://doi.org/10.1007/s42484-023-00103-6} {\bibfield  {journal} {\bibinfo  {journal} {Quantum Machine Intelligence}\ }\textbf {\bibinfo {volume} {5}},\ \bibinfo {pages} {21} (\bibinfo {year} {2023})}\BibitemShut {NoStop}%
\bibitem [{\citenamefont {Holmes}\ \emph {et~al.}(2021)\citenamefont {Holmes}, \citenamefont {Arrasmith}, \citenamefont {Yan}, \citenamefont {Coles}, \citenamefont {Albrecht},\ and\ \citenamefont {Sornborger}}]{holmes2021barren}%
  \BibitemOpen
  \bibfield  {author} {\bibinfo {author} {\bibfnamefont {Z.}~\bibnamefont {Holmes}}, \bibinfo {author} {\bibfnamefont {A.}~\bibnamefont {Arrasmith}}, \bibinfo {author} {\bibfnamefont {B.}~\bibnamefont {Yan}}, \bibinfo {author} {\bibfnamefont {P.~J.}\ \bibnamefont {Coles}}, \bibinfo {author} {\bibfnamefont {A.}~\bibnamefont {Albrecht}},\ and\ \bibinfo {author} {\bibfnamefont {A.~T.}\ \bibnamefont {Sornborger}},\ }\bibfield  {title} {\bibinfo {title} {Barren plateaus preclude learning scramblers},\ }\href {https://doi.org/10.1103/PhysRevLett.126.190501} {\bibfield  {journal} {\bibinfo  {journal} {Physical Review Letters}\ }\textbf {\bibinfo {volume} {126}},\ \bibinfo {pages} {190501} (\bibinfo {year} {2021})}\BibitemShut {NoStop}%
\bibitem [{\citenamefont {Mart{\'\i}n}\ \emph {et~al.}(2023)\citenamefont {Mart{\'\i}n}, \citenamefont {Plekhanov},\ and\ \citenamefont {Lubasch}}]{martin2022barren}%
  \BibitemOpen
  \bibfield  {author} {\bibinfo {author} {\bibfnamefont {E.~C.}\ \bibnamefont {Mart{\'\i}n}}, \bibinfo {author} {\bibfnamefont {K.}~\bibnamefont {Plekhanov}},\ and\ \bibinfo {author} {\bibfnamefont {M.}~\bibnamefont {Lubasch}},\ }\bibfield  {title} {\bibinfo {title} {Barren plateaus in quantum tensor network optimization},\ }\href {https://doi.org/10.22331/q-2023-04-13-974} {\bibfield  {journal} {\bibinfo  {journal} {Quantum}\ }\textbf {\bibinfo {volume} {7}},\ \bibinfo {pages} {974} (\bibinfo {year} {2023})}\BibitemShut {NoStop}%
\bibitem [{\citenamefont {Anschuetz}(2024)}]{anschuetz2024unified}%
  \BibitemOpen
  \bibfield  {author} {\bibinfo {author} {\bibfnamefont {E.~R.}\ \bibnamefont {Anschuetz}},\ }\bibfield  {title} {\bibinfo {title} {A unified theory of quantum neural network loss landscapes},\ }\href {https://arxiv.org/abs/2408.11901} {\bibfield  {journal} {\bibinfo  {journal} {arXiv preprint arXiv:2408.11901}\ } (\bibinfo {year} {2024})}\BibitemShut {NoStop}%
\bibitem [{\citenamefont {Crognaletti}\ \emph {et~al.}(2024)\citenamefont {Crognaletti}, \citenamefont {Grossi},\ and\ \citenamefont {Bassi}}]{crognaletti2024estimates}%
  \BibitemOpen
  \bibfield  {author} {\bibinfo {author} {\bibfnamefont {G.}~\bibnamefont {Crognaletti}}, \bibinfo {author} {\bibfnamefont {M.}~\bibnamefont {Grossi}},\ and\ \bibinfo {author} {\bibfnamefont {A.}~\bibnamefont {Bassi}},\ }\bibfield  {title} {\bibinfo {title} {Estimates of loss function concentration in noisy parametrized quantum circuits},\ }\href {https://arxiv.org/abs/2410.01893} {\bibfield  {journal} {\bibinfo  {journal} {arXiv preprint arXiv:2410.01893}\ } (\bibinfo {year} {2024})}\BibitemShut {NoStop}%
\bibitem [{\citenamefont {Mao}\ \emph {et~al.}(2024)\citenamefont {Mao}, \citenamefont {Tian},\ and\ \citenamefont {Sun}}]{mao2023barren}%
  \BibitemOpen
  \bibfield  {author} {\bibinfo {author} {\bibfnamefont {R.}~\bibnamefont {Mao}}, \bibinfo {author} {\bibfnamefont {G.}~\bibnamefont {Tian}},\ and\ \bibinfo {author} {\bibfnamefont {X.}~\bibnamefont {Sun}},\ }\bibfield  {title} {\bibinfo {title} {Towards determining the presence of barren plateaus in some chemically inspired variational quantum algorithms},\ }\href {https://www.nature.com/articles/s42005-024-01798-0} {\bibfield  {journal} {\bibinfo  {journal} {Communications Physics}\ }\textbf {\bibinfo {volume} {7}},\ \bibinfo {pages} {342} (\bibinfo {year} {2024})}\BibitemShut {NoStop}%
\bibitem [{\citenamefont {Teo}(2023)}]{Teo_2023}%
  \BibitemOpen
  \bibfield  {author} {\bibinfo {author} {\bibfnamefont {Y.~S.}\ \bibnamefont {Teo}},\ }\bibfield  {title} {\bibinfo {title} {Optimized numerical gradient and hessian estimation for variational quantum algorithms},\ }\bibfield  {journal} {\bibinfo  {journal} {Physical Review A}\ }\textbf {\bibinfo {volume} {107}},\ \href {https://doi.org/10.1103/physreva.107.042421} {10.1103/physreva.107.042421} (\bibinfo {year} {2023})\BibitemShut {NoStop}%
\bibitem [{\citenamefont {Fujii}\ and\ \citenamefont {Nakajima}(2017)}]{fujii2017harnessing}%
  \BibitemOpen
  \bibfield  {author} {\bibinfo {author} {\bibfnamefont {K.}~\bibnamefont {Fujii}}\ and\ \bibinfo {author} {\bibfnamefont {K.}~\bibnamefont {Nakajima}},\ }\bibfield  {title} {\bibinfo {title} {Harnessing disordered-ensemble quantum dynamics for machine learning},\ }\href {https://doi.org/10.1103/PhysRevApplied.8.024030} {\bibfield  {journal} {\bibinfo  {journal} {Physical Review Applied}\ }\textbf {\bibinfo {volume} {8}},\ \bibinfo {pages} {024030} (\bibinfo {year} {2017})}\BibitemShut {NoStop}%
\bibitem [{\citenamefont {Nakajima}\ \emph {et~al.}(2019)\citenamefont {Nakajima}, \citenamefont {Fujii}, \citenamefont {Negoro}, \citenamefont {Mitarai},\ and\ \citenamefont {Kitagawa}}]{nakajima2019boosting}%
  \BibitemOpen
  \bibfield  {author} {\bibinfo {author} {\bibfnamefont {K.}~\bibnamefont {Nakajima}}, \bibinfo {author} {\bibfnamefont {K.}~\bibnamefont {Fujii}}, \bibinfo {author} {\bibfnamefont {M.}~\bibnamefont {Negoro}}, \bibinfo {author} {\bibfnamefont {K.}~\bibnamefont {Mitarai}},\ and\ \bibinfo {author} {\bibfnamefont {M.}~\bibnamefont {Kitagawa}},\ }\bibfield  {title} {\bibinfo {title} {Boosting computational power through spatial multiplexing in quantum reservoir computing},\ }\href {https://doi.org/10.1103/PhysRevApplied.11.034021} {\bibfield  {journal} {\bibinfo  {journal} {Physical Review Applied}\ }\textbf {\bibinfo {volume} {11}},\ \bibinfo {pages} {034021} (\bibinfo {year} {2019})}\BibitemShut {NoStop}%
\bibitem [{\citenamefont {Mujal}\ \emph {et~al.}(2021)\citenamefont {Mujal}, \citenamefont {Mart{\'{\i} }nez-Pe{\~{n}}a}, \citenamefont {Nokkala}, \citenamefont {Garc{\'{\i}}a-Beni}, \citenamefont {Giorgi}, \citenamefont {Soriano},\ and\ \citenamefont {Zambrini}}]{mujal2021opportunities}%
  \BibitemOpen
  \bibfield  {author} {\bibinfo {author} {\bibfnamefont {P.}~\bibnamefont {Mujal}}, \bibinfo {author} {\bibfnamefont {R.}~\bibnamefont {Mart{\'{\i} }nez-Pe{\~{n}}a}}, \bibinfo {author} {\bibfnamefont {J.}~\bibnamefont {Nokkala}}, \bibinfo {author} {\bibfnamefont {J.}~\bibnamefont {Garc{\'{\i}}a-Beni}}, \bibinfo {author} {\bibfnamefont {G.~L.}\ \bibnamefont {Giorgi}}, \bibinfo {author} {\bibfnamefont {M.~C.}\ \bibnamefont {Soriano}},\ and\ \bibinfo {author} {\bibfnamefont {R.}~\bibnamefont {Zambrini}},\ }\bibfield  {title} {\bibinfo {title} {Opportunities in quantum reservoir computing and extreme learning machines},\ }\href {https://doi.org/10.1002/qute.202100027} {\bibfield  {journal} {\bibinfo  {journal} {Advanced Quantum Technologies}\ }\textbf {\bibinfo {volume} {4}},\ \bibinfo {pages} {2100027} (\bibinfo {year} {2021})}\BibitemShut {NoStop}%
\bibitem [{\citenamefont {Ghosh}\ \emph {et~al.}(2020)\citenamefont {Ghosh}, \citenamefont {Opala}, \citenamefont {Matuszewski}, \citenamefont {Paterek},\ and\ \citenamefont {Liew}}]{ghosh2020reconstructing}%
  \BibitemOpen
  \bibfield  {author} {\bibinfo {author} {\bibfnamefont {S.}~\bibnamefont {Ghosh}}, \bibinfo {author} {\bibfnamefont {A.}~\bibnamefont {Opala}}, \bibinfo {author} {\bibfnamefont {M.}~\bibnamefont {Matuszewski}}, \bibinfo {author} {\bibfnamefont {T.}~\bibnamefont {Paterek}},\ and\ \bibinfo {author} {\bibfnamefont {T.~C.}\ \bibnamefont {Liew}},\ }\bibfield  {title} {\bibinfo {title} {Reconstructing quantum states with quantum reservoir networks},\ }\href {https://doi.org/10.1109/TNNLS.2020.3009716} {\bibfield  {journal} {\bibinfo  {journal} {IEEE Transactions on Neural Networks and Learning Systems}\ }\textbf {\bibinfo {volume} {32}},\ \bibinfo {pages} {3148} (\bibinfo {year} {2020})}\BibitemShut {NoStop}%
\bibitem [{\citenamefont {Hu}\ \emph {et~al.}(2023)\citenamefont {Hu}, \citenamefont {Angelatos}, \citenamefont {Khan}, \citenamefont {Vives}, \citenamefont {T{\"{u}}reci}, \citenamefont {Bello}, \citenamefont {Rowlands}, \citenamefont {Ribeill},\ and\ \citenamefont {Tureci}}]{hu2023tackling}%
  \BibitemOpen
  \bibfield  {author} {\bibinfo {author} {\bibfnamefont {F.}~\bibnamefont {Hu}}, \bibinfo {author} {\bibfnamefont {G.}~\bibnamefont {Angelatos}}, \bibinfo {author} {\bibfnamefont {S.~A.}\ \bibnamefont {Khan}}, \bibinfo {author} {\bibfnamefont {M.}~\bibnamefont {Vives}}, \bibinfo {author} {\bibfnamefont {E.}~\bibnamefont {T{\"{u}}reci}}, \bibinfo {author} {\bibfnamefont {L.}~\bibnamefont {Bello}}, \bibinfo {author} {\bibfnamefont {G.~E.}\ \bibnamefont {Rowlands}}, \bibinfo {author} {\bibfnamefont {G.~J.}\ \bibnamefont {Ribeill}},\ and\ \bibinfo {author} {\bibfnamefont {H.~E.}\ \bibnamefont {Tureci}},\ }\bibfield  {title} {\bibinfo {title} {Tackling sampling noise in physical systems for machine learning applications: Fundamental limits and eigentasks},\ }\href {https://doi.org/10.1103/PhysRevX.13.041020} {\bibfield  {journal} {\bibinfo  {journal} {Physical Review X}\ }\textbf {\bibinfo {volume} {13}},\ \bibinfo {pages} {041020} (\bibinfo {year} {2023})}\BibitemShut {NoStop}%
\bibitem [{\citenamefont {Schuld}\ \emph {et~al.}(2019)\citenamefont {Schuld}, \citenamefont {Bergholm}, \citenamefont {Gogolin}, \citenamefont {Izaac},\ and\ \citenamefont {Killoran}}]{schuld2019evaluating}%
  \BibitemOpen
  \bibfield  {author} {\bibinfo {author} {\bibfnamefont {M.}~\bibnamefont {Schuld}}, \bibinfo {author} {\bibfnamefont {V.}~\bibnamefont {Bergholm}}, \bibinfo {author} {\bibfnamefont {C.}~\bibnamefont {Gogolin}}, \bibinfo {author} {\bibfnamefont {J.}~\bibnamefont {Izaac}},\ and\ \bibinfo {author} {\bibfnamefont {N.}~\bibnamefont {Killoran}},\ }\bibfield  {title} {\bibinfo {title} {Evaluating analytic gradients on quantum hardware},\ }\href {https://doi.org/10.1103/PhysRevA.99.032331} {\bibfield  {journal} {\bibinfo  {journal} {Physical Review A}\ }\textbf {\bibinfo {volume} {99}},\ \bibinfo {pages} {032331} (\bibinfo {year} {2019})}\BibitemShut {NoStop}%
\bibitem [{\citenamefont {Singh}\ \emph {et~al.}(2023)\citenamefont {Singh}, \citenamefont {Majumder},\ and\ \citenamefont {Mishra}}]{singh2023benchmarking}%
  \BibitemOpen
  \bibfield  {author} {\bibinfo {author} {\bibfnamefont {H.}~\bibnamefont {Singh}}, \bibinfo {author} {\bibfnamefont {S.}~\bibnamefont {Majumder}},\ and\ \bibinfo {author} {\bibfnamefont {S.}~\bibnamefont {Mishra}},\ }\bibfield  {title} {\bibinfo {title} {Benchmarking of different optimizers in the variational quantum algorithms for applications in quantum chemistry},\ }\href {https://pubs.aip.org/aip/jcp/article-abstract/159/4/044117/2904876/Benchmarking-of-different-optimizers-in-the?redirectedFrom=fulltext} {\bibfield  {journal} {\bibinfo  {journal} {The Journal of Chemical Physics}\ }\textbf {\bibinfo {volume} {159}} (\bibinfo {year} {2023})}\BibitemShut {NoStop}%
\bibitem [{\citenamefont {Rudolph}\ \emph {et~al.}(2021)\citenamefont {Rudolph}, \citenamefont {Sim}, \citenamefont {Raza}, \citenamefont {Stechly}, \citenamefont {McClean}, \citenamefont {Anschuetz}, \citenamefont {Serrano},\ and\ \citenamefont {Perdomo-Ortiz}}]{rudolph2021orqviz}%
  \BibitemOpen
  \bibfield  {author} {\bibinfo {author} {\bibfnamefont {M.~S.}\ \bibnamefont {Rudolph}}, \bibinfo {author} {\bibfnamefont {S.}~\bibnamefont {Sim}}, \bibinfo {author} {\bibfnamefont {A.}~\bibnamefont {Raza}}, \bibinfo {author} {\bibfnamefont {M.}~\bibnamefont {Stechly}}, \bibinfo {author} {\bibfnamefont {J.~R.}\ \bibnamefont {McClean}}, \bibinfo {author} {\bibfnamefont {E.~R.}\ \bibnamefont {Anschuetz}}, \bibinfo {author} {\bibfnamefont {L.}~\bibnamefont {Serrano}},\ and\ \bibinfo {author} {\bibfnamefont {A.}~\bibnamefont {Perdomo-Ortiz}},\ }\bibfield  {title} {\bibinfo {title} {Orqviz: visualizing high-dimensional landscapes in variational quantum algorithms},\ }\href {https://doi.org/10.48550/arXiv.2111.04695} {\bibfield  {journal} {\bibinfo  {journal} {arXiv preprint arXiv:2111.04695}\ } (\bibinfo {year} {2021})}\BibitemShut {NoStop}%
\bibitem [{\citenamefont {Wierichs}\ \emph {et~al.}(2020)\citenamefont {Wierichs}, \citenamefont {Gogolin},\ and\ \citenamefont {Kastoryano}}]{wierichs2020avoiding}%
  \BibitemOpen
  \bibfield  {author} {\bibinfo {author} {\bibfnamefont {D.}~\bibnamefont {Wierichs}}, \bibinfo {author} {\bibfnamefont {C.}~\bibnamefont {Gogolin}},\ and\ \bibinfo {author} {\bibfnamefont {M.}~\bibnamefont {Kastoryano}},\ }\bibfield  {title} {\bibinfo {title} {Avoiding local minima in variational quantum eigensolvers with the natural gradient optimizer},\ }\href {https://doi.org/10.1103/PhysRevA.104.032401} {\bibfield  {journal} {\bibinfo  {journal} {Physical Review Research}\ }\textbf {\bibinfo {volume} {2}},\ \bibinfo {pages} {043246} (\bibinfo {year} {2020})}\BibitemShut {NoStop}%
\bibitem [{\citenamefont {Haug}\ \emph {et~al.}(2021)\citenamefont {Haug}, \citenamefont {Bharti},\ and\ \citenamefont {Kim}}]{Haug_2021}%
  \BibitemOpen
  \bibfield  {author} {\bibinfo {author} {\bibfnamefont {T.}~\bibnamefont {Haug}}, \bibinfo {author} {\bibfnamefont {K.}~\bibnamefont {Bharti}},\ and\ \bibinfo {author} {\bibfnamefont {M.}~\bibnamefont {Kim}},\ }\bibfield  {title} {\bibinfo {title} {Capacity and quantum geometry of parametrized quantum circuits},\ }\bibfield  {journal} {\bibinfo  {journal} {{PRX} Quantum}\ }\textbf {\bibinfo {volume} {2}},\ \href {https://doi.org/10.1103/prxquantum.2.040309} {10.1103/prxquantum.2.040309} (\bibinfo {year} {2021})\BibitemShut {NoStop}%
\bibitem [{\citenamefont {Huang}\ \emph {et~al.}(2020)\citenamefont {Huang}, \citenamefont {Kueng},\ and\ \citenamefont {Preskill}}]{huang2020predicting}%
  \BibitemOpen
  \bibfield  {author} {\bibinfo {author} {\bibfnamefont {H.-Y.}\ \bibnamefont {Huang}}, \bibinfo {author} {\bibfnamefont {R.}~\bibnamefont {Kueng}},\ and\ \bibinfo {author} {\bibfnamefont {J.}~\bibnamefont {Preskill}},\ }\bibfield  {title} {\bibinfo {title} {Predicting many properties of a quantum system from very few measurements},\ }\href {https://doi.org/10.1038/s41567-020-0932-7} {\bibfield  {journal} {\bibinfo  {journal} {Nature Physics}\ }\textbf {\bibinfo {volume} {16}},\ \bibinfo {pages} {1050} (\bibinfo {year} {2020})}\BibitemShut {NoStop}%
\bibitem [{\citenamefont {Elben}\ \emph {et~al.}(2022)\citenamefont {Elben}, \citenamefont {Flammia}, \citenamefont {Huang}, \citenamefont {Kueng}, \citenamefont {Preskill}, \citenamefont {Vermersch},\ and\ \citenamefont {Zoller}}]{elben2022randomized}%
  \BibitemOpen
  \bibfield  {author} {\bibinfo {author} {\bibfnamefont {A.}~\bibnamefont {Elben}}, \bibinfo {author} {\bibfnamefont {S.~T.}\ \bibnamefont {Flammia}}, \bibinfo {author} {\bibfnamefont {H.-Y.}\ \bibnamefont {Huang}}, \bibinfo {author} {\bibfnamefont {R.}~\bibnamefont {Kueng}}, \bibinfo {author} {\bibfnamefont {J.}~\bibnamefont {Preskill}}, \bibinfo {author} {\bibfnamefont {B.}~\bibnamefont {Vermersch}},\ and\ \bibinfo {author} {\bibfnamefont {P.}~\bibnamefont {Zoller}},\ }\bibfield  {title} {\bibinfo {title} {The randomized measurement toolbox},\ }\bibfield  {journal} {\bibinfo  {journal} {Nature Review Physics}\ }\href {https://doi.org/10.1038/s42254-022-00535-2} {10.1038/s42254-022-00535-2} (\bibinfo {year} {2022})\BibitemShut {NoStop}%
\bibitem [{\citenamefont {{Huang}}\ \emph {et~al.}(2022)\citenamefont {{Huang}}, \citenamefont {{Broughton}}, \citenamefont {{Cotler}}, \citenamefont {{Chen}}, \citenamefont {{Li}}, \citenamefont {{Mohseni}}, \citenamefont {{Neven}}, \citenamefont {{Babbush}}, \citenamefont {{Kueng}}, \citenamefont {{Preskill}},\ and\ \citenamefont {{McClean}}}]{huang2021quantum}%
  \BibitemOpen
  \bibfield  {author} {\bibinfo {author} {\bibfnamefont {H.-Y.}\ \bibnamefont {{Huang}}}, \bibinfo {author} {\bibfnamefont {M.}~\bibnamefont {{Broughton}}}, \bibinfo {author} {\bibfnamefont {J.}~\bibnamefont {{Cotler}}}, \bibinfo {author} {\bibfnamefont {S.}~\bibnamefont {{Chen}}}, \bibinfo {author} {\bibfnamefont {J.}~\bibnamefont {{Li}}}, \bibinfo {author} {\bibfnamefont {M.}~\bibnamefont {{Mohseni}}}, \bibinfo {author} {\bibfnamefont {H.}~\bibnamefont {{Neven}}}, \bibinfo {author} {\bibfnamefont {R.}~\bibnamefont {{Babbush}}}, \bibinfo {author} {\bibfnamefont {R.}~\bibnamefont {{Kueng}}}, \bibinfo {author} {\bibfnamefont {J.}~\bibnamefont {{Preskill}}},\ and\ \bibinfo {author} {\bibfnamefont {J.~R.}\ \bibnamefont {{McClean}}},\ }\bibfield  {title} {\bibinfo {title} {Quantum advantage in learning from experiments},\ }\href {https://doi.org/10.1126/science.abn7293} {\bibfield  {journal} {\bibinfo  {journal} {Science}\ }\textbf {\bibinfo {volume} {376}},\ \bibinfo {pages} {1182} (\bibinfo {year}
  {2022})}\BibitemShut {NoStop}%
\bibitem [{\citenamefont {Cerezo}\ and\ \citenamefont {Coles}(2021)}]{cerezo2020impact}%
  \BibitemOpen
  \bibfield  {author} {\bibinfo {author} {\bibfnamefont {M.}~\bibnamefont {Cerezo}}\ and\ \bibinfo {author} {\bibfnamefont {P.~J.}\ \bibnamefont {Coles}},\ }\bibfield  {title} {\bibinfo {title} {Higher order derivatives of quantum neural networks with barren plateaus},\ }\href {https://doi.org/10.1088/2058-9565/abf51a} {\bibfield  {journal} {\bibinfo  {journal} {Quantum Science and Technology}\ }\textbf {\bibinfo {volume} {6}},\ \bibinfo {pages} {035006} (\bibinfo {year} {2021})}\BibitemShut {NoStop}%
\end{thebibliography}%
\onecolumngrid
\newpage

\appendix
{\huge \noindent \textbf{Appendix}}

\bigskip

\part{}
\parttoc 

\bigskip

\section{Hypothesis Testing}

Here, we provide preliminaries on binary hypothesis testing, a necessary tool for proving the statistical indistinguishability of measurement outcome probability distributions, as well as of post-processing applied to those outcomes. We present theoretical statements concerning the success of hypothesis testing in the single-sample case (Appendix~\ref{app:prelim-one-sample}) and the multi-sample case (Appendix~\ref{app:prelim-many-samples}). Finally, in Appendix~\ref{app:prelim-stat-indis}, we rigorously define the notion of statistical indistinguishability.

\subsection{One sample}\label{app:prelim-one-sample}
\begin{lemma}[]
    Consider two probability distributions $\P$ and $\P^\prime$ over some finite set $\mathcal{I}$. Suppose we are given a single sample $\SC$ drawn from either $\P$ or $\P^\prime$ with equal probability. We have the following two hypotheses:
    \begin{itemize}
        \item Null Hypothesis $ \mathcal{H}_0 $: $\SC$ is drawn from $\P$.
        \item Alternative hypothesis $ \mathcal{H}_1 $: $\SC$ is drawn from $\P^\prime$.
    \end{itemize}
    The probability of correctly deciding the true hypothesis is given by 
    \begin{equation}
        \text{Pr}[`` {\text{right decision between } \mathcal{H}_0 \, \text{and} \, \mathcal{H}_1} "] = \frac{1}{2} + \frac{\|\P - \P^\prime\|_1}{4} \;\;,
    \end{equation}
    where $\|\P - \P^\prime\|_1 = \sum_{s \in \mathcal{I}} |p(s) - p^\prime(s)|$ is the 1-norm between the two distributions.
\label{OneSampleHypothesisTestingLemma}    
\end{lemma}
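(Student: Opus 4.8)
The plan is to set up the hypothesis test as a binary decision problem and compute the optimal success probability directly, using the standard fact that the optimal discriminator between two equiprobable hypotheses is the maximum-likelihood (Bayes) decision rule. First I would fix notation: write $p(s)$ and $p'(s)$ for the probability mass functions of $\P$ and $\P'$ on the finite set $\mathcal{I}$, and partition $\mathcal{I}$ into the set $A = \{s : p(s) \geq p'(s)\}$ where the observer should guess $\mathcal{H}_0$, and its complement $A^c = \{s : p(s) < p'(s)\}$ where the observer should guess $\mathcal{H}_1$. Any deterministic decision rule is specified by such a partition, and (by linearity) randomized rules do no better, so it suffices to optimize over partitions.

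Next I would write the success probability of the rule associated with a generic region $R$ (guess $\mathcal{H}_0$ on $R$, $\mathcal{H}_1$ on $R^c$) as
\begin{equation}
    \Pr[\text{right decision}] = \frac{1}{2}\sum_{s \in R} p(s) + \frac{1}{2}\sum_{s \in R^c} p'(s) \;,
\end{equation}
using the prior $\tfrac12$ on each hypothesis. Then I would show this is maximized by taking $R = A$ as defined above: rewriting $\sum_{s \in R^c} p'(s) = 1 - \sum_{s\in R} p'(s)$ gives
\begin{equation}
    \Pr[\text{right decision}] = \frac{1}{2} + \frac{1}{2}\sum_{s \in R}\big(p(s) - p'(s)\big) \;,
\end{equation}
which is clearly maximized by including in $R$ exactly those $s$ with $p(s) - p'(s) \geq 0$, i.e. $R = A$. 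This yields $\Pr[\text{right decision}] = \tfrac12 + \tfrac12 \sum_{s\in A}(p(s)-p'(s))$.

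Finally I would identify $\sum_{s\in A}(p(s)-p'(s))$ with $\tfrac12 \|\P - \P'\|_1$. This is the elementary identity that $\sum_{s}|p(s)-p'(s)| = \sum_{s\in A}(p(s)-p'(s)) + \sum_{s\in A^c}(p'(s)-p(s))$, and since both $\P$ and $\P'$ are normalized, $\sum_{s\in A}(p(s)-p'(s)) = \sum_{s\in A^c}(p'(s)-p(s))$, so each of these equals $\tfrac12\|\P-\P'\|_1$. Substituting gives the claimed $\tfrac12 + \tfrac14\|\P-\P'\|_1$. There is no serious obstacle here — the only point requiring a little care is justifying that randomized decision rules cannot beat the best deterministic one (a convexity argument: the success probability is affine in the randomization weights, so the optimum sits at a vertex), and being explicit that the tie-breaking choice on $\{p(s)=p'(s)\}$ is irrelevant to the value.
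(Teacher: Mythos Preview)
Your proposal is correct and follows essentially the same approach as the paper: define the region $A$ where $p(s)\geq p'(s)$, write the success probability as $\tfrac12\big[\sum_{s\in A}p(s)+\sum_{s\notin A}p'(s)\big]$, and relate this to $\|\P-\P'\|_1$ via the normalization of both distributions. Your version is in fact slightly more thorough, since you explicitly verify the optimality of the region $A$ over all deterministic (and randomized) rules, whereas the paper simply asserts the optimal test.
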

\begin{proof}
    For two given distributions, denote $\AC$ as the largest subset such that $p(s) > p^\prime(s)$ for all $s\in \mathcal{A}$~\footnote{if the two distributions are identical, this $\AC$ is simply an empty set and the probability of making the right decision is simply $1/2$.}. The optimal test is to choose that the given sample $\SC$ is drawn from $\P$ (null hypothesis) if it falls in the region i.e., $\SC \in \mathcal{A}$ and guess $\P^\prime $ (alternative hypothesis), otherwise. The probability of choosing the correct hypothesis can be expressed as 
    \begin{align}
        \text{Pr}[`` {\text{right decision between } \mathcal{H}_0 \, \text{and} \, \mathcal{H}_1}"] &= \text{Pr}(\SC \in \mathcal{A}| \SC \sim \P) \text{Pr}(\SC \sim \P) + \text{Pr}(\SC \notin \mathcal{A}| \SC \sim \P^\prime) \text{Pr}(\SC \sim \P^\prime) \\
        &= \frac{1}{2} \big[\text{Pr}(\SC \in \mathcal{A}| \SC \sim \P) + \text{Pr}(\SC \notin \mathcal{A}| \SC \sim \P^\prime)\big] \\
        & = \frac{1}{2} \Bigg[\sum_{s \in \mathcal{A}} p(s)+\sum_{s \notin \mathcal{A}} p^\prime(s)\Bigg], \label{eq:proof-1sample-01}
    \end{align}
    where the second equality is due to the sample being equally likely to be drawn from either $ \P$ or $ \P^\prime$. In the last equality, we use the fact that the given sample is from $\P$, the probability that this sample takes any value within the region $ \mathcal{A}$ is simple $\sum_{s \in \mathcal{A}} p(s) $, and similarly for $ s \notin \mathcal{A}$. 

    The 1-norm between $\P$ and $\P^\prime$ is the 1-norm between the probability vectors, and it can be written as 
    \begin{align}
            \| \P - \P^\prime \|_1 &= \sum_{s \in \mathcal{I}} |p(s) - p^\prime(s)| \\
            & = \sum_{s \in \mathcal{A}} \big(p(s) - p^\prime(s)\big) + \sum_{s \notin \mathcal{A}} \big(p^\prime(s) - p(s)\big) \;\;,
    \end{align}
    where the second equality is due to the definition of the region $\mathcal{A}$. Lastly, we notice that 
    \begin{align}
        \frac{2 + \| \P - \P^\prime \|_1}{2} & = \frac{1}{2} \bigg( \sum_{s \in \mathcal{I}} p(s) + \sum_{s \in \mathcal{I}} p^\prime(s)  + \| \P - \P^\prime \|_1 \bigg) \\
        & = \sum_{s \in \mathcal{A}} p(s) + \sum_{s \notin \mathcal{A}} p^\prime(s)\;\;,
    \end{align}
    which, upon substituting back to Eq.~\eqref{eq:proof-1sample-01} completes the proof.
\end{proof}

\subsection{Many samples}\label{app:prelim-many-samples}

We proceed to the scenario with $N$ many samples where all samples are drawn from either $\P$ or $\P^\prime$. The goal remains similar in the sense that given this set of samples we have to make a decision which of the two distributions the samples are drawn from. To see how Lemma~\ref{OneSampleHypothesisTestingLemma} applies in this setting, we consider product distributions of $\P^{\otimes N}$ and $\P^{\prime \otimes N}$, where a \textit{single} sample from the product distribution corresponds to $N$ samples from $\P$ and $\P^{\prime}$ respectively:
\begin{definition}[Product distribution]
    Consider the probability distribution $\P$ over some finite set $\IC = \{ s_1,s_2,...,s_{|\IC|}\}$ such that
    \begin{align}
        \P = \big( \, p(s_1), p(s_2), \dots, p(s_{|\IC|})\, \big) \;\;,
    \end{align}
    where $p(s_i)$ is the probability of obtaining $s_i$. The $N$-fold product distribution of $\P$ is defined as $\P^{\otimes N}$ where $\ot$ is a tensor product~\footnote{For $A = (a_1,a_2)$ and $B=(b_1,b_2)$, we have $A\ot B = (a_1 b_1, a_1b_2, a_2b_1,a_2b_2)$.} i.e.,
    \begin{align}
        \P^{\ot N} & = \big( \, p(s_1),  \dots, p(s_{|\IC|})\, \big) \ot  \big( \, p(s_1),  \dots, p(s_{|\IC|})\, \big) \ot ... \ot  \big( \, p(s_1),  \dots, p(s_{|\IC|})\, \big) \\
        & = \big( p^N(s_1), p^{N-1}(s_1)p(s_2),...,p^N(s_{|\IC|}) \big) \;.
    \end{align}
\end{definition}

We now present Lemma~\ref{lemma:1-norm-bound}, which provides a bound on the one-norm between two distributions. We then proceed to Proposition~\ref{ManySampleHypothesisTestingLemma}, which establishes an upper bound on the success probability of the binary hypothesis testing task with many samples.
\begin{lemma}[]\label{lemma:1-norm-bound}
    Consider the probability distributions $ \P_i $ and  $ \P^\prime_i $ over some finite set $ \mathcal{I}_i $ for $ i \in \{ 1, 2, \dots, N\} $. Then, the 1-norm between the product distributions $ \P = \bigotimes_{i = 1}^{N} \P_i $ and $ \P^\prime = \bigotimes_{i = 1}^{N} \P^\prime_i $ over the finite set $ \bigotimes_{i =1}^{N} \mathcal{I}_i $ can be upper bounded as 
    \begin{equation}
        \| \P - \P^\prime \|_1 \leq \sum_{i = 1 }^{N} \| \P_i - \P^\prime_i \|_1 \;\;.
    \end{equation}
\label{ManySampleNormLemma}    
\end{lemma}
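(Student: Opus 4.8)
The plan is to prove the subadditivity of the $1$-norm under tensor products by induction on $N$, reducing everything to the case $N=2$. For the base case, I would compare the product distributions $\P_1 \ot \P_2$ and $\P_1' \ot \P_2'$ by inserting the hybrid distribution $\P_1 \ot \P_2'$ and applying the triangle inequality for the $1$-norm:
\begin{equation}
\| \P_1 \ot \P_2 - \P_1' \ot \P_2' \|_1 \leq \| \P_1 \ot \P_2 - \P_1 \ot \P_2' \|_1 + \| \P_1 \ot \P_2' - \P_1' \ot \P_2' \|_1 \;.
\end{equation}
The first term factorizes as $\sum_{s_1 \in \IC_1, s_2 \in \IC_2} p_1(s_1) |p_2(s_2) - p_2'(s_2)| = \big(\sum_{s_1} p_1(s_1)\big) \| \P_2 - \P_2' \|_1 = \| \P_2 - \P_2' \|_1$, using that $\P_1$ is normalized; the second term similarly equals $\| \P_1 - \P_1' \|_1$. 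This gives the claim for $N=2$.

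For the inductive step, assuming the bound holds for $N-1$ factors, I would write $\P = \big(\bigotimes_{i=1}^{N-1} \P_i\big) \ot \P_N$ and $\P' = \big(\bigotimes_{i=1}^{N-1} \P_i'\big) \ot \P_N'$, treat these as products of two distributions (the first over $\bigotimes_{i=1}^{N-1} \IC_i$, the second over $\IC_N$), apply the $N=2$ result to get $\| \P - \P' \|_1 \leq \big\| \bigotimes_{i=1}^{N-1}\P_i - \bigotimes_{i=1}^{N-1}\P_i' \big\|_1 + \| \P_N - \P_N' \|_1$, and then invoke the induction hypothesis on the first term. Collecting terms yields $\| \P - \P' \|_1 \leq \sum_{i=1}^N \| \P_i - \P_i' \|_1$.

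There is essentially no hard step here — the only thing to be careful about is bookkeeping with the index sets and making sure normalization of each $\P_i$ (and $\P_i'$) is used exactly where the hybrid-term factorization collapses a sum to $1$. It would also be cleaner to simply prove the telescoping bound directly for general $N$ by inserting the chain of hybrids $\P_1 \ot \cdots \ot \P_k \ot \P_{k+1}' \ot \cdots \ot \P_N'$ for $k = 0, 1, \dots, N$ and applying the triangle inequality once across all of them, with each consecutive difference factorizing to $\| \P_k - \P_k' \|_1$; I would present whichever of the two is more concise, but the induction route keeps each displayed equation short and avoids any blank-line-in-display hazards.
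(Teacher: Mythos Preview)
Your proposal is correct and uses essentially the same hybrid/telescoping argument as the paper: the paper inserts the full chain of hybrids $\P_1' \ot \cdots \ot \P_k' \ot \P_{k+1} \ot \cdots \ot \P_N$ and applies the triangle inequality once (the alternative you mention at the end), then factorizes each consecutive difference using $\|\P_i\|_1 = \|\P_i'\|_1 = 1$. Your induction version is just a repackaging of the same idea.
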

\begin{proof}
    The 1-norm between the two product distributions can be bounded as 
    \begin{align}
            \| \P - \P^\prime \|_1 &=  \left\| \bigotimes_{i = 1}^{N} \P_i - \bigotimes_{i = 1}^{N} \P^\prime_i \right\|_1 \\
            & = \left\| \bigotimes_{i = 1}^{N} \P_i - \P^\prime_1 \otimes \left[\bigotimes_{i = 2}^{N} \P_i\right] + \P^\prime_1 \otimes \left[\bigotimes_{i = 2}^{N} \P_i\right] - [\P^\prime_1 \otimes \P^\prime_2] \otimes \left[\bigotimes_{i = 3}^{N} \P_i\right] + [\P^\prime_1 \otimes \P^\prime_2] \otimes \left[\bigotimes_{i = 3}^{N} \P_i\right] + \dots \right. \nonumber \\
            & \;\;\;\;\;\; \left.- \left[\bigotimes_{i = 1}^{N-1} \P^\prime_i\right] \otimes \P_N  + \left[\bigotimes_{i = 1}^{N-1} \P^\prime_i\right] \otimes \P_N - \bigotimes_{i = 1}^{N} \P^\prime_i \right\|_1 \\
            & \leq \left\| \bigotimes_{i = 1}^{N} \P_i - \P^\prime_1 \otimes \left[ \bigotimes_{i = 2}^{N} \P_i \right]\right\|_1 + \left\| \P^\prime_1 \otimes \left[ \bigotimes_{i = 2}^{N} \P_i \right] - [\P^\prime_1 \otimes \P^\prime_2] \otimes \left[ \bigotimes_{i = 3}^{N} \P_i \right] \right\|_1 +  \nonumber \\
            & \;\;\;\;\;\;  \left\| [\P^\prime_1 \otimes \P^\prime_2] \otimes \left[ \bigotimes_{i = 3}^{N} \P_i \right] - [\P^\prime_1 \otimes \P^\prime_2 \otimes \P^\prime_3] \otimes \left[ \bigotimes_{i = 4}^{N} \P_i \right] \right\|_1 + \dots + \left\|\P_N \otimes \left[ \bigotimes_{i = 1}^{N-1} \P^\prime_i \right] - \bigotimes_{i = 1}^{N} \P^\prime_i \right\|_1 \; ,
    \end{align}
    where in the second equality, we have added and subtracted terms, the first inequality is simply the triangle inequality. We further proceed the bound as follows:
    \begin{align}
             \| \P - \P^\prime \|_1& \leq 
            \| \P_1 - \P^\prime_1\|_1 \, \left\|\bigotimes_{i = 2}^{N} \P_i \right\|_1 + 
            \| \P^\prime_1\|_1 \,\| \P_2 - \P^\prime_2\|_1 \, \left\|\bigotimes_{i = 3}^{N} \P_i \right\|_1 + \dots + 
            \left\|\bigotimes_{i = 1}^{N-1} \P^\prime_i \right\|_1 \, \| \P_N - \P^\prime_N\|_1
            \\
            & = \| \P_1 - \P^\prime_1\|_1 \, \prod_{i = 2}^{N}\| \P_i \|_1 + \| \P^\prime_1\|_1 \,\| \P_2 - \P^\prime_2\|_1 \, \prod_{i = 3}^{N}\| \P_i \|_1 + \dots + \prod_{i = 1}^{N-1}\| \P^\prime_i \|_1 \, \| \P_N - \P^\prime_N\|_1 \\
            & = \sum_{i = 1}^{N} \| \P_i - \P^\prime_i \|_1 \;\;,
    \end{align}
    where we use the following properties that the 1-norm of each individual distribution sums up to $ 1 $ i.e., $ \| \P_i \|_1 = \| \P^\prime_i \|_1 = 1 $. This completes the proof.
\end{proof}

\medskip

\begin{proposition}[]
    Consider the probability distributions $ \P = \P_0^{\otimes N} $ and  $ \P^\prime = \P_0^{\prime \otimes N} $ over the finite set $ \mathcal{I}^{\otimes N}_0 $. Suppose we are given $N$ samples (denoted as $\SC$) drawn from either $\P_0$ or $\P^\prime_0$ with equal probabilities. We have the following two hypotheses:
    \begin{itemize}
        \item Null Hypothesis $ \mathcal{H}_0 $: $\SC$ is drawn from $\P$.
        \item Alternative hypothesis $ \mathcal{H}_1 $: $\SC$ is drawn from $\P^\prime$.
    \end{itemize}The probability of correctly choosing the hypothesis is upper bounded as 
    \begin{equation}
        \text{Pr}[`` {\text{right decision between } \mathcal{H}_0 \, \text{and} \, \mathcal{H}_1} "] \leq \frac{1}{2} + \frac{N \|\P_0 - \P^\prime_0\|_1}{4}.
    \end{equation}
\label{ManySampleHypothesisTestingLemma}
\end{proposition}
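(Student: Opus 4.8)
The plan is to combine the two preliminary results just proved: the exact single-sample success probability (Lemma~\ref{OneSampleHypothesisTestingLemma}) and the subadditivity of the one-norm under tensor products (Lemma~\ref{ManySampleNormLemma}). The key observation is that the $N$-sample hypothesis test between $\P = \P_0^{\otimes N}$ and $\P' = \P_0'^{\otimes N}$ is \emph{itself} a single-sample hypothesis test, just over the larger index set $\mathcal{I}_0^{\otimes N}$: receiving $N$ i.i.d.\ draws is the same as receiving one draw from the product distribution. So Lemma~\ref{OneSampleHypothesisTestingLemma} applies verbatim with $\mathcal{I} \to \mathcal{I}_0^{\otimes N}$, $\P \to \P_0^{\otimes N}$, $\P' \to \P_0'^{\otimes N}$, giving
\begin{equation}
    \text{Pr}[`` {\text{right decision}} "] = \frac{1}{2} + \frac{\|\P_0^{\otimes N} - \P_0'^{\otimes N}\|_1}{4} \;.
\end{equation}

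Next I would bound the one-norm appearing here. Applying Lemma~\ref{ManySampleNormLemma} with every factor equal ($\P_i = \P_0$ and $\P_i' = \P_0'$ for all $i$) immediately yields $\|\P_0^{\otimes N} - \P_0'^{\otimes N}\|_1 \leq \sum_{i=1}^N \|\P_0 - \P_0'\|_1 = N\|\P_0 - \P_0'\|_1$. Substituting this into the displayed equality gives exactly
\begin{equation}
    \text{Pr}[`` {\text{right decision}} "] \leq \frac{1}{2} + \frac{N\|\P_0 - \P_0'\|_1}{4} \;,
\end{equation}
which is the claimed bound.

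There is essentially no hard step here — the proposition is a two-line corollary of the two lemmas. If anything, the only point worth a sentence of care is justifying that the optimal $N$-sample test really does coincide with the optimal single-sample test on the product space; this is standard (the likelihood-ratio / Neyman–Pearson test on i.i.d.\ data is a function of the joint sample, hence a test on $\mathcal{I}_0^{\otimes N}$), and Lemma~\ref{OneSampleHypothesisTestingLemma} already established that this optimum equals $\tfrac12 + \tfrac14\|\cdot\|_1$ for \emph{any} pair of distributions on \emph{any} finite set. One could also note that the inequality direction is what we want: the one-norm subadditivity only gives an upper bound on $\|\P_0^{\otimes N} - \P_0'^{\otimes N}\|_1$, so we correspondingly only obtain an upper bound on the success probability, which is precisely the form needed downstream to conclude indistinguishability when $\|\P_0 - \P_0'\|_1$ is small and $N$ is polynomial.
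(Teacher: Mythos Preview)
Your proposal is correct and follows exactly the same two-step argument as the paper: apply Lemma~\ref{OneSampleHypothesisTestingLemma} to the product distributions to get the exact success probability $\tfrac12 + \tfrac14\|\P_0^{\otimes N} - \P_0'^{\otimes N}\|_1$, then invoke Lemma~\ref{ManySampleNormLemma} with identical factors to bound the one-norm by $N\|\P_0 - \P_0'\|_1$. Your additional remarks on why the optimal $N$-sample test coincides with the optimal single-sample test on the product space and on the inequality direction are sound but not present in the paper's terser proof.
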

\begin{proof}
    We use Lemma \ref{OneSampleHypothesisTestingLemma} to find the probability of making a correct hypothesis as 
    \begin{equation}
        \text{Pr}[`` {\text{right decision between } \mathcal{H}_0 \, \text{and} \, \mathcal{H}_1}"] = \frac{1}{2} + \frac{ \|\P - \P^\prime\|_1}{4}.
    \end{equation}
    Then according to Lemma \ref{ManySampleNormLemma}, we have $\|\P - \P^\prime\|_1 \leq N \|\P_0 - \P^\prime_0\|_1$. So we can find an upper bound for the probability of the right hypothesis as
    \begin{align}
        \text{Pr}[``{\text{right decision between } \mathcal{H}_0 \, \text{and} \, \mathcal{H}_1}"] & = \frac{1}{2} + \frac{ \|\P_0^{\otimes N} - \P_0^{\prime \otimes N}\|_1}{4} \\ 
        & \leq \frac{1}{2} + \frac{N\|\P_0 - \P^\prime_0\|_1}{4} \;\;.
    \end{align}
\end{proof}

\subsection{Statistical indistinguishability}\label{app:prelim-stat-indis}

We define statistical indistinguishability based on the success probability of a binary hypothesis testing task. Furthermore, we introduce the notion of statistical indistinguishability at two levels. Namely, Definition~\ref{StatisticalIndistinguishabilityDef} concerns the level of distributions or samples, while Definition~\ref{StatisticalIndistinguishabilityofOutputsDef} addresses the level of outputs.

\begin{definition}[$\varepsilon$-statistical indistinguishability of distributions]
    Two probability distributions $\P$ and $\P'$ are $\varepsilon$-statistically indistinguishable with $N$ samples if a binary hypothesis test cannot be passed with probability at least $0.5 +$$\varepsilon$. That is, given a set of $N$ samples $\SC$ drawn from either $\P$ or $\P'$ (with an equal probability), consider the following hypotheses
    \begin{itemize}
        \item Null hypothesis $\HC_0$: $\SC$ is drawn from $\P$\,,
        \item Alternative hypothesis $\HC_1$: $\SC$ is drawn from $\P'$ \,,
    \end{itemize}
    where $\P$ and $\P'$ are statistically indistinguishable (with $N$ samples) if for any algorithm the probability of correctly identifying the correct hypothesis satisfies:
    \begin{align}
         {\rm Pr}[``{\rm right \; decision \; between \, } \HC_0 \, {\rm and} \, \HC_1"] \leq 0.5 + \varepsilon \;,
    \end{align}
    where $\varepsilon$ is some arbitrary small number chosen such that the upper bound is close to random guessing e.g., $\varepsilon = 0.01$.
\label{StatisticalIndistinguishabilityDef}
\end{definition}

\begin{definition}[$\varepsilon$-statistical indistinguishability of outputs]
    Consider a map $\Phi:\mathbb{R}^N\rightarrow \mathbb{R}^M$ (with $M$ being the dimension of the output) and two distributions $\P$ and $\P'$ which are $\varepsilon$-statistically indistinguishable under $N$ samples according to Definition~\ref{StatisticalIndistinguishabilityDef}. Draw $N$ respective samples from $\P$ and $\P'$, which we respectively denote as $\SC_{\P}$ and $\SC_{\P'}$.  We say that $\Phi(\SC_{\P})$ and $\Phi(\SC_{\P'})$ are $\varepsilon$-statistically indistinguishable outputs.
\label{StatisticalIndistinguishabilityofOutputsDef}
\end{definition}

\section{Indistinguishability from probability concentration}
\label{app:theorem1-proof}

In this Appendix, we provide the formal statements of the theoretical results in the main text together with their detailed proofs. In particular, we have:
\begin{itemize}
    \item The formal version of Theorem~\ref{thm:main-indistinguishable} is presented in Theorem~\ref{thm:main-indistinguishable-formal}.
    \item The formal version of Corollary~\ref{coro:no-post-process} is presented in Corollary~\ref{coro:indistinguishability-outputs-formal}.
    \item The formal version of Corollary~\ref{coro:random-walk} is presented in Corollary~\ref{coro:random-walk-formal}.
\end{itemize}

We begin with general theoretical results demonstrating that the practical consequence of outcome probability concentration manifests as statistical indistinguishability—both at the level of distributions/samples, as shown in Theorem~\ref{thm:main-indistinguishable}, and at the level of outputs, as shown in Corollary~\ref{coro:indistinguishability-outputs-formal}.

\begin{theorem}[Formal version of Theorem~\ref{thm:main-indistinguishable}]\label{thm:main-indistinguishable-formal}
    Consider a parametrized $n$-qubit state $\rhot$ and a POVM set $\povm = \{\pov_k\}_{k=1}^{|\povm|}$ with polynomial elements $|\povm| \in \OC(\poly(n))$. The associated outcome probability distribution $\P_{\alv}$ is of the form 
    \begin{align}
        \P_{\alv} =  \left( p_1(\alv), p_2(\alv), \cdots, p_{|M|}(\alv)  \right)\;,
    \end{align}
    with $p_k(\alv) = \Tr[\rhot M_k]$ as the probability of obtaining the measurement associated with the element $M_k$. Now, assume the exponential concentration of outcome probabilities as in Definition~\ref{def:probability-concentration-main}, with some $\beta \in \OC(\exp(-n))$, for all possible outcomes with the fixed probability distribution 
    \begin{align}
        \P_{\rm fixed} =  \left( \mu_1, \mu_2, \mu_3, \dots, \mu_{|\povm|} \right) \;,
    \end{align}
    where $\mu_k$ is the concentration point of $p_k(\alv)$. For any $\alv$, with high probability exponentially close to $1$, the distributions $\P_{\alv}$ and $\P_{\rm fixed}$ are statistically indistinguishable with polynomial samples $N \in \OC(\poly(n))$ according to Definition~\ref{StatisticalIndistinguishabilityofSamples}. 
    That is, let $\SC_N$ be polynomial-sized samples drawn from either $\P_{\alv}$ or $\P_{\rm fixed}$ (with equal probability) and consider the following hypotheses:
    \begin{itemize}
        \item Null Hypothesis $ \mathcal{H}_0 $: $\SC_N$ is drawn from $\P_{\alv}$
        \item Alternative hypothesis $ \mathcal{H}_1 $: $\SC_N$ is drawn from $\P_{\rm fixed}$
    \end{itemize}
    Then, with probability at least $1-\delta$ over the choice of $\alv$ such that $\delta \in \OC({\rm exp}(-n))$, the probability of correctly deciding the true hypothesis is given by 
    \begin{align}
        \text{Pr}[`` {\text{right decision between } \mathcal{H}_0 \, \text{and} \, \mathcal{H}_1}"] \leq \frac{1}{2} + \varepsilon \; ,
    \end{align}
    where $\varepsilon \in \OC(\exp(-n))$. More particularly, we have $\delta = |\povm|\sqrt{\beta}$ and $\varepsilon = \frac{N|\povm|\beta^{\rm1/4}}{4}$.
\label{StatisticalIndistinguishabilityofSamples}
\end{theorem}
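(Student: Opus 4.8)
The plan is to combine the single-step distributional concentration bound from Definition~\ref{def:probability-concentration-main} with the hypothesis-testing machinery of Appendix~\ref{app:prelim-many-samples}. The logical skeleton has two layers: first, show that for a \emph{typical} choice of $\alv$ the outcome distribution $\P_{\alv}$ is $\ell_1$-close to $\P_{\rm fixed}$; second, feed that $\ell_1$-closeness into Proposition~\ref{ManySampleHypothesisTestingLemma} to bound the success probability of the binary test. The quantitative bookkeeping that produces $\delta = |\povm|\sqrt{\beta}$ and $\varepsilon = \tfrac{N|\povm|\beta^{1/4}}{4}$ is the part requiring care, so I would set the thresholds early and track them.

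First I would fix a per-outcome deviation scale $\delta_0$ (to be optimized at the end) and apply Definition~\ref{def:probability-concentration-main} to each element $\pov_k$: $\Pr_{\alv}(|p_k(\alv)-\mu_k|\geq \delta_0)\leq \beta/\delta_0^2$. A union bound over the $|\povm|$ elements gives that, except with probability $|\povm|\beta/\delta_0^2$ over the choice of $\alv$, \emph{every} coordinate satisfies $|p_k(\alv)-\mu_k|<\delta_0$ simultaneously. On this good event, $\|\P_{\alv}-\P_{\rm fixed}\|_1 = \sum_k |p_k(\alv)-\mu_k| < |\povm|\delta_0$. Now choose $\delta_0 = \beta^{1/4}$: the failure probability becomes $\delta = |\povm|\beta/\beta^{1/2} = |\povm|\sqrt{\beta}$, which is $\OC(\exp(-n))$ since $|\povm|\in\OC(\poly(n))$ and $\beta\in\OC(\exp(-n))$; and on the good event $\|\P_{\alv}-\P_{\rm fixed}\|_1 < |\povm|\beta^{1/4}$.

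Next I would invoke Proposition~\ref{ManySampleHypothesisTestingLemma} with $\P_0 = \P_{\alv}$ and $\P_0' = \P_{\rm fixed}$ (both now playing the role of the single-shot distributions whose $N$-fold products generate the samples $\SC_N$). Conditioned on the good event, the proposition yields
\begin{align}
\text{Pr}[``{\text{right decision between } \mathcal{H}_0 \, \text{and} \, \mathcal{H}_1}"] \leq \frac{1}{2} + \frac{N\|\P_{\alv}-\P_{\rm fixed}\|_1}{4} < \frac{1}{2} + \frac{N|\povm|\beta^{1/4}}{4} \;,
\end{align}
so $\varepsilon = \tfrac{N|\povm|\beta^{1/4}}{4}$, which is $\OC(\exp(-n))$ whenever $N\in\OC(\poly(n))$ because the exponentially small $\beta^{1/4}$ dominates the polynomial factors $N$ and $|\povm|$. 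Assembling the pieces: with probability at least $1-\delta = 1-|\povm|\sqrt{\beta}$ over $\alv$, the test succeeds with probability at most $\tfrac12+\varepsilon$, which is exactly the claimed statement (and in particular, for $n$ large enough, falls below the $0.51$ threshold of Definition~\ref{StatisticalIndistinguishabilityDef}).

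The main obstacle is conceptual rather than computational: one must be careful that the randomness over $\alv$ and the randomness of the hypothesis test are handled in the right order — the $\ell_1$ bound holds only on a high-probability event in $\alv$, and the hypothesis-testing bound is then a \emph{deterministic} consequence given that the two fixed distributions are close, so there is no circularity. A secondary subtlety is the choice $\delta_0=\beta^{1/4}$: it is the exponent that balances the $\beta/\delta_0^2$ failure probability against the $\delta_0$ contribution to the $\ell_1$ distance so that \emph{both} $\delta$ and $\varepsilon$ come out exponentially small; any other power of $\beta$ would work as long as it is a constant power strictly between $0$ and $1/2$, and I would remark on this freedom. Everything else — the union bound, the triangle-inequality expansion of the $\ell_1$ norm, and the direct appeal to Proposition~\ref{ManySampleHypothesisTestingLemma} — is routine.
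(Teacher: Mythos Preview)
Your proposal is correct and follows essentially the same route as the paper's proof: apply Definition~\ref{def:probability-concentration-main} with the threshold $\beta^{1/4}$, union-bound over the $|\povm|$ outcomes to get $\|\P_{\alv}-\P_{\rm fixed}\|_1\leq |\povm|\beta^{1/4}$ with probability at least $1-|\povm|\sqrt{\beta}$, and then feed this into Proposition~\ref{ManySampleHypothesisTestingLemma}. The paper merely presents these steps in the opposite order (invoking the hypothesis-testing bound first and then controlling the $\ell_1$ distance), but the content is identical.
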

\begin{proof}
Given an $N$-sized set of samples $\SC_N$, the probability of successfully making the decision can be upper bounded with Proposition \ref{ManySampleHypothesisTestingLemma} as
\begin{equation}
    \text{Pr}[`` {\text{right decision between } \mathcal{H}_0 \, \text{and} \, \mathcal{H}_1}"] \leq \frac{1}{2} + \frac{N \|\P_{\alv} - \P_{\rm fixed}\|_1}{4}.
\label{prguesscorrectinproof}
\end{equation}
To compute $ \|\P_{\alv} - \P_{\rm fixed}\|_1 $ we can use the definition of 1-norm between two probability distributions and write 
\begin{align}
    \|\P_{\alv} - \P_{\rm fixed}\|_1 = \sum_{k = 1}^{|\povm|} |{p_k(\alv) - \mu_k}|,
\label{norminproof}
\end{align}
where $|\povm|$ is the number of POVM operators. 

To further proceed, we use the definition of the exponential concentration of outcome probabilities as in Definition \ref{def:probability-concentration-main}. For each outcome probability, we have 
\begin{align}
    \Pr_{\alv}\left( |p_k(\alv) - \mu_k| \geq \delta' \right) \leq \frac{\beta}{\delta'^2} \;\;,\;\; \beta \in \OC\left( \frac{1}{b^n}\right) \;\;\;,
\label{ProbExpConcentrationinproof}
\end{align}
with some $b > 1$ and $\mu_k$ is some concentration point independent of $\alv$. 
We then choose $\delta' = \beta^{\frac{1}{4}}$ and invert the inequality of Eq. (\ref{ProbExpConcentrationinproof}), leading to 
\begin{align}
    \Pr_{\alv}\left( |p_k(\alv) - \mu_k| \leq \beta^{\frac{1}{4}} \right) \geq 1 - \sqrt{\beta} \;,\;\; \beta \in \OC\left( \frac{1}{b^n}\right) \;.
\label{ProbExpConcentrationinproof2}
\end{align}
That is, we have that each outcome probability is exponentially close to the concentration point with high probability, exponentially close to $1$. 

We now show that this is sufficient to imply exponential vanishing of the one-norm. In particular, this can be shown by using the union bound over all outcome probabilities. 
Let $E_k$ be the event that $ |p_k(\alv) - \mu_k| \leq \beta^{\frac{1}{4}} $. From Eq. (\ref{ProbExpConcentrationinproof2}), we have
\begin{align}
    \Pr_{\alv}\left( E_k \right) \geq 1 - \sqrt{\beta} \;,\;\; \beta \in \OC\left( \frac{1}{b^n}\right) \;.
\end{align}
Now, the probability that all $E_k$ occur can be bounded with the union bound as 
\begin{align}
    \Pr_{\alv}\left( \bigcap_{k=1}^{|\povm|} E_k \right) &= 1 - \Pr_{\alv}\left( \bigcup_{k=1}^{|\povm|} \bar{E}_k \right) \\
    & \geq 1 - \sum_{k=1}^{|\povm|}\Pr_{\alv}\left(\bar{E}_k \right) \\
    & \geq 1 - |\povm|\sqrt{\beta} \;,
\end{align}
where $\bar{E}_k$ is conjugate event of $E_k$, we use union bound in the second line and use $\Pr_{\alv}\left(\bar{E}_k \right) \leq \sqrt{\beta}$ by reversing the inequality in Eq.~\eqref{ProbExpConcentrationinproof2}. So, with probability at least $1 - |\povm|\sqrt{\beta} $ over the parameters, we have 
\begin{align}
    \|\P_{\alv} - \P_{\rm fixed}\|_1 = \sum_{k = 1}^{|\povm|} |{p_k(\alv) - \mu_k}| \leq |\povm| \beta^{\frac{1}{4}},
\label{norminproof2}
\end{align}
By putting all together in Eq.~\eqref{prguesscorrectinproof}, this leads to, with the probability at least $1 - |\povm| \sqrt{\beta}$, the probability of success is bounded as,
\begin{equation}
    \text{Pr}[`` {\text{right decision between } \mathcal{H}_0 \, \text{and} \, \mathcal{H}_1}"] \leq \frac{1}{2} + \frac{N |\povm| \beta^{\frac{1}{4}}}{4} = \frac{1}{2} + \varepsilon \;,  
\label{prguesscorrectinproof2}
\end{equation}
where $\varepsilon = \frac{N |\povm| \beta^{\frac{1}{4}}}{4} \in \OC(c'^{-n}) $ for some $c'>1$ since $N, |\povm| \in \OC(\poly(n))$. In addition, by denoting $\delta = |\povm| \sqrt{\beta}$, we also have $\delta \in \OC(\poly(n))$.
That is, with probability at least $1-\delta$, the distributions $\P_{\alv}$ and $\P_{\rm fixed}$ are statistically indistinguishable with polynomial samples according to Definition~\ref{StatisticalIndistinguishabilityDef}. This completes the proof.
\end{proof}

\begin{corollary}[No post-processing, formal]\label{coro:indistinguishability-outputs-formal}
     Consider a map $\Phi:\mathbb{R}^N\rightarrow \mathbb{R}^M$ (with $M$ being the dimension of the output) on the set of measurement outcomes. Under the same assumptions as in Theorem~\ref{StatisticalIndistinguishabilityofSamples}, consider two $N$-sized sets of samples $\SC_{N}(\alv)$ and $\SC_{N,{\rm fixed}}$ where samples in $\SC_{N}(\alv)$ (and $\SC_{N,{\rm fixed}}$) are drawn from $\P_{\alv}$ (and $\P_{\rm fixed}$). With probability at least $1-\delta$ such that $\delta \in \OC(\exp(-n))$, the outputs $ \Phi(\SC_{N}(\alv)) $ and $ \Phi(\SC_{N,{\rm fixed}})$ are statistically indistinguishable as in Definition~\ref{StatisticalIndistinguishabilityofOutputsDef}.    
\end{corollary}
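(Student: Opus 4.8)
The plan is to reduce the claim to the monotonicity of total‑variation (one‑norm) distance under classical post‑processing — the data‑processing inequality — combined with the one‑norm estimate already derived inside the proof of Theorem~\ref{StatisticalIndistinguishabilityofSamples}. First I would reuse that estimate verbatim: by Definition~\ref{def:probability-concentration-main} together with a union bound over the $|\povm|$ outcome probabilities, there is a set of ``good'' parameter values $\alv$ of measure at least $1-\delta$, with $\delta = |\povm|\sqrt{\beta}\in\OC(\exp(-n))$, on which $\|\P_{\alv}-\P_{\rm fixed}\|_1 \leq |\povm|\,\beta^{1/4}$. Fix any such $\alv$; it then suffices to prove indistinguishability of the post‑processed outputs for this fixed pair of distributions.

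Next I would push the bound through the tensor power and through $\Phi$. Since $\SC_N(\alv)$ and $\SC_{N,{\rm fixed}}$ consist of i.i.d.\ samples from $\P_{\alv}$ and $\P_{\rm fixed}$, Lemma~\ref{ManySampleNormLemma} gives $\|\P_{\alv}^{\otimes N}-\P_{\rm fixed}^{\otimes N}\|_1 \leq N\|\P_{\alv}-\P_{\rm fixed}\|_1 \leq N|\povm|\beta^{1/4} = 4\varepsilon$. The map $\Phi$ — regarded, in full generality, as a (possibly randomized) Markov kernel from $\mathbb{R}^N$ to $\mathbb{R}^M$ whose internal randomness is independent of which hypothesis is true — can only contract the one‑norm, so $\|\Phi_*(\P_{\alv}^{\otimes N}) - \Phi_*(\P_{\rm fixed}^{\otimes N})\|_1 \leq 4\varepsilon$, where $\Phi_*$ denotes the push‑forward. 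A test that tries to tell the two \emph{outputs} apart receives a single realization in $\mathbb{R}^M$, so Lemma~\ref{OneSampleHypothesisTestingLemma} applied to the push‑forward distributions bounds its success probability by $\frac{1}{2} + \frac{1}{4}\|\Phi_*(\P_{\alv}^{\otimes N}) - \Phi_*(\P_{\rm fixed}^{\otimes N})\|_1 \leq \frac{1}{2} + \varepsilon$. Since $\varepsilon\in\OC(\exp(-n))$ lies below $0.01$ for large $n$, this meets the criterion of Definition~\ref{StatisticalIndistinguishabilityDef}, which is precisely the statement that $\Phi(\SC_{N}(\alv))$ and $\Phi(\SC_{N,{\rm fixed}})$ are statistically indistinguishable outputs in the sense of Definition~\ref{StatisticalIndistinguishabilityofOutputsDef}. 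Re‑attaching the failure probability of the good event yields the advertised $1-\delta$ confidence.

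There is no deep obstacle here — the corollary is essentially a repackaging of Theorem~\ref{StatisticalIndistinguishabilityofSamples} with Definition~\ref{StatisticalIndistinguishabilityofOutputsDef} — but two points deserve care in the write‑up. The first is stating the data‑processing inequality at the level of generality actually used: $\Phi$ need not be injective or continuous and may use fresh randomness, so one should invoke monotonicity of the $\ell_1$/trace distance under arbitrary measurable stochastic maps (for a deterministic $\Phi$ this is the elementary estimate $|\Phi_*\mu(A)-\Phi_*\nu(A)| = |\mu(\Phi^{-1}A)-\nu(\Phi^{-1}A)| \leq \tfrac12\|\mu-\nu\|_1$ for all events $A$) and note that randomness independent of the hypothesis cannot create distinguishing power. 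The second is being explicit that the output‑level test is a \emph{one‑sample} test — the relevant tool is Lemma~\ref{OneSampleHypothesisTestingLemma}, not Proposition~\ref{ManySampleHypothesisTestingLemma} — and that the factor $N$ in $\varepsilon$ enters only through the product structure of the \emph{input} samples via Lemma~\ref{ManySampleNormLemma}. Everything else mirrors the bookkeeping already carried out in the proof of Theorem~\ref{StatisticalIndistinguishabilityofSamples}.
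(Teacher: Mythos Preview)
Your argument is correct and rests on the same idea as the paper's: post-processing cannot increase distinguishability. The paper phrases this as a one-line contradiction --- any test succeeding on $\Phi(\SC_N(\alv))$ versus $\Phi(\SC_{N,{\rm fixed}})$ would, precomposed with $\Phi$, yield a test on the raw samples that violates Theorem~\ref{StatisticalIndistinguishabilityofSamples} --- whereas you unwind the same fact quantitatively via the data-processing inequality and Lemma~\ref{OneSampleHypothesisTestingLemma}; both are valid, and since Definition~\ref{StatisticalIndistinguishabilityofOutputsDef} is tailored so that the corollary follows almost by definition once Theorem~\ref{StatisticalIndistinguishabilityofSamples} is in hand, the paper's route is shorter while yours gives the explicit $\tfrac12+\varepsilon$ bound on the output-level test.
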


\begin{proof}
From Theorem~\ref{StatisticalIndistinguishabilityofSamples}, the distributions $\P_{\alv}$ and $\P_{\rm fixed}$ are statistically indistinguishable with probability at least $1-\delta$ with $\delta \in \OC(\exp(-n))$. If the outputs from the processing map $ \Phi(\SC_{N}(\alv)) $ and $ \Phi(\SC_{N,{\rm fixed}})$ were distinguishable, this would imply a strategy to reliable distinguish $\P_{\alv}$ and $\P_{\rm fixed}$. Hence, by contradiction, it must not be possible to distinguish the outputs from the processing map on the samples $\SC_{N}(\alv)$ and $\SC_{N,{\rm fixed}}$. 
\end{proof}

\medskip

The implication of Corollary~\ref{coro:indistinguishability-outputs-formal} is that post-processing the obtained samples cannot overcome the limitations imposed by exponential concentration. This conclusion holds for arbitrary procedures. To illustrate a concrete example, we show that training the loss on a featureless landscape using a standard gradient-based approach results in a random walk over the landscape.

\medskip

\begin{corollary}[Random walk via gradient descent, formal]\label{coro:random-walk-formal}
Consider a parametrized quantum state $\rho(\thv)$ that depends on some trainable parameters $\thv$ and a loss function of the form $\LC(\thv) = \sum_{i=1}^{N_L} c_i \ell_i(\thv)$ where each $\ell_i(\thv) = \Tr[\rho(\thv)O_i]$ with some parametrized state $\rho(\thv)$, some Pauli operator $O_i$, and $N_\ell \in\OC(\poly(n))$. Further, consider $\rho(\thv))$ is generated from some parametrized circuit such that the parameter shift rule is applied. Training the loss with the standard gradient descent algorithm with a random initialization for polynomial training iterations using overall polynomial measurement shots is statisitcially indistinguishable from a random walk with high probability $1 - c$ for some $c \in \OC(\exp(-n))$. That is, for a given iteration, the updated parameters $\thv^{(\rm new)}$ for the next iteration are statistically indistinguishable from the update rule given as 
\begin{align}
    \thv^{\rm (new)} = \thv^{\rm (current)} + \rw \;\;,
\end{align}
where $\thv^{(\rm current)}$ are parameter values for the current iteration, and $\rw$ is a vector where each component is an instance of some parameter-independent random variable. In particular, the $k^{\rm th}$ component $[\rw]_k$ is of the form
\begin{align}
    [\rw]_k= - \frac{\eta}{2}\sum_{i=1}^{N_L} c_i\left[\sum_{j=1}^N\left(\frac{z_{ijk}}{N}\right) - \sum_{j=1}^N \left( \frac{z'_{ijk}}{N}\right) \right] \;,
\end{align}
where each individual $z_{ijk}, z'_{ijk}$ is a random variable which takes a value $+1$ or $-1$ with equal probability.

\label{untrainabilityproposition}
\end{corollary}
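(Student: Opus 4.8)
The strategy is to apply Corollary~\ref{coro:indistinguishability-outputs-formal} with the post-processing map $\Phi$ chosen to be the parameter-shift gradient step, then invoke a union bound over the polynomially many training iterations. First I would fix a single training iteration at parameter values $\thv^{(\rm current)}$. By the parameter-shift rule, the update rule is exactly Eq.~\eqref{eq:loss-gradient-evaluation}: the $k^{\rm th}$ component of $\thv^{(\rm new)} - \thv^{(\rm current)}$ is a fixed function of the $2 N_p$ estimated losses $\widehat{\LC}(\thv \pm \tfrac{\pi}{2}\hat e_i)$, each of which is in turn a linear combination (with coefficients $c_i$) of empirical means of $\pm 1$-valued samples drawn from the eigenbasis POVMs of the $O_i$. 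So the entire update vector is a single deterministic map $\Phi$ applied to the collection of all measurement outcomes taken during that iteration.

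Second, I would verify the hypotheses of Theorem~\ref{StatisticalIndistinguishabilityofSamples}: the relevant POVMs are eigenbasis measurements of Pauli operators $O_i$, hence have $|\povm^{(i)}| = 2$ elements (the $\pm 1$ eigenspace projectors), trivially polynomial; and we are assuming the BP/exponential-concentration setting, so the outcome probabilities concentrate as in Definition~\ref{def:probability-concentration-main} with $\beta \in \OC(\exp(-n))$. Applying Corollary~\ref{coro:indistinguishability-outputs-formal}, the output $\thv^{(\rm new)} - \thv^{(\rm current)}$ is, with probability at least $1-\delta$ (with $\delta\in\OC(\exp(-n))$) over the draw of $\thv^{(\rm current)}$, statistically indistinguishable from $\Phi$ applied to samples drawn from the $\alv$-independent distribution $\P_{\rm fixed}$. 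For a Pauli observable, each term's fixed distribution is the concentration point $\mu^{(i)}$, and since the loss concentration point for an expectation of a (traceless) Pauli is $0$, the corresponding $\P_{\rm fixed}$ is the uniform distribution over $\{+1,-1\}$ — yielding precisely the claimed form $[\rw]_k = -\tfrac{\eta}{2}\sum_i c_i[\sum_j \tfrac{z_{ijk}}{N} - \sum_j \tfrac{z'_{ijk}}{N}]$ with $z_{ijk},z'_{ijk}$ i.i.d.\ uniform on $\{\pm1\}$.

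Third, I would lift this from one iteration to the whole trajectory. With $T \in \OC(\poly(n))$ iterations and an overall polynomial shot budget, each iteration individually fails to be a random walk with probability at most some $\delta_t \in \OC(\exp(-n))$; a union bound over the $T$ iterations gives total failure probability $c = \sum_t \delta_t \in \OC(\poly(n))\cdot\OC(\exp(-n)) = \OC(\exp(-n))$. Hence with probability $1-c$ every step of the trajectory is (indistinguishable from) an independent parameter-independent increment, i.e.\ the trajectory is a random walk.

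\textbf{Main obstacle.} The subtlety I would watch most carefully is the conditioning/adaptivity across iterations: the parameter values $\thv^{(t)}$ at step $t$ are themselves random and correlated with the randomness of earlier steps, so one must argue the per-step indistinguishability bound holds \emph{conditionally} on the history — but this is fine because Theorem~\ref{StatisticalIndistinguishabilityofSamples} holds for \emph{every} fixed $\alv$ with exponentially small failure probability \emph{over the choice of $\alv$}, and since the concentration bound is over the variable itself one can absorb the (arbitrary, history-determined) current parameter into the ``random choice of $\thv^{(\rm current)}$'' quantifier, then union-bound. A second minor point is being careful that ``statistically indistinguishable'' at the output level (Definition~\ref{StatisticalIndistinguishabilityofOutputsDef}) is the precise sense in which the trajectory ``is'' a random walk — the theorem does not claim the update vector is literally $\rw$, only that no test with polynomial samples can tell them apart — and the statement should be read with that caveat.
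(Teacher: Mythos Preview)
Your proposal is correct and follows essentially the same route as the paper's proof: identify the per-iteration update as a deterministic post-processing of $2N_pN_L$ Pauli-eigenbasis sample sets, invoke Theorem~\ref{StatisticalIndistinguishabilityofSamples}/Corollary~\ref{coro:indistinguishability-outputs-formal} to get indistinguishability from the uniform-$\{\pm1\}$ surrogate, then union-bound over the polynomially many iterations. The paper's version is slightly more explicit about the intermediate union bound over the $N_\ell = 2N_pN_L$ measurement sets within a single step (yielding the $1-2N_\ell\sqrt{\beta^*}$ per-step success probability), whereas you fold that directly into one application of Corollary~\ref{coro:indistinguishability-outputs-formal}; conversely, you flag the adaptivity/conditioning issue across steps, which the paper simply does not discuss.
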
 

\begin{proof}

To prove that the optimization trajectory behaves as a random walk with high probability, we proceed in two main steps: (1) show that each individual optimization step exhibits random walk behavior; and (2) demonstrate that, over the course of training, the entire optimization trajectory remains consistent with a random walk. This can be established by repeatedly applying Corollary~\ref{coro:indistinguishability-outputs-formal} to various relevant quantities and invoking the union bound.

\medskip

\paragraph*{\underline{(1).~Show that each individual optimization step exhibits random walk behavior.}} Consider a loss function of the form $\LC (\thv) = \sum_{i = 1}^{N_{L}}{c_i \ell_i}(\thv) = \sum_{i=1}^{N_L} c_i \Tr[\rho (\thv) O_i]$ where $N_\ell \in\OC(\poly(n))$ and the set of POVMs $\{ \povm^{(i)}\}_{i = 1}^{N_L}$. We denote 
\begin{align}\label{eq:proof-rw-estimated-loss}
    \widehat{\LC}(\thv) = \sum_{i=1}^{N_L} c_i \widehat{\ell}_i(\thv)
\end{align}
as the empirical estimate of the full loss function at a given parameter setting $\boldsymbol{\theta}$. To estimate the full loss, we are required to estimate each individual term $\widehat{\ell}_i(\thv)$ by performing POVM $\povm^{(i)}$ resulting in a set of measurement outcomes $\SC_N^{(i)}(\thv)$ consisting of polynomial measurement shots i.e., $N\in \OC(\poly(n))$. Therefore, estimating the full loss function at a single parameter point generally requires $N_L$ such sample sets, one for each POVM. 

To estimate the loss gradients, we need to repeatedly estimate the loss at different points on the landscape. For gradient descent optimization with the parameter shift rule with $N_p$ parameters $\thv = (\theta_1,\theta_2,...,\theta_{N_p})$ and learning rate $\eta$, the updated parameter given the current parameter values $\thv^{\rm (current)}$ can be described by the general procedure (-- similar to  Eq.~\eqref{eq:loss-gradient-evaluation} in Section~\ref{sec:procedure}) as:
\begin{align}\label{eq:proof-rw-grad}
  \Phi_{\PC}\left( \left\{ \SC_N^{(ijk)} (\alv_{ijk})\right\}_{i,j,k }\right)= \thv^{\rm(current)} - \frac{\eta}{2}  \sum_{k=1}^{N_p} \left[\widehat{\LC}\left(\thv^{\rm(current)} + \frac{\pi}{2}\hat{e}_k\right) -  \widehat{\LC}\left(\thv^{\rm(current)}  - \frac{\pi}{2}\hat{e}_k\right)\right]\hat{e}_k \;,
\end{align}
where $\hat{e}_k$ is the unit vector in the direction of the $k^{\rm th}$ parameter component. Here, $\widehat{\mathcal{L}}\left(\boldsymbol{\theta} \pm \frac{\pi}{2} \hat{e}_k\right)$ denotes the estimated loss in Eq.~\eqref{eq:proof-rw-estimated-loss} evaluated at the shifted parameter values $\boldsymbol{\theta} \pm \frac{\pi}{2} \hat{e}_k$. Furthermore, we have the collection $\left\{ \mathcal{S}_N^{(ijk)}(\boldsymbol{\alpha}_{ijk}) \right\}_{ijk}$, where each $\mathcal{S}_N^{(ijk)}(\boldsymbol{\alpha}_{ijk})$ corresponds to the measurement outcomes used to estimate an individual term in the full loss function at a specific parameter setting. The total number of such measurement sets is $N_{\ell} = 2 N_{\rm p} N_L \in\OC(\poly(n))$. Since the indexing notation may appear somewhat dense, let us unpack the meaning of each index:
\begin{itemize}
    \item The index $i \in \{1, \dots, N_L\}$ labels the individual term $\ell_i$ in the full loss function.
    \item The index $k \in \{1, \dots, N_{\rm p}\}$ denotes the component of the parameter vector $\boldsymbol{\theta}$ being shifted.
    \item The index $j \in \{1, 2\}$ specifies the direction of the parameter shift: $j = 1$ corresponds to a $+$ shift, and $j = 2$ corresponds to a $-$ shift.
\end{itemize}
\noindent That is, we have $\mathcal{S}_N^{(i1k)}(\boldsymbol{\alpha}_{i1k})$ corresponding to the estimation of $\ell_i(\boldsymbol{\theta} + \frac{\pi}{2} \hat{e}_k)$, and $\mathcal{S}_N^{(i2k)}(\boldsymbol{\alpha}_{i2k})$ for $\ell_i(\boldsymbol{\theta} - \frac{\pi}{2} \hat{e}_k)$.

Now, we can further re-express Eq.~\eqref{eq:proof-rw-grad} in terms of individual measurement outcomes as
\begin{align}\label{eq:proof-rw-updated-params-lambda}
      \Phi_{\PC}\left( \left\{ \SC_N^{(ijk)} (\alv_{ijk})\right\}_{i,j,k }\right)= \thv^{\rm(current)} - \frac{\eta}{2}  \sum_{k=1}^{N_p}\sum_{i=1}^{N_L}c_i \left[\sum_{q=1}^N\frac{\lambda_{i1kq}}{N} - \sum_{q=1}^N\frac{\lambda_{i2kq}}{N}  \right]\hat{e}_k \;,
\end{align}
where $\lambda_{ijkq}$ is the $q^{\rm th}$ measurement outcome from $\SC_N^{(ijk)} (\alv_{ijk})$, which takes a value $``+1"$ with probability $p^{(i)}_+ (\alv_{ijk}) = (1+ \ell_i(\alv_{ijk}))/2$, and a value $``-1"$ with probability $p^{(i)}_- (\alv_{ijk}) =(1- \ell_i(\alv_{ijk}))/2$.

\medskip

Next, we consider that each individual POVM exponentially concentrate according to Definition~\ref{def:probability-concentration-main} and hence each set of measurement outcomes $\SC_N^{(ijk)} (\alv_{ijk})$ is, with high probability exponentially close to $1$, statistically indistinguishable from another set $\SC_{N,{\rm fixed}}^{(ijk)}$ where each individual outcome takes a value $``+1"$ or $``-1"$ with equal probability, according to Theorem~\ref{thm:main-indistinguishable-formal}. Consequently, by invoking Corollary~\ref{coro:indistinguishability-outputs-formal}, each estimate $\widehat{\ell}_i(\alv_{ijk})$ is, with high probability exponentially close to $1$, indistinguishable from another $\alv_{ijk}$-independent random variable of the form
\begin{align}
    \ell_{\rm fixed}^{(ijk)} = \sum_{q} \frac{z_{ijkq}}{N} \;,
\end{align}
where each $z_{ijkq}$ takes a value $``+1"$ or $``-1"$ with an equal probability. 

Now, the key proof strategy is to show that when considering all outcome measurement sets $\{\SC_N^{(ijk)} (\alv_{ijk})\}_{ijk}$ of size $N_{\ell} = 2 N_pN_L \in \OC(\poly(n))$, the probability of the updated parameters in Eq.~\eqref{eq:proof-rw-updated-params-lambda} being statistically indistinguishable remains exponentially close to $1$. In order to do so, we can invoke the union bound. In particular, denote $A_{ijk}$ as an event that $\SC_N^{(ijk)} (\alv_{ijk})$ is statically indistinguishable from $\SC_{N,{\rm fixed}}^{(ijk)}$ which, from Theorem~\ref{thm:main-indistinguishable-formal}, happens with the probability 
\begin{align}\label{prob-of-A}
    {\rm Pr}_{\thv}(A_{ijk}) & \ge 1 - |\MC^{(i)}| \sqrt{\beta^{(i)}} \\
    & = 1- 2\sqrt{\beta^{(i)}} \;,
\end{align} 
where $\beta^{(i)}$ is some exponentially vanishing value associated with the exponential concentration of the POVM $\MC^{(i)}$ i.e., $\beta^{(i)} \in \OC(\exp(-n))$. Note that in our setting with Pauli operator measurements, each POVM has two elements, that is $|\povm^{(i)}| = 2 $ for all POVM measurements. 

We are ready to bound the probability of all $\{\SC_N^{(ijk)} (\alv_{ijk})\}_{ijk}$ being indistinguishable using the union bound, leading to
\begin{align}
    \Pr_{\thv}{\left(\bigcap_{i,j,k} A_{ijk}\right)} &= 1 - \Pr_{\thv}{\left(\bigcup_{i,j,k} \bar{A}_{ijk}\right)} \\
    & \geq 1 - \sum_{i,j,k} \Pr_{\thv}{\left(\bar{A}_{ijk}\right)} \\
    & \geq 1 - \sum_{i,j,k} 2 \sqrt{\beta^{(i)}} \\
    & \geq 1 - 2 N_\ell  \sqrt{\beta^{*}} \; \; ,
    \label{eq:proof-rw-prob-rw-1step}
\end{align}
where $\bar{A}_{ijk}$ is the conjugate event of $A_{ijk}$, in the first inequality we use the union bound, in the second inequality we use $\Pr_{\thv}{\left(\bar{A}_{ijk}\right)} \leq |\povm^{(ijk)}| \sqrt{\beta^{(ijk)}}$ by reversing the inequality in Eq.~\ref{prob-of-A}. To reach the last line, we denote $\beta^{*}$ is the maximum value of $\{\beta^{(i)}\}_{i = 1}^{N_L}$. Since $N_\ell \in \OC(\poly(n))$, we have that $\delta' =2N_l \sqrt{\beta^*} \in \OC(\exp(-n))$. That is, with the probability at least $1-\delta'$ such that $\delta' \in \OC(\exp(-n))$, all $\{\SC_N^{(ijk)} (\alv_{ijk})\}_{ijk}$ are indistinguishable from all $\{\SC_{N,{\rm fixed}}^{(ijk)} \}_{ijk}$. Following directly, with the same probability, the updated parameter $\Phi_{\PC}\left( \left\{ \SC_N^{(ijk)} (\alv_{ijk})\right\}_{i,j,k }\right)$ is statistically indistinguishable from 
\begin{align}\label{eq:proof-rw-final-rand}
    \Phi_{\PC}\left( \left\{ \SC_{N,{\rm fixed}}^{(ijk)}\right\}_{i,j,k }\right) = \thv^{\rm(current)} - \frac{\eta}{2}  \sum_{k=1}^{N_p}\sum_{i=1}^{N_L}c_i \left[\sum_{q=1}^N\frac{z_{i1kq}}{N} - \sum_{q=1}^N\frac{z_{i2kq}}{N}  \right]\hat{e}_k \;\;,
\end{align}
which means that the update is statistically indistinguishable from a parameter-independent random variable when determining the next step, effectively forming a random walk for a single training iteration with probability at least exponentially close to $1$. Note that the random walk $\rw$ introduced in Corollary~\ref{coro:random-walk-formal} can be directly identified from Eq.~\eqref{eq:proof-rw-final-rand}.

\medskip

\paragraph*{\underline{(2).~Demonstrate that, throughout training, the entire optimization trajectory remains consistent with a random walk.}} 
We know that for each training step, the probability of a random walk is bounded as shown in Eq.~\eqref{eq:proof-rw-prob-rw-1step}. To demonstrate that the entire training trajectory consisting of $N_{\mathrm{step}} \in \mathcal{O}(\mathrm{poly}(n))$ steps forms a random walk, we invoke the union bound. Denote $B_k$ be the event that an $k^{\rm th}$ training step resembles a random walk. We have that
\begin{align}
    \Pr_{\thv^{(1)}, \thv^{(2)}, \dots, \thv^{(N_{\rm step})}}{\left(\bigcap_{k = 1}^{N_{\rm step}} B_k\right)} &= 1 - \Pr_{\thv^{(1)}, \thv^{(2)}, \dots, \thv^{(N_{\rm step})}}{\left(\bigcup_{k = 1}^{N_{\rm step}} \bar{B}_k\right)} \\
    & \geq 1 - \sum_{k = 1}^{N_{\rm step}} \Pr_{\thv^{(k)}}{\left(\bar{B}_k\right)} \\
    & \geq 1 - \sum_{k = 1}^{N_{\rm step}} 2 N_{\ell} \sqrt{\beta^{*}} \\
    & = 1 - 2 N_{\rm step} N_{\ell} \sqrt{\beta^{*}} \; \;,
\end{align}
where we use union bound in the second line and use $\Pr_{\thv^{(k)}}{\left(\bar{B}_k\right)} \leq  2 N_{\ell} \sqrt{\beta^{*}}$ by reversing the inequality in Eq.~\eqref{eq:proof-rw-prob-rw-1step}. That is, with probability $1-\delta''$ such that $\delta'' = 2 N_{\rm step} N_{\ell} \sqrt{\beta^{*}} \in \OC(\exp(-n))$, the whole training trajectory is statistically indistinguishable from a random walk. This completes the proof.

\end{proof}

\section{Further details of numerical simulation}
\label{app:numerical-details}

Here, we describe the numerical setup in more detail. To specify the setup, we detail the parametrized circuit, the measurement operator, the initialization of the circuit parameters, the estimation map used to evaluate the loss function, and the optimization strategy.

In all methods, the quantum circuit consists of a single layer of single-qubit Pauli-$X$ rotations applied to all qubits,
\begin{equation}
    U(\thv) = \prod_{i = 1}^{n} e^{- i \theta_i X_i} \;,
    \label{unitary-for-numerics}
\end{equation}
where $X_i$ denotes the Pauli-$X$ operator acting on the $i^{\mathrm{th}}$ qubit. In addition, although the exact form may vary slightly, the observables are all global, acting non-trivially on $k$ qubits such that $k\in\Omega(n)$. This combination is well known to induce barren plateaus due to the global nature of the loss function~\cite{cerezo2020cost}.

\medskip

\paragraph*{\underline{Quantum natural gradient descent optimization.}}
Quantum natural gradient descent is a quantum analogue of natural gradient descent. In this method, the optimization direction in the loss landscape is determined with respect to the underlying quantum information geometry, which is characterized by the real part of the Quantum Geometric Tensor (QGT). As a result, this approach can yield more effective optimization steps than standard gradient descent. The parameters update rule, with learning rate $\eta$, is given by
\begin{equation}
    \thv^{(\rm new)} = \thv^{(\rm current)} - \eta  \, g^{+}(\thv^{(\rm current)}) \, \nabla\widehat{\LC}(\thv^{(\rm current)})\;,  
\end{equation}
where $g^{+}(\boldsymbol{\theta}^{(\mathrm{current})})$ denotes the pseudo-inverse of the Fubini–Study metric tensor $g(\boldsymbol{\theta})$, which can be expressed in terms of QGT $G(\boldsymbol{\theta})$, and $\nabla \widehat{\mathcal{L}}(\boldsymbol{\theta}^{(\mathrm{current})})$ represents the gradient of the loss function at that step. 
To obtain the Quantum Geometric Tensor (QGT), let us be general and consider a parameterized quantum circuit composed of $L$ layers of non-commuting parameterized unitaries (-- we will simplify to the circuit in Eq.~\eqref{unitary-for-numerics} later)
\begin{align}
    U(\thv) & = \prod_{i = 1}^{L} V_i(\thv) \, W_i \;,
\end{align}
where $W_i$ are non-parametrised unitaries and $V_i(\thv)$ are parametrised unitaries, with $n_l$ parameters at $l^{\rm th}$ layer, of the form
\begin{equation}
    V_l(\thv) = \prod_{i = 1}^{n_l} e^{-i \theta^{(l)}_i \sigma_i^{(l)}} \; ,
\end{equation}
where $\thv = \{ \thv^{(1)}, \thv^{(2)}  , \dots, \thv^{{(L)}}\}$ with $\thv^{(i)} = \{ \theta^{(i)}_1, \theta^{(i)}_2, \dots, \theta^{(i)}_{n_l}\}$, and $\{\sigma_i^{(l)}\}_{l}$ are a set of commuting generators such that $\left(\sigma_i^{(l)}\right)^2= \mathbb{1}$. Then, as shown in Ref.~\cite{stokes2020quantum}, the QGT can be approximated by its block-diagonalized form with $L$ blocks, and a $l^{\rm th}$ block is an $n_l \times n_l$ matrix whose elements are given by
\begin{equation}
    G_{ij}^{(l)}(\thv) = \langle \psi_l(\thv) | \sigma^{(l)}_{i} \, \sigma^{(l)}_{j}| \psi_l(\thv) \rangle - \langle \psi_l(\thv) | \sigma^{(l)}_{i}| \psi_l(\thv) \rangle \, \langle \psi_l(\thv) | \sigma^{(l)}_{j}| \psi_l(\thv) \rangle \;,
\end{equation}
where $|\psi_l(\thv)\rangle$ is the state of circuit at at $l^{\rm th}$ layer, i.e., 
\begin{equation}
    |\psi_l(\thv)\rangle = \prod_{i = 1}^{l} V_i(\thv) \, W_i |\psi_0\rangle \;,
\end{equation}
with an initial state $|\psi_0\rangle$. At the end, the block-diagonal QGT leads to a block-diagonal Fubini-Study metric tensor, 
\begin{equation}
    g_{ij}^{(l)}(\thv) = \Re{[G_{ij}^{(l)}(\thv)]} \;.
\end{equation}

\medskip

In our numerical setup, the parameterized quantum circuit is simplified to the form given in Eq.~\ref{unitary-for-numerics}, and the measurement operator is taken to be a global Pauli-$Z$ observable. In other words, in the loss function of the form $\LC(\thv) = \Tr[H \rho(\thv)]$, the Hamiltonian is given by
\begin{equation}
    H = \bigotimes_{i = 1}^{n} Z_i \;.
\end{equation}
The circuit parameters are sampled from a uniform random distribution. The loss function is estimated using the empirical average as the post-processing map. For the optimization strategy, the pseudo-inverse of the Fubini–Study metric is computed to be $g^{+}(\thv) = 4 \, \mathbb{1}$, making the update rule equivalent to a gradient descent method with an effective learning rate of $4\eta$.

\medskip

\paragraph*{\underline{Sample-based CVaR optimization.}}
For this method, the measurement operator is a linear combination of four global Pauli-$Z$ terms, and the Hamiltonian is given by
\begin{equation}
    H = c_1 \, \left[\bigotimes_{i = 1}^{n} Z_i\right]  + c_2 \, \left[\bigotimes_{i = 1}^{n-1} Z_i\right] \otimes \mathbb{1}  +
    c_3 \, \left[\bigotimes_{i = 1}^{n-2} Z_i\right] \otimes \mathbb{1} \otimes \mathbb{1} +
    c_4 \, \left[\bigotimes_{i = 1}^{n-3} Z_i \right] \otimes \mathbb{1} \otimes \mathbb{1} \otimes \mathbb{1} \;,
    \label{CVaR-Hamiltonian}
\end{equation}
where $\{c_i\}_{i=1}^4$ are some constant coefficients and $H$ has $16$ distinct eigenvalues $\{ \povv_i\}_{i =1}^{16}$. 

The initial parameters of the circuit are sampled from a uniform random distribution, similar to the previous method. The loss function is estimated using the CVaR processing map:
\begin{equation}
    \widehat{\ell}_i(\thv) = \Phi_{\rm CVaR}^{(\gamma)} \left( \SC_N(\thv) \right) = \frac{1}{\lceil{\gamma N}\rceil} \sum_{j=0}^{\lceil{\gamma N}\rceil} \oc_j \;,
\end{equation}
where each $\oc_j$ is one of the eigenvalues $\{ \povv_i \}_i$. For the optimization strategy, we use the gradient descent method with learning rate $\eta$.

\medskip

\paragraph*{\underline{Classical neural network assisted initialization.}}
In this method, the key difference lies in the initialization of the circuit parameters. Unlike the previous approaches, where the circuit parameters are sampled from random distributions, here the parameters are initialized using the output of a classical neural network. The output vector of this neural network is used to initialize the parameters of the quantum circuit. 

The measurement operator is a global Pauli-$Z$ measurement, and the loss function is estimated using the empirical average processing map. Finally, the optimization strategy involves applying gradient descent to train the parameters of the classical neural network, rather than those of the quantum circuit.

\medskip

\paragraph*{\underline{Re-scaled Parameter Shift (RPS) rule optimization.}}
In this method~\cite{Teo_2023}, it is claimed that using a scaled parameter-shift rule to compute gradients can improve the training of loss functions that suffer from barren plateaus. This corresponds to using the scaled parameter-shift rule with a scaling factor that minimizes the mean squared error in the estimation of both the loss function and its gradients.

In our numerical results, the measurement operator is a global Pauli-$Z$ operator, the initial circuit parameters are sampled from the uniform distribution, and the loss function is estimated using the empirical average processing map. For the optimization strategy, we employ the scaled parameter-shift rule to compute gradients within the gradient descent algorithm, using learning rate $\eta$. The scaling factor is given by 
\begin{equation}
    \lambda = \frac{d N}{2d^2 + N d - 2} \;,
\end{equation}
where $N$ is the number of measurement shots and $d = 2^n$ is the Hilbert space dimension.

\section{Examples for a probability distribution with exponentially many elements}
\label{app:counter-example}
\subsection{Counter example}
We provide a counterexample of a distribution with exponentially many elements that exhibits exponential concentration according to Definition~\ref{def:probability-concentration-main}, yet remains distinguishable from its fixed distribution.

Consider two probability distributions $\P_{\alv}$ and $\P_{\rm fixed}$ over a finite set $\mathcal{X}$ with size $M$ such that $M\in \Omega(\exp(n))$. In other words, we have two  distributions $\P_{\alv} = (p_1(\alv), p_2(\alv), \dots, p_M(\alv))$ and $\P_{\rm fixed} = (\mu_1, \mu_2, \dots, \mu_M)$ where the categories are named as $\{ 1, 2, \dots, M\}$, and the probability of each category $i$ is given with $p_i(\alv)$ and $\mu_i$ respectively. 
We then perform a binary hypothesis test. That is, suppose we are given a set of $N$ samples $\mathcal{S}$ all drawn from either $\P_{\alv}$ or $\P_{\rm fixed}$ with equal probabilities. We aim to distinguish between the following two hypotheses:
\begin{itemize}
    \item Null hypothesis $\mathcal{H}_0$: $\mathcal{S} \sim \P_{\alv}$
    \item Alternative hypothesis $\mathcal{H}_1$: $\mathcal{S} \sim \P_{\rm fixed}$
\end{itemize}
According to Proposition \ref{ManySampleHypothesisTestingLemma}, the probability of correctly identifying the true hypothesis is upper bounded by
\begin{equation}
        \text{Pr}[``{\text{right decision between } \mathcal{H}_0 \, \text{and} \, \mathcal{H}_1} "] \leq \frac{1}{2} + \frac{N \|\P_{\alv} - \P_{\rm fixed}\|_1}{4}.
\label{correctanswer_probability_loose_bound}
\end{equation}

Now, let us assume that the probability distribution $\P_{\alv}$ takes the following form:
\begin{equation}
    p_i(\alv) = \begin{cases}
        \mu_i + f(\alv) & ;\; i \in \text{odd} \\
        \mu_i - f(\alv) & ;\; i \in \text{even}
    \end{cases}
\end{equation}
where $\mu_i = \mathbb{E}_{\alv}[p_i(\alv)]$. 
The one-norm distance between $\P_{\alv}$ and $\P_{\rm fixed}$ can be written as 
\begin{equation}
    \| \P_{\alv} - \P_{\rm fixed} \|_1 = \sum_{i = 1}^{M} | p_i(\alv) - \mu_i | = \sum_{i = 1}^{M} |f(\alv)| = M |f(\alv)|.
\end{equation}
Considering the exponential concentration of all outcome probabilities, we can derive the condition that must be satisfied by the function $f(\alv)$.  In particular, each outcome probability $p_i(\alv)$ exhibits concentration if
\begin{equation}
    \Var_{\alv}[p_i(\alv)] \leq \beta, 
\end{equation}
where $\beta \in \mathcal{O}(1/b^n)$ for some $b>1$. The variance of these probabilities can be expressed as a function of $f(\alv)$, and the corresponding condition on $f(\alv)$ is given by
\begin{equation}
    \Var_{\alv}[p_i(\alv)] = \Var_{\alv}[\mu_i \pm f(\alv)] = \Var_{\alv}[f(\alv)] = \mathbb{E}_{\alv}[f^2(\alv)] \leq \beta.
\end{equation}

To illustrate the counterexample, consider the specific setting where  $\P_{\rm fixed}$ is a uniform probability distribution, that is for all outcomes $\mathbb{E}_{\alv}[p_i(\alv)] = \frac{1}{M}$, and $f(\alv) = \frac{1}{M}$. In this case, the problem reduces to the hypothesis test between the following two probability distributions: $\P_{\alv} = (p_1 (\alv), p_2(\alv), \dots, p_M(\alv))$ with
\begin{equation}
    p_i(\alv) = \begin{cases}
        \frac{2}{M}  & ;\; i \in \text{odd} \\
        0 & ;\; i  \in \text{even} \;\;,
    \end{cases}
\end{equation}
and $\P_{\rm fixed} = (\mu_1, \mu_2, \dots, \mu_M)$ which is the uniform distribution with
\begin{equation}
    \mu_i = \frac{1}{M}\;\;, \; \;\forall m \;. 
\end{equation}
In this case, we have $\| \P_{\alv} - \P_{\rm fixed}\|_1 = 1$. Using Eq. (\ref{correctanswer_probability_loose_bound}), the probability of correctly distinguishing between the two hypotheses is upper-bounded by
\begin{align}
    \text{Pr}[`` {\text{right decision between } \mathcal{H}_0 \, \text{and} \, \mathcal{H}_1} "] \leq 1/2 + N/4 \;.
\end{align}
This is a loose bound and does not provide any meaningful constraint on the number of samples required to succeed in the hypothesis test.

However, we can provide a simple test for our hypotheses.  Consider the test $\mathcal{T}$, which chooses the null hypothesis ($\mathcal{T} = 0$) if all samples $s_i$ are odd, and chooses the alternative hypothesis $\mathcal{H}_1$ otherwise (i.e., if at least one even number appears among the samples). The probability of error in this test can be written as
\begin{align}
    p_{\text{error}} & = \text{Pr}[\mathcal{S} \sim \P_{\alv}^{\otimes N}] \, \text{Pr}[\mathcal{T} = 1 |\mathcal{S} \sim \P_{\alv}^{\otimes N}] + \text{Pr}[\mathcal{S} \sim \P_{\rm fixed}^{\otimes N}] \, \text{Pr}[\mathcal{T} = 0 |\mathcal{S} \sim \P_{\rm fixed}^{\otimes N}] \\
    & = \frac{1}{2} \, \text{Pr}[\mathcal{T} = 1 |\mathcal{S} \sim \P_{\alv}^{\otimes N}] + \frac{1}{2} \, \text{Pr}[\mathcal{T} = 0 |\mathcal{S} \sim \P_{\rm fixed}^{\otimes N}] \\
    & =  \frac{1}{2^{N+1}} \;\;,
\end{align}
where $\text{Pr}[\mathcal{T} = 0|\mathcal{S} \sim \P_{\rm fixed}^{\otimes N}]$ corresponds to the probability that all sampled values are odd (i.e., no even numbers appear) when the samples are drawn from the distribution $\P_{\rm fixed}$. On the other hand, $\text{Pr}[\mathcal{T} = 1 | \mathcal{S} \sim \P_{\alv}^{\otimes N}] = 0$ in our setting, since $\P_{\alv}$ assigns zero probability to all even outcomes. Here, one can observe that the error probability decays exponentially with respect to $N$, allowing the two distributions to be distinguished with high confidence using a realistic number of samples.

\subsection{Indistinguishable example}
We also present another example where the distribution remains indistinguishable, despite having exponentially many elements. To see this, consider the same setup, but now choose $f(\alv) = \frac{1}{M^2}$. In this case, the hypothesis testing problem is between the two probability distributions $\P_{\alv} = (p_i(\alv), p_2(\alv), \dots, p_M(\alv))$ with  
\begin{equation}
    p_i(\alv) = \begin{cases}
        \frac{1}{M} + \frac{1}{M^2} & ;\; i \in \text{odd} \\
        \frac{1}{M} - \frac{1}{M^2} & ;\; i \in \text{even}
    \end{cases}
\end{equation}
and $\P_{\rm fixed} = (\mu_1, \mu_2, \dots, \mu_M)$ which is the same uniform distribution with $\mu_i = 1/M$ for all. In this example the one-norm distance is $\|\P_{\alv} - \P_{\rm fixed}\|_1 = 1/M$. Following Eq. (\ref{correctanswer_probability_loose_bound}), the probability of correctly identifying the true hypothesis is upper-bounded as follows:
\begin{equation}
    \text{Pr}[``{\text{right decision between } \mathcal{H}_0 \, \text{and} \, \mathcal{H}_1} "] \leq \frac{1}{2} + \frac{N}{4M}.
\end{equation}
Here, the bound remains exponentially close to $\frac{1}{2}$ as $\frac{N}{4M} \in \mathcal{O}{\left(\frac{1}{2^n}\right)}$ with polynomial number of measurements. 

\end{document}